%
\documentclass[11pt]{article}

\usepackage[margin=1in]{geometry}
\setlength{\parskip}{3pt}

\usepackage{graphicx}
\usepackage[colorlinks=true,linkcolor=blue,citecolor=blue,urlcolor=black]{hyperref} 
\usepackage{amsmath, amsthm, amssymb}
\usepackage{subfigure}
\usepackage{url}
\usepackage{xcolor}
\usepackage{tikz}
\usepackage{multicol}


\newcommand{\R}{\mathbb{R}}

\newcommand{\C}{\mathbb{C}}
\newcommand{\Z}{\mathbb{Z}}

\newcommand{\E}{\mathbb{E}}
\newcommand{\F}{\mathbb{F}}

\newcommand{\hsip}[2]{\langle #1,#2 \rangle}
\newcommand{\ip}[2]{\langle #1|#2 \rangle}
\newcommand{\proj}[1]{|#1\rangle \langle #1|}
\newcommand{\bracket}[3]{\langle #1|#2|#3 \rangle}

\newcommand{\mm}[1]{\begin{pmatrix} #1 \end{pmatrix}}

\DeclareMathOperator{\poly}{poly}
\DeclareMathOperator{\polylog}{polylog}

\DeclareMathOperator{\supp}{supp}
\DeclareMathOperator{\spann}{span}

\DeclareMathOperator{\QMA}{\mathsf{QMA}}

\newcommand{\be}{\begin{equation}}
\newcommand{\ee}{\end{equation}}
\newcommand{\bea}{\begin{eqnarray}}
\newcommand{\eea}{\end{eqnarray}}
\newcommand{\bes}{\begin{equation*}}
\newcommand{\ees}{\end{equation*}}
\newcommand{\beas}{\begin{eqnarray*}}
\newcommand{\eeas}{\end{eqnarray*}}


\newtheorem{thm}{Theorem}

\newtheorem{lem}[thm]{Lemma}
\newtheorem*{lem*}{Lemma}
\newtheorem{prop}[thm]{Proposition}

\newtheorem{question}{Question} 

\newtheoremstyle{definition}
   {}{}{}{0pt}{\bfseries}{.}{.5cm}
   {{\thmname{#1 }}{\thmnumber{#2}}{\thmnote{ (#3)}}}

\theoremstyle{definition}
\newtheorem{dfn}{Definition}

\def\01{\{0,1\}}
\newcommand{\eps}{\epsilon}
\newcommand{\Tr}{\mbox{\rm tr}}
\newcommand{\ADV}{\mbox{\rm ADV}}
\newcommand{\norm}[1]{\parallel #1 \parallel}
\newcommand{\ketbra}[2]{|#1\rangle\langle #2|}
\newcommand{\ceil}[1]{\lceil#1\rceil}
\newcommand{\Prop}{\mathcal{P}}
\newtheorem{claim}[thm]{Claim}

%
%
%


\usepackage{xy}
\xyoption{matrix}
\xyoption{frame}
\xyoption{arrow}
\xyoption{arc}

\usepackage{ifpdf}
\ifpdf
\else
\PackageWarningNoLine{Qcircuit}{Qcircuit is loading in Postscript mode.  The Xy-pic options ps and dvips will be loaded.  If you wish to use other Postscript drivers for Xy-pic, you must modify the code in Qcircuit.tex}
\xyoption{ps}
\xyoption{dvips}
\fi

\entrymodifiers={!C\entrybox}

\newcommand{\bra}[1]{{\left\langle{#1}\right\vert}}
\newcommand{\ket}[1]{{\left\vert{#1}\right\rangle}}
\newcommand{\qw}[1][-1]{\ar @{-} [0,#1]}
\newcommand{\qwx}[1][-1]{\ar @{-} [#1,0]}


\newcommand{\gate}[1]{*+<.6em>{#1} \POS ="i","i"+UR;"i"+UL **\dir{-};"i"+DL **\dir{-};"i"+DR **\dir{-};"i"+UR **\dir{-},"i" \qw}
\newcommand{\meter}{*=<1.8em,1.4em>{\xy ="j","j"-<.778em,.322em>;{"j"+<.778em,-.322em> \ellipse ur,_{}},"j"-<0em,.4em>;p+<.5em,.9em> **\dir{-},"j"+<2.2em,2.2em>*{},"j"-<2.2em,2.2em>*{} \endxy} \POS ="i","i"+UR;"i"+UL **\dir{-};"i"+DL **\dir{-};"i"+DR **\dir{-};"i"+UR **\dir{-},"i" \qw}





\newcommand{\control}{*!<0em,.025em>-=-<.2em>{\bullet}}

\newcommand{\ctrl}[1]{\control \qwx[#1] \qw}



\newcommand{\multigate}[2]{*+<1em,.9em>{\hphantom{#2}} \POS [0,0]="i",[0,0].[#1,0]="e",!C *{#2},"e"+UR;"e"+UL **\dir{-};"e"+DL **\dir{-};"e"+DR **\dir{-};"e"+UR **\dir{-},"i" \qw}
\newcommand{\ghost}[1]{*+<1em,.9em>{\hphantom{#1}} \qw}



\newcommand{\lstick}[1]{*!R!<.5em,0em>=<0em>{#1}}

\newcommand{\dstick}[1]{*!U!<0em,.5em>=<0em>{#1}}
\newcommand{\Qcircuit}{\xymatrix @*=<0em>}




\begin{document}

\title{A Survey of Quantum Property Testing}
\author{Ashley Montanaro\thanks{Department of Computer Science, University of Bristol, UK; {\tt ashley@cs.bris.ac.uk}. Supported by an EPSRC Early Career Fellowship (EP/L021005/1).}
\and
Ronald de Wolf\thanks{CWI and University of Amsterdam, the Netherlands; {\tt rdewolf@cwi.nl}. Supported by a Vidi grant from the Netherlands Organization for Scientific Research (NWO) which ended in 2013, by ERC Consolidator Grant QPROGRESS, and by the European Commission IST STREP project Quantum Algorithms (QALGO) 600700.}}

\maketitle

\vspace{-22pt}

\begin{abstract}
The area of property testing tries to design algorithms that can efficiently handle very large amounts of data:
given a large object that either has a certain property or is somehow ``far'' from having that property, 
a tester should efficiently distinguish between these two cases.
In this survey we describe recent results obtained for \emph{quantum} property testing. 
This area naturally falls into three parts.
First, we may consider quantum testers for properties of classical objects.
We survey the main examples known where quantum testers can be much (sometimes exponentially) more efficient than classical testers.
Second, we may consider classical testers of quantum objects.
These arise for instance when one is trying to determine if quantum states or operations do what they are supposed to do, based only on classical input-output behavior.
Finally, we may also consider quantum testers for properties of quantum objects, such as states or operations.
We survey known bounds on testing various natural properties, such as whether two states are equal, whether a state is separable, whether two operations commute, etc. We also highlight connections to other areas of quantum information theory and mention a number of open questions.
\end{abstract}

\tableofcontents

\section{Introduction}

In the last two decades, the amounts of data that need to be handled have exploded: think of the massive amounts of data on the web, or the data warehouses of customer information collected by big companies.  In many cases algorithms need to decide whether this data has certain properties or not, without having sufficient time to trawl through all or even most of the data. 
Somehow we would like to detect the presence or absence of some global property by only making a few ``local'' checks.
The area of \emph{property testing} aims to design algorithms that can efficiently test whether some large object has a certain property,
under the assumption that the object either has the property or is somehow ``far'' from having that property.
An assumption like the latter is necessary for efficient property testing: deciding the property for objects that are ``just on the boundary'' typically requires looking at all or most of the object, which is exactly what we are trying to avoid here.
In general, different property testing settings can be captured by the following ``meta-definition'':

\begin{quote}
{\bf Property testing}\\
Let $\cal X$ be a set of objects and $d:{\cal X}\times{\cal X}\rightarrow[0,1]$ be a distance measure on $\cal X$.
A subset $\Prop\subseteq{\cal X}$ is called a \emph{property}. An object $x\in{\cal X}$ is \emph{$\eps$-far} from $\Prop$ if $d(x,y)\geq\eps$ for all $y\in \Prop$;
$x$ is \emph{$\eps$-close} to~$\Prop$ if there is a $y\in \Prop$ such that $d(x,y)\leq\eps$. 

An \emph{$\eps$-property tester} (sometimes abbreviated to $\eps$-tester) for~$\Prop$ is an algorithm that receives as input either an $x\in \Prop$ or an $x$ that is $\eps$-far from $\Prop$. In the former case, the algorithm accepts with probability at least $2/3$; in the latter case, the algorithm rejects with probability at least~$2/3$.
\end{quote}
Observe that, if an input is accepted by the property tester with high probability, then it must be $\eps$-close to $\Prop$. This is true for all inputs, including inputs neither in $\Prop$ nor $\eps$-far from $\Prop$. The value of~$2/3$ for the success probability is arbitrary and can equivalently be replaced with any other constant in $(1/2,1)$ since we can efficiently reduce the error probability by repeating the test a few times and taking the majority outcome.
We say that the tester has \emph{perfect completeness} if it accepts every state in $\Prop$ with certainty.
The distance parameter $\eps$ is usually taken to be some positive constant.
We will often just speak of a ``tester,'' leaving the value of $\eps$ implicit. 

Clearly, the above meta-definition leaves open many choices: what type of objects to consider, what property to test, what distance measure to use, what range of~$\eps$ to allow (the larger~$\eps$, the easier it should be to test~$\Prop$), and how to measure the complexity of the testing algorithm.
A lot of work in classical computer science has gone into the study of efficient testers for various properties, as well as proofs that certain properties are not efficiently testable, see for instance~\cite{blr:selftest,ggr,fis_sur,ron_sur,goldreich:prop}.
Typically,  ${\cal X}$ will be the set of all strings of length~$N$ over some finite alphabet, where we think of $N$ as being very large. The distance will usually be normalized Hamming distance $d(x,y)=|\{i: x_i\neq y_i\}|/N$, though also more sophisticated metrics such as ``edit distance'' have been used.  
The complexity of the tester is typically measured by the number of \emph{queries} it makes to entries of its input~$x$, 
and a tester is deemed efficient if its number of queries is much less than the length of the input~$N$, say $\polylog(N)$ 
or even some constant independent of~$N$.  This captures the goal that a tester is able to efficiently handle huge amounts of data.
The distance bound~$\eps$ is often taken to be a small fixed constant, but in some cases it is also interesting to quantify the dependence of the tester's complexity on $\eps$ as well as on~$N$. 
For example, a tester whose complexity is $\Theta(2^{2^{1/\eps}})$ might be considered to be of little use in practice.

As an initial (very simple) example, suppose our property $\Prop=\{0^N\}$ consists of only one object, the all-zero string, and we use normalized Hamming distance.  Our input~$x$ will either be in $\Prop$ (i.e., $x=0^N$) or $\eps$-far from $\Prop$ (i.e., $x$ has at least $\eps N$ 1-bits). An obvious tester would choose $k$ indices in the string at random, query them, and reject if and only if there is a 1 in at least one of those positions. This tester accepts $x=0^N$ with certainty (so it has perfect completeness), and fails to reject an input that is $\eps$-far from $\Prop$ with probability $(1-\eps)^k$.  Choosing $k=\Theta(1/\eps)$ gives a tester with small constant error probability, and this number of queries can be shown to be optimal.\footnote{
Note that the complexity of a property can differ much from that of its complement.  For example, $\overline{\Prop}=\01^N\backslash\{0^N\}$ is trivial to test if $\eps>1/N$: no string is $\eps$-far from $\overline{\Prop}$, so we might as well accept every input without querying anything.}

In this survey paper we will be concerned with \emph{quantum} property testing. There are several natural ways in which one can generalize property testing to the quantum world:
\begin{itemize}
\item Quantum testing of properties of classical objects. In this setting we would like to achieve provable quantum speed-ups over any possible classical algorithm, or to prove limitations on property testers, even if they are allowed to be quantum. By their very nature, efficient quantum query algorithms rely on extracting global information about the input, by querying in superposition; property testing is thus a plausible place to find significant quantum speed-ups.
A very simple example of such a speed-up is for the above-mentioned property $\Prop=\{0^N\}$: a tester based on Grover's search algorithm~\cite{grover:search} would use $O(1/\sqrt{\eps})$ queries, in contrast to the $\Theta(1/\eps)$ queries that classical testers need. 

\item Classical testing of properties of quantum objects. Here we imagine we are given a black-box device which is claimed to implement some quantum process, and we would like to test whether it does what is claimed. However, our access to the device is classical: all we can do is feed classical inputs to the device, and receive classical measurement outcomes.

\item Quantum testing of properties of quantum objects. In this most general scenario, we are given access to a quantum state or operation as a black box, and apply a quantum procedure to it to test whether it has some property.
\end{itemize}
We will discuss each of these settings in turn. We usually concentrate on describing the intuition behind prior work, without giving detailed proofs. Some of the results we present have not appeared in the literature before; these are largely based on combining, generalizing or improving existing works. Various open questions are pointed out throughout the survey.

A vast amount of work in quantum computing can be interpreted through the lens of property testing. Indeed, taken to extremes, any efficient quantum algorithm for a decision problem could be seen as an efficient property tester, 
and any measurement scheme that tries to learn properties of a quantum state or channel could be seen as a quantum property tester. We therefore concentrate on covering those algorithms which can clearly be understood as distinguishing objects with some property from those ``far'' from that property, and we make no attempt to be completely comprehensive. Also, our focus is on the computer-science aspects of the field, rather than work which primarily takes a physics perspective, such as the study of interaction-free measurement and the flourishing field of quantum metrology. Finally, we do not attempt to cover the (now very extensive) field of classical testers for classical properties. For much more on these, see the references given earlier. 

\subsection{Quantum testing of classical properties}

In the first part of this paper we will consider quantum testing of \emph{classical} properties.  Again, ${\cal X}$ will typically be the set of all strings of length~$N$ over some finite alphabet, the distance will be normalized Hamming distance, and the complexity of both quantum and classical property testers will be measured by the number of queries to the input~$x$.

One of our goals is to survey examples of quantum speed-up, i.e., describe properties where the complexity of quantum testers is substantially less than the complexity of classical testers. Most known quantum speed-ups for testing classical properties were derived from earlier improvements in query complexity: they rely on quantum algorithms such as those of (in chronological order) Bernstein and Vazirani~\cite{bernstein&vazirani:qcomplexity}, Simon~\cite{simon:power}, Shor~\cite{shor:factoring}, Grover~\cite{grover:search}, and Ambainis~\cite{ambainis:edj}. In Section~\ref{sec:claspropupperbounds} we describe these quantum property testers and the improvements they achieve over classical testers. Some of the properties considered are very natural, and some of the improvements achieved are quite significant.

In Section~\ref{sec:clasproplowerbounds} we describe some \emph{lower}-bound methods for quantum property testing, i.e., methods
to show query complexity lower bounds for quantum algorithms that want to test specific properties.  The main lower bounds in this area have
been obtained using the \emph{polynomial} method.  We also describe the \emph{adversary} method, which---when applied properly---proves
optimal lower bounds.  And we ask whether the recent classical property testing lower bounds of Blais et al.~\cite{bbmj}, based
on communication complexity, can be applied to quantum testers as well.

\subsection{Classical testing of quantum properties}

In the second part we will consider \emph{classical} testing of quantum properties.
At first sight, this scenario might make no sense---how could a classical algorithm,
without the ability to perform any quantum operations, be able to test quantum objects?
But suppose someone gives us a quantum state and claims it is an EPR-pair. Or someone builds a quantum device to implement a Hadamard gate, or to measure in a specific basis. How can we test that these quantum objects conform to their specifications?
These are questions often faced for instance by experimentalists who try to check that their quantum operations work as intended, or by parties who run quantum cryptographic hardware provided by an untrusted supplier.
We do not want to assume here that we already have the ability to implement some other quantum operations reliably,
because that would lead to an infinite regress: how did we establish that those other quantum objects are reliable?
Accordingly, we somehow would like to test the given quantum object while only trusting our \emph{classical} devices. Of course, in order to test a quantum object there has to be at least \emph{some}
interaction with the quantum object-to-be-tested.  In the testers we consider, the only quantum
involvement is with that object itself in a black-box fashion (whence the name ``self-testing''):
we can only observe its classical input-output behavior, but not its inner quantum workings.

This notion of quantum self-testing was introduced by Mayers and Yao~\cite{mayers&yao:impapp,mayers&yao:selftesting}, who described a procedure to test photon sources that are supposed to produce EPR-pairs. Since then quite a lot of work has been done on self-testing. We focus on two areas for self-testing: in Section~\ref{sec:stgates} we describe self-testing of universal sets of quantum \emph{gates} gates and in Section~\ref{sec:stprotocols} we describe the self-testing of \emph{protocols} for two or more parties, focusing on protocols for the so-called \emph{CHSH game}.  Self-testing of protocols has found many applications in the fast-growing area of \emph{device-independent} quantum cryptography, where parties want to run cryptographic protocols for things like key distribution or randomness generation, using quantum states or apparatuses (photon sources, measuring devices, etc.) that they do not fully trust.  Self-testing the states or apparatuses makes this possible in some cases. Device-independent cryptography is quite a large area and we will not cover it in this survey; see, e.g., \cite{bhk:crypto,colbeck:phd,ABGMPS:qkd,vazirani&vidick:dice,vazirani&vidick:qkd} for more about this area.

\subsection{Quantum testing of quantum properties}

In the final part of the paper we will consider cases where $\cal X$ is a set of quantum objects and our tester is also quantum, which is a setting of both theoretical and experimental interest.

As experimentalists control ever-larger quantum systems in the lab, the question of how to characterize and certify these systems becomes ever more pressing. Small quantum systems can be characterized via a procedure known as \emph{quantum state tomography}~\cite{paris04,nielsen&chuang:qc}. However, completely determining the state of a system of $n$ qubits necessarily requires exponentially many measurements in $n$. This is already a daunting task for fairly small experiments; for example, H\"affner et al.~\cite{haffner05} report tomography of a state of 8 ions requiring 656,100 experiments and a total measurement time of 10 hours. One way of reducing this complexity is to start with the assumption that the state is of a certain form (such as a low-rank mixed state~\cite{gross09b,flammia12} or a matrix product state~\cite{cramer10}), in which case the number of parameters required to be estimated can be dramatically reduced. The viewpoint of property testing suggests another approach: the direct determination of whether or not something produced in the lab has a particular property of interest, under the assumption that it either has the property or is far away from it.

One can view classical property testing algorithms in two ways: either as testing properties of data (such as graph isomorphism), or properties of functions (such as linearity). If one wishes to generalize property testing to the quantum realm, one is thus naturally led to two different generalizations: testing properties of quantum \emph{states}, and properties of quantum \emph{operations}. One can divide each of these further, according to whether the state is pure or mixed, and whether the operation is reversible or irreversible; this classification is illustrated in Table \ref{tab:taxonomy}. We discuss each of these possibilities in Sections~\ref{sec:states} and~\ref{sec:dynamics}. Within some of these categories there are natural generalizations of properties studied classically. For example, testing properties of mixed states is analogous to the classical idea of testing properties of probability distributions. Some quantum properties, however, have no simple classical analog (such as properties relating to entanglement).

\begin{table}[htb]
\begin{center}
\begin{tabular}{|c|c|c|}
\hline & {\bf Coherent} & {\bf Incoherent}\\
\hline {\bf Static} & Pure state (\textsection\ref{sec:pure}) & Mixed state (\textsection\ref{sec:mixed})\\
\hline {\bf Dynamic} & Unitary operator (\textsection\ref{sec:unitary}) & Quantum channel (\textsection\ref{sec:channel}) \\
\hline
\end{tabular}
\end{center}
\caption{The taxonomy of quantum properties}
\label{tab:taxonomy}
\end{table}

Classically, there are many connections known between property testing and computational complexity. In Section~\ref{sec:compcomp} we explore the link between quantum property testing and quantum computational complexity, including the use of property testers to prove results in computational complexity, and the use of computational complexity to prove limitations on property testers.

\section{Quantum testing of classical properties}\label{sec:qtestclprop}

\subsection{Preliminaries}

We will use $[m]$ to denote $\{1,\ldots,m\}$ and $\mathbb{Z}_m$ to denote $\{0,\ldots,m-1\}$ modulo~$m$.
When considering (quantum or classical) testers for classical objects, those classical objects are usually strings, ${\cal X}=[m]^N$, and the complexity of testers is measured by the number of \emph{queries} they make to their input~$x$.
In some cases we let $x$ correspond to a function $f:[N]\rightarrow[m]$, where $f(i)=x_i$, and~$i$ may be viewed as either an integer $i\in[N]$, or as its binary representation $i\in\01^{\ceil{\log N}}$. 

Here we briefly define the quantum query model, referring to~\cite{buhrman&wolf:dectreesurvey} for more details. We assume some basic familiarity with classical and quantum computing~\cite{nielsen&chuang:qc}.

Informally, a query allows us to ``read'' $x_i$ for any~$i$ of our choice.
Mathematically, to make this correspond to a quantum operation, it is modeled by the unitary map
$$
O_x:\ket{i}\ket{b}\mapsto\ket{i}\ket{b+x_i}.
$$
Here the first register has dimension~$N$ and the second has dimension~$m$.
The answer $x_i$ is added into this second register mod~$m$.
Part of the power of quantum query algorithms comes from their ability to apply a query to a \emph{superposition} of different $i$s, thus globally ``accessing'' the input~$x$ while using only one query. 

If $m=2$, then putting the state $\ket{-}=\frac{1}{\sqrt{2}}(\ket{0}-\ket{1})$
in the second register has the following effect:
$$
O_x:\ket{i}\ket{-}\mapsto\ket{i}\frac{1}{\sqrt{2}}(\ket{0+x_i}-\ket{1+x_i})=(-1)^{x_i}\ket{i}\ket{-}.
$$
We will sometimes call this a ``phase-query,'' because the answer~$x_i$ to the query is inserted in the state as a phase ($+1$ if $x_i=0$, and~$-1$ if $x_i=1$).

A $T$-query quantum algorithm is described by an initial state, say $\ket{0^k}$, and $T+1$ fixed $k$-qubit unitaries $U_0,\ldots,U_T$. The final state when the algorithm runs on input~$x$ is obtained by interleaving these unitaries with queries to~$x$ ($O_x$ may be tensored with the identity operation on the remaining workspace qubits),
$$
\ket{\psi_x}=U_TO_xU_{T-1}O_x\cdots O_xU_1O_xU_0\ket{0^k}.
$$
This final state depends on~$x$ via the $T$ queries.
A measurement of the final state will determine the classical output of the algorithm.

\subsection{Upper bounds}\label{sec:claspropupperbounds}

In this section we survey the main speed-ups that have been obtained using \emph{quantum} testers for \emph{classical} properties. Typically these apply pre-exisiting quantum algorithms to problems in property testing. Our distance measure will be normalized Hamming distance $d(x,y)=|\{i: x_i\neq y_i\}|/N$, unless explicitly stated otherwise.

\subsubsection{Using amplitude amplification}\label{sssec:amplampl}

A simple but very general way that quantum algorithms can speed up many classical property testers is via the powerful primitive of \emph{amplitude amplification}, which was introduced by Brassard et al.~\cite{bhmt:countingj} and can be seen as a generalization of Grover's quantum search algorithm~\cite{grover:search}. We assume we are given query access to some function $f$ (treated as a black box), and have a quantum algorithm which, with probability $p$, outputs $w$ such that $f(w)=1$. Then the result of Brassard et al.\ is that, for any $p>0$, we can find a $w$ such that $f(w)=1$ with $O(1/\sqrt{p})$ queries to $f$, with success probability at least $2/3$.

Amplitude amplification can be immediately applied to speed up classical property testers which have perfect completeness. Here we think of $w$ as the internal randomness of the algorithm, and $f(w)$ as the test which is applied to the unknown object, based on the random bits $w$. We let $f(w)=0$ if the test accepts, and $f(w)=1$ if the test rejects. Assuming that the test has perfect completeness, finding $w$ such that $f(w)=1$ is equivalent to determining whether we should reject. Given that the original test used $q$ queries to find such a $w$ with probability $p>0$, we therefore obtain a test which uses $O(q/\sqrt{p})$ queries, still has perfect completeness, and rejects with constant probability.

For example, consider the well-studied classical property of {\bf Linearity}~\cite{blr:selftest}. A function $f:\{0,1\}^n \rightarrow \{0,1\}$ is said to be \emph{linear} if $f(x \oplus y) = f(x) \oplus f(y)$, and \emph{affine} if $f(x \oplus y) = f(x) \oplus f(y) \oplus 1$, where $\oplus$ is addition modulo 2. (Linearity is equivalent to the condition $f(x) = \bigoplus_{i \in S} x_i$ for some $S \subseteq [n]$.) A simple and natural test for linearity is to pick $x,y \in \{0,1\}^n$ uniformly at random and accept if and only if $f(x) \oplus f(y) = f(x \oplus y)$. This test uses only 3 queries, has perfect completeness, and can be shown~\cite{bellare96} to reject functions~$f$ which are $\eps$-far from linear with probability at least~$\eps$. Applying amplitude amplification to this tester, we immediately get a quantum $\eps$-tester for {\bf Linearity} which uses $O(1/\sqrt{\eps})$ queries. Another simple example is {\bf Symmetry}, where $f:\{0,1\}^n \rightarrow \{0,1\}$ is said to be symmetric if $f(x)$ depends only on $|\{i:x_i=1\}|$. A classical tester for this property has been given by Majewski and Pippenger~\cite{majewski09}. The tester uses 2 queries, has perfect completeness and rejects functions which are $\eps$-far from symmetric with probability at least $\eps$. Therefore, we again obtain a quantum $\eps$-tester which uses $O(1/\sqrt{\eps})$ queries.

Hillery and Andersson~\cite{hillery11} gave different quantum testers for these two properties (though also based on amplitude amplification), each of which uses $O(\eps^{-2/3})$ queries, which is worse. More recently, Chakraborty and Maitra~\cite{chakraborty13} described a quantum algorithm for the closely related problem of testing whether a Boolean function is affine. Their algorithm also uses $O(1/\sqrt{\eps})$ queries and, although presented slightly differently, is also based on amplitude amplification.

\subsubsection{Using the Bernstein-Vazirani algorithm}\label{sec:bvtester}

One of the first quantum algorithms was the Bernstein-Vazirani algorithm~\cite{bernstein&vazirani:qcomplexity}. 
It efficiently decodes a given \emph{Hadamard codeword}. Let $N=2^n$, and identify $[N]$ with $\01^n$ so we can use the $n$-bit strings to index the numbers $1,\ldots,N$.\footnote{In many presentations of the Bernstein-Vazirani, Simon, and Grover algorithms, the input is taken to be a function $f:\01^n\rightarrow\01$ rather than a string $x\in\01^N$. With $N=2^n$, these two views are of course just notational variants of one another.}
Let $h:\01^n\rightarrow\01^N$ be the Hadamard encoding, defined by
$h(s)_i=s\cdot i$ mod~2; this is nothing more than identifying $s$ with the linear function $h(s)(i)=s \cdot i$ mod~2 and writing out its truth table.  Note that two distinct Hadamard codewords $h(s)$ and $h(s')$ are at normalized Hamming distance exactly~$1/2$.
Given input $h(s)$, the Bernstein-Vazirani algorithm recovers~$s$ with probability~1 using only one quantum query. 
In contrast, any classical algorithm needs $\Omega(\log N)$ queries for this.
The quantum algorithm works as follows:
\begin{enumerate}
\item Start with $\ket{0^n}$ and apply Hadamard gates to each qubit to form the uniform superposition $\frac{1}{\sqrt{N}}\sum_{i\in\01^n}\ket{i}$
\item Apply a phase-query to obtain $\frac{1}{\sqrt{N}}\sum_{i\in\01^n}(-1)^{x_i}\ket{i}$
\item Apply Hadamard transforms to each qubit and measure.
\end{enumerate}
If $x_i=s\cdot i$ for all~$i\in\01^n$, then it is easy to see that the measurement yields~$s$ with probability~1.

Buhrman et al.~\cite{bfnr:qpropj} showed this algorithm can be used to obtain an unbounded quantum speed-up for \emph{testing} most subsets of Hadamard codewords.

\begin{quote}
{\bf Bernstein-Vazirani property} for $A\subseteq\01^n$:\\ 
$\Prop^A_{BV}=\{x\in\01^N: \exists s\in A\mbox{ such that }x=h(s)\}$
\end{quote}

\begin{thm}[Buhrman et al.~\cite{bfnr:qpropj}]
For every $A\subseteq\01^n$ there is an $O(1/\sqrt{\eps})$-query quantum $\eps$-tester for $\Prop^A_{BV}$; in contrast, for a $1-o(1)$ fraction of all sets $A$, every classical 1/2-tester for $\Prop^A_{BV}$ needs $\Omega(\log N)$ queries.
\end{thm}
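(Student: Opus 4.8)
The statement has two independent halves, and I would treat them separately; the quantum upper bound is short while the classical lower bound carries the real content.

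\emph{Quantum upper bound.} The plan is to build a constant-query test with perfect completeness and then boost it by amplitude amplification exactly as in Section~\ref{sssec:amplampl}. The base test runs the Bernstein--Vazirani algorithm once (one query) to extract a candidate string $s$; if $s\notin A$ it rejects, and otherwise it queries one uniformly random position $i$ and rejects iff $x_i\neq s\cdot i$. On a genuine codeword $x=h(s^\star)$ with $s^\star\in A$, Bernstein--Vazirani returns $s^\star$ with certainty and the sampled bit always agrees, so the test accepts with probability $1$ (perfect completeness). On an input $x$ that is $\eps$-far from $\Prop^A_{BV}$, write $p_s$ for the probability that Bernstein--Vazirani outputs $s$; one computes $p_s=\hat f(s)^2$ with $\hat f(s)=1-2d(x,h(s))$, so the acceptance probability is $\sum_{s\in A}p_s(1-d(x,h(s)))\le(1-\eps)\sum_{s\in A}p_s\le 1-\eps$, since $d(x,h(s))\ge\eps$ for every $s\in A$. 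Thus the base test uses two queries, has perfect completeness, and rejects $\eps$-far inputs with probability at least $\eps$; amplitude amplification then yields a tester with $O(1/\sqrt\eps)$ queries and constant rejection probability.

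\emph{Classical lower bound, setup.} The key observation is that distinct codewords are at normalized Hamming distance exactly $1/2$, so any codeword $h(s)$ with $s\notin A$ is $1/2$-far from $\Prop^A_{BV}$. Hence a classical $1/2$-tester, restricted to codeword inputs, must decide the membership predicate ``$s\in A$'' with success probability $2/3$. A single query to position $i$ of $h(s)$ reveals the bit $s\cdot i$, i.e.\ one linear equation about the unknown $s$, so after $T$ (possibly adaptive) queries the answers pin $s$ down only to a coset of a subspace of dimension at least $n-T$, of size at least $2^{n-T}$.

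\emph{Classical lower bound, core and main obstacle.} I would invoke Yao's principle with the hard distribution $D$ that picks $s\in\01^n$ uniformly and presents $h(s)$, the correct answer being accept iff $s\in A$. A deterministic depth-$T$ decision tree partitions $\01^n$ into at most $2^T$ cosets $C_1,\dots,C_k$, and with optimally chosen accept/reject labels its advantage over random guessing on $D$ is $2^{-n}\sum_j\bigl||C_j\cap A|-|C_j|/2\bigr|$. For a uniformly random $A$ (each $s$ included independently with probability $1/2$) the $j$th term has expectation $O(\sqrt{|C_j|})$, so by Cauchy--Schwarz the advantage has expectation $O(2^{(T-n)/2})=o(1)$; viewed as a function of the $2^n$ independent membership bits it has bounded differences $2^{-n}$, so McDiarmid's inequality makes the probability that a fixed tree attains advantage $\ge 1/6$ as small as $e^{-\Omega(2^n)}$. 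The main obstacle is the union bound over all deterministic trees: there are at most $(2^n)^{2^T}2^{2^T}=2^{O(n2^T)}$ of them, so the estimate survives precisely while $n2^T=o(2^n)$, which holds for $T$ up to nearly $n$ and in particular for $T=\Omega(n)$. For such $T$, with probability $1-o(1)$ over $A$ no deterministic $T$-query algorithm beats error $1/3$ on $D$, and Yao's principle promotes this to the stated $\Omega(\log N)=\Omega(n)$ lower bound against all randomized $1/2$-testers.
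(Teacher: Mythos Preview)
Your argument is correct and follows the same overall strategy as the paper: prove the quantum upper bound via Bernstein--Vazirani plus a single random check, boosted by amplitude amplification; prove the classical lower bound by taking $A$ uniformly random and combining a concentration bound for each fixed decision tree with a union bound over all trees.

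The quantum half matches the paper almost verbatim (you spell out the summation over the Bernstein--Vazirani output distribution more explicitly, but the content is the same). For the classical half, the paper's execution is noticeably more direct than yours. Rather than introducing the coset partition, bounding the expected advantage, and invoking McDiarmid, the paper simply observes: for a \emph{fixed} deterministic tree (with its output labels already fixed) and each codeword $x=h(s)$, the event ``the tree decides $x$ correctly'' is exactly the event ``$s\in A$ matches the tree's fixed label,'' which is Bernoulli$(1/2)$ and independent across the $N$ codewords. A straight Chernoff bound then gives probability $2^{-\Omega(N)}$ that the tree is correct on at least $2/3$ of codewords, and the same union bound over $2^{2^T}N^{2^T-1}$ trees finishes. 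Your coset observation is correct but not needed, and McDiarmid is doing the same work Chernoff already does once you notice the independence of the correctness indicators. Both routes yield $T=\Omega(\log N)$; the paper just gets there with fewer moving parts.
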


\begin{proof}
{\bf Quantum upper bound.}
We run the Bernstein-Vazirani algorithm on input~$x$, which takes one quantum query. 
The algorithm will output some~$s$, and if $x$ equals some $h(s)\in \Prop^A_{BV}$ then this will be the corresponding~$s$ with certainty.
Hence if $s\not\in A$ we can reject immediately.
If $s\in A$ then choose $i\in[N]$ at random, query~$x_i$, and test whether indeed $x_i=s\cdot i$.
If $x$ is $\eps$-far from $\Prop^A_{BV}$ then this test will fail with probability~$\eps$.
Using amplitude amplification, we can detect any $x$ that is $\eps$-far from $\Prop^A_{BV}$ with success probability at least~$2/3$
using $O(1/\sqrt{\eps})$ queries.

{\bf Classical lower bound.}
Choose the set $A\subseteq\01^n$ uniformly at random.
Consider the uniform input distribution over the set $H$ of all $N$ Hadamard codewords.
Note that the Hadamard codewords that are not in $\Prop^A_{BV}$ are 1/2-far from $\Prop^A_{BV}$,
because any two distinct Hadamard codewords have normalized Hamming distance exactly~1/2.
Hence if $\Prop^A_{BV}$ can be $1/2$-tested with $T$ queries, then there exists a decision tree (i.e., a deterministic query algorithm) that is correct on at least 2/3 of the $x\in H$.
Fix a deterministic decision tree ${\cal T}$ of depth~$T$.  For each $x\in H$, the probability (over the choice of~$A$) that $x\in\Prop^A_{BV}$ is 1/2, irrespective of the output that ${\cal T}$ gives on~$x$, so the probability that ${\cal T}$ correctly decides~$x$ is~1/2.  
Then the probability that ${\cal T}$ correctly decides at least~2/3 of the $x\in H$ is $2^{-\Omega(N)}$ by a Chernoff bound.  
The total number of deterministic decision trees of depth~$T$ is at most $2^{2^T}N^{2^T-1}$, because for each of the (at most) $2^T-1$ internal nodes we have to choose an index to query, and for each of the (at most) $2^T$ leaves we have to choose a binary output value.
Hence by the union bound, the probability (over the choice of~$A$) that there exists a depth-$T$ decision tree that correctly decides at least 2/3 of the $x\in H$ is at most
$$
2^{-\Omega(N)}\cdot 2^{2^T}N^{2^T-1}.
$$
For $T=(\log N)/2$ this quantity is negligibly small.
This shows that a $1-o(1)$ fraction of all possible sets~$A$, there is no classical tester for $\Prop^A_{BV}$ with $(\log N)/2$ or fewer queries.
\end{proof}

As Buhrman et al.~\cite{bfnr:qpropj} noted, the above classical lower bound is essentially optimal because for every property $\Prop\subseteq\01^N$ there exists an $\eps$-tester with $T=\lceil \ln(3|\Prop|)/\eps \rceil$ queries, as follows.  We just query the input $x\in\01^N$ at $T$ uniformly randomly chosen positions, and accept if and only if there is still at least one element $y\in\Prop$ that is consistent with all query outcomes. Clearly, if the input is in $\Prop$ this test will accept, so it has perfect completeness. If the input is $\eps$-far from $\Prop$, then the probability for a specific $y\in\Prop$ to ``survive'' $T$~queries is at most $(1-\eps)^T$. Hence by the union bound the probability that there is a $y\in\Prop$ surviving all $T$~queries is at most $|\Prop|\cdot (1-\eps)^T\leq |\Prop|\cdot e^{-\eps T}\leq 1/3$. 

\subsubsection{Testing juntas}
\label{sec:juntas}

Let $f:\01^n\rightarrow\{+1,-1\}$ be a Boolean function (such an $f$ can also be viewed as a string~$x$ of $N=2^n$ bits, with $x_i=f(i)$), and $J\subseteq[n]$ be the set of (indices of) variables on which $f$ depends.  If $|J|\leq k$ then $f$ is called a \emph{$k$-junta}.

\begin{quote}
{\bf $k$-junta property}: $\Prop_{k\text{-junta}}=\{f:\01^n\rightarrow\{+1,-1\} : f\mbox{ depends on at most $k$ variables}\}$
\end{quote}

The best known classical tester, due to Blais, uses $O(k\log k + k/\eps)$ queries~\cite{blais_juntas}, and the best known classical lower bound is $\Omega(k)$~\cite{chockler&gutfreund} (for fixed~$\eps$).

At{\i}c{\i} and Servedio~\cite{atici&servedio:testing} gave an elegant quantum $\eps$-property tester for $\Prop_{k\text{-junta}}$
using $O(k/\eps)$ quantum queries, slightly better than Blais's classical tester.%
\footnote{In fact, at the time~\cite{atici&servedio:testing} was written the best classical upper bound was only $O((k\log k)^2/\eps)$~\cite{fkrss:juntas}.}

\begin{thm}[essentially At{\i}c{\i} and Servedio~\cite{atici&servedio:testing}]
There is a quantum tester for $k$-juntas that uses $O(k/\sqrt{\eps})$ queries.
\end{thm}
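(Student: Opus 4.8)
The plan is to combine the Fourier-sampling subroutine underlying the Bernstein--Vazirani algorithm of Section~\ref{sec:bvtester} with amplitude amplification, following At{\i}c{\i} and Servedio but replacing their linear $1/\eps$ dependence with $1/\sqrt{\eps}$. \emph{Step 1 (the sampling primitive).} View $f$ as $\pm1$-valued and run exactly the three-step circuit of Section~\ref{sec:bvtester}: prepare $\frac{1}{\sqrt N}\sum_{i}\ket i$, apply a single phase-query to reach $\frac{1}{\sqrt N}\sum_i f(i)\ket i$, then apply $H^{\otimes n}$ and measure. A one-line Parseval computation shows the measurement returns each $S\subseteq[n]$ with probability $\hat f(S)^2$, where $\hat f(S)=\E_i[f(i)\chi_S(i)]$ and $\chi_S(i)=(-1)^{S\cdot i}$. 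Thus one quantum query produces a \emph{Fourier sample} $S$. The key structural fact is that if $f$ is a $k$-junta with relevant set $J_f$ (so $|J_f|\le k$), then $\hat f(S)=0$ for every $S\not\subseteq J_f$; hence every Fourier sample lies inside $J_f$ and reveals only genuinely relevant variables.

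\emph{Step 2 (the analytic heart).} I would prove the quantitative converse, which is where the real work lies: if there is a set $J$ with $|J|\le k$ and $\sum_{S\not\subseteq J}\hat f(S)^2<\eps$, then $f$ is $\eps$-close to a $k$-junta. To see this, set $\tilde f=\sum_{S\subseteq J}\hat f(S)\chi_S$ (the $L_2$-projection onto functions of the variables in $J$) and let $g=\sgn(\tilde f)$, a $J$-junta and hence a $k$-junta. On every input where $f(x)\ne g(x)$, the values $f(x)\in\{\pm1\}$ and $\tilde f(x)\in[-1,1]$ have opposite sign, so $(f(x)-\tilde f(x))^2\ge1$; therefore $\Pr[f\ne g]\le\E[(f-\tilde f)^2]=\sum_{S\not\subseteq J}\hat f(S)^2<\eps$ by Parseval. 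Taking the contrapositive: if $f$ is $\eps$-far from every $k$-junta, then for \emph{every} $J$ with $|J|\le k$ at least an $\eps$-fraction of the Fourier mass sits on sets $S\not\subseteq J$, i.e.\ a single Fourier sample reveals a variable outside the current candidate set $J$ with probability at least $\eps$.

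\emph{Step 3 (the tester and amplification).} Maintain a set $J$ of variables found so far, initially empty. Repeat $O(k)$ times: treating the event ``the Fourier sample satisfies $S\not\subseteq J$'' as the marked outcome of the subroutine of Step~1, invoke amplitude amplification to search for such an $S$; when one is found, add its new coordinates to $J$, and \textbf{reject} as soon as $|J|>k$. If the rounds finish without $|J|$ exceeding $k$, \textbf{accept}. For a genuine $k$-junta every sample lies in $J_f$, so $J\subseteq J_f$ throughout and $|J|\le k$ always; the tester never rejects, yielding perfect completeness. If $f$ is $\eps$-far, then while $|J|\le k$ the marked probability is $\ge\eps$ by Step~2, so each amplification call costs $O(1/\sqrt\eps)$ queries and reveals a new variable with constant probability; after $\Theta(k)$ calls a Chernoff bound guarantees that at least $k+1$ succeed with high probability, forcing $|J|>k$ and rejection. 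The total cost is $\Theta(k)\cdot O(1/\sqrt\eps)=O(k/\sqrt\eps)$ queries.

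\emph{Main obstacle.} Two points need care. The quantitative lemma of Step~2 is the conceptual core: the inequality $(f-\tilde f)^2\ge1$ on disagreement points is precisely what converts $L_2$ Fourier weight into Hamming distance, and it must be stated with the right direction and constant. On the algorithmic side, the subtlety is to avoid an extra $\log k$ factor---running $k+1$ independent searches and union-bounding their constant failure probabilities would cost $O((k\log k)/\sqrt\eps)$, so instead one runs $\Theta(k)$ searches and argues via a Chernoff/martingale bound that enough succeed. One must also use an amplitude-amplification variant (the exponential-search form of Brassard et al.) that searches correctly given only that the marked probability is either $0$ (junta case, once $J_f$ is exhausted) or at least $\eps$ (far case), rather than a precise value of $p$.
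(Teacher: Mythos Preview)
Your proposal is correct and follows essentially the same approach as the paper: Fourier sampling via the Bernstein--Vazirani circuit, the sign-of-projection lemma to show that $\eps$-farness forces at least $\eps$ Fourier mass outside any candidate $k$-set, and amplitude amplification to find an escaping sample in $O(1/\sqrt{\eps})$ queries per round. The only cosmetic difference is in the aggregation over rounds---the paper simply states an expected cost of $O(1/\sqrt{\eps})$ per growth step and sums over $k+1$ steps, whereas you run $\Theta(k)$ constant-success searches and apply Chernoff; both yield the claimed $O(k/\sqrt{\eps})$.
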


``Essentially'' in the attribution of the above theorem refers to the fact that \cite{atici&servedio:testing} proves an $O(k/\eps)$ bound.
We observe here that the dependence on $\eps$ can easily be improved by a square root using amplitude amplification.

\begin{proof}
The basic quantum subroutine is the same as the Bernstein-Vazirani algorithm in Section~\ref{sec:bvtester}:
\begin{enumerate}
\item Start with $\ket{0^n}$ and apply Hadamard gates to each qubit to form the uniform superposition $\frac{1}{\sqrt{N}}\sum_{i\in\01^n}\ket{i}$
\item Apply a phase-query to obtain $\frac{1}{\sqrt{N}}\sum_{i\in\01^n}f(i)\ket{i}$
\item Apply Hadamard transforms to each qubit and measure.
\end{enumerate}
Let us analyze this subroutine by means of some Fourier analysis on the Boolean cube (see~\cite{odonnell:analysis,wolf:fouriersurvey} for background).
For every $s\in\01^n$, let 
$$
\widehat{f}(s)=\frac{1}{2^n}\sum_{i\in\01^n}f(i)(-1)^{i\cdot s}
$$
be the corresponding Fourier coefficient.
Going through the steps of the quantum subroutine, it is easy to see that the final state before the measurement is
$$
\sum_{s\in\01^n}\widehat{f}(s)\ket{s}.
$$
Accordingly, the final measurement will sample an~$s\in\01^n$ from the distribution given by the squared Fourier coefficients $\widehat{f}(s)^2$.
This procedure is known as \emph{Fourier Sampling}~\cite{bernstein&vazirani:qcomplexity}. It uses one query to~$f$.

Let $J$ be the set of variables on which the input~$f$ depends.  The goal of the tester is to decide whether $|J|\leq k$ or not.
Identifying sets $s\subseteq[n]$ with their characteristic vectors $s\in\01^n$, note that $\widehat{f}(s)\neq 0$ only if the support of $s$ lies within~$J$, so each Fourier Sample gives us a subset of~$J$. The tester will keep track of the union~$W$ of the supports seen so far. We will always have $W\subseteq J$, so if $f$ is a $k$-junta then $W$ will never have more than $k$ elements. On the other hand, below we show that if $f$ is $\eps$-far from any $k$-junta, then with high probability after $O(k/\sqrt{\eps})$ queries $W$ will end up having more than $k$ elements.

For a subset $W\subseteq[n]$ of size at most~$k$, define $g_W(i)=\sum_{s\subseteq W}\widehat{f}(s)(-1)^{i\cdot s}$. This function~$g_W$ need not be a Boolean function, but we can consider the Boolean function~$h_W$ that is the sign of~$g_W$. 
This~$h_W$ only depends on the variables in~$W$, so it is a $k$-junta and hence $\eps$-far from~$f$. Now we have
\begin{align*}
\eps & \leq \frac{1}{2^n}\sum_{i:f(i)\neq h_W(i)}1\\
& \leq \frac{1}{2^n}\sum_{i:f(i)\neq h_W(i)}(f(i)-g_W(i))^2\\
& \leq\E_{i\in\01^n}[(f(i)-g_W(i))^2]\\
& =\sum_{s}(\widehat{f}(s)-\widehat{g_W}(s))^2\\
& =\sum_{s\not\subseteq W}\widehat{f}(s)^2,
\end{align*}
where the first equality is Parseval's identity.
But this means that with probability at least~$\eps$, Fourier Sampling will output an~$s$ that is not fully contained in~$W$. Now we use amplitude amplification to find such an~$s$, using an expected number of $O(1/\sqrt{\eps})$ queries, and set $W:=W\cup s$ (so $W$'s size grows by at least one).
Repeating this at most $k+1$~times, after an expected number of $O(k/\sqrt{\eps})$ queries the set~$W$ (which was initially empty) will contain more than~$k$ variables and we can reject the input.
\end{proof}

Very recently Ambainis et al.~\cite{abrw:juntatesting} came up with a quantum $k$-junta tester that uses only $\widetilde{O}(\sqrt{k/\eps})$ queries.\footnote{The $\widetilde{O}(\cdot)$ notation hides logarithmic factors in~$k$.}  Unlike the tester of At{\i}c{\i} and Servedio, this actually beats the best known classical lower bound. The algorithm of~\cite{abrw:juntatesting} uses the adversary bound (see Section~\ref{ssecadversary} below), building upon quantum algorithms due to Belovs~\cite{belovs:juntalearning} for \emph{learning} the relevant variables of the junta. Their algorithm is substantially more complicated than the above, and we will not explain it here. They also give an implementation of their algorithm with \emph{time complexity} (i.e., number of quantum gates used) $\widetilde{O}(n\sqrt{k/\eps})$. They prove a quantum lower bound of $\Omega(k^{1/3})$ queries, leaving open the following:

\begin{question}
What is the quantum query complexity of testing juntas?
\end{question}

\subsubsection{Using Simon's algorithm}\label{sec:simontester}

The first exponential speed-up for quantum property testing was obtained by Buhrman et al.~\cite{bfnr:qpropj}.
It is inspired by Simon's algorithm~\cite{simon:power}, which was the first algorithm to have a provable exponential speed-up over classical algorithms in the black-box model and inspired Shor's factoring algorithm~\cite{shor:factoring} (which we will see in the next section).
Again let $N=2^n$ and identify $[N]$ with $\01^n$.
Consider an input $x\in[N]^N$ for which there exists an $s\in\01^n\backslash\{0^n\}$ such that $x_i=x_j$ if and only if ($j=i$ or $j=i\oplus s$).
Simon's algorithm finds~$s$ with high probability using $O(\log N)$ queries. The core of the algorithm is the following quantum subroutine:
\begin{enumerate}
\item Start with $\ket{0^n}\ket{0^n}$ and apply Hadamard transforms to the first $n$ qubits to form $\frac{1}{\sqrt{N}}\sum_{i\in\01^n}\ket{i}\ket{0^n}$
\item Apply a query to obtain $\frac{1}{\sqrt{N}}\sum_{i\in\01^n}\ket{i}\ket{x_i}$
\item Measure the second register. This yields some $z=x_i$ and collapses the first register to the two indices with value~$z$:
$\frac{1}{\sqrt{2}}(\ket{i}+\ket{i\oplus s})$
\item Apply Hadamard transforms to the first $n$ qubits and measure the state, obtaining some $y\in\01^n$.
\end{enumerate}
It is easy to calculate that the measured state is (up to phases) a uniform superposition over all $2^{n-1}$ strings $y\in\01^n$ that satisfy $s\cdot y=0$ (mod 2).
Each such~$y$ gives us a linear constraint (mod~2) on the bits of~$s$.
Repeating this subroutine $\Theta(n)$ times gives, with high probability, $n-1$ linearly independent $y^{(1)},\ldots,y^{(n-1)}$ all orthogonal to~$s$.
From these, $s$ can be calculated classically by Gaussian elimination.
Brassard and H\o yer~\cite{brassard&hoyer:simon} subsequently gave an \emph{exact} version of this algorithm, where each new~$y$ is produced by a modification of Simon's subroutine that uses $O(1)$ queries and is \emph{guaranteed} to be linearly independent from the previous ones (as long as such a linearly independent~$y$ exists).

This algorithm can be used to obtain a strong quantum speed-up for testing a specific property.

\begin{quote}
{\bf Simon property}:\\ 
$\Prop_{Simon}=\{x\in[N]^N : \exists s\in\01^n\backslash\{0^n\}\mbox{ such that }x_i=x_j\mbox{ if } j=i\oplus s\}$
\end{quote}
Note that, compared with Simon's original problem, the `if and only if' has been replaced with an `if.' Hence $x_i$ and $x_j$ can be equal even for distinct $i,j$ for which $j\neq i\oplus s$. However, also for such more general inputs, Simon's quantum subroutine (and the Brassard-H\o yer version thereof) only produces $y$ such that $s\cdot y=0$ (mod~2). The speed-up is as follows; for simplicity we state it for fixed $\eps=1/4$ rather than making the dependence on~$\eps$ explicit:

\begin{thm}[essentially Buhrman et al.~\cite{bfnr:qpropj}]
\label{thm:simon}
There is a quantum 1/4-property tester for the Simon property using $O(\log N)$ queries, while every classical 1/4-property tester needs $\Omega(\sqrt{N})$ queries.
\end{thm}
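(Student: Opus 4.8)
The plan is to prove the two halves separately: an $O(\log N)$-query quantum tester built from Simon's subroutine, and an $\Omega(\sqrt{N})$ classical lower bound via a birthday/collision argument. For the quantum upper bound the right object to track is the \emph{symmetry subgroup} $G=\{s\in\F_2^n : x_i=x_{i\oplus s}\text{ for all }i\}$, which is a linear subspace of $\F_2^n$ (closure under $\oplus$ is immediate), and which is nontrivial exactly when $x\in\Prop_{Simon}$. The key structural lemma I would prove is that, on an arbitrary input $x$, Simon's subroutine measures $y$ from a distribution whose support satisfies $\spann(\supp)=G^\perp$. The inclusion $\supp\subseteq G^\perp$ is the cancellation already noted in the text (for every valid shift $s\in G$ the amplitude on any $y$ with $s\cdot y=1$ cancels, so $s\cdot y=0$). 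For the reverse I would argue by Fourier analysis: if a nonzero $t$ were orthogonal to the whole support, then every level-set indicator $1_{\{i:x_i=z\}}$ would have its Fourier mass confined to $t^\perp$, which forces $1_{\{i:x_i=z\}}$ to be invariant under $i\mapsto i\oplus t$ for each colour $z$, i.e.\ $x_i=x_{i\oplus t}$ for all $i$, i.e.\ $t\in G$. Hence $\spann(\supp)^\perp=G$.

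Given this lemma the tester is clean. I would run the exact Brassard--H\o yer version of Simon's subroutine to generate a sequence of \emph{guaranteed linearly independent} sample vectors, each costing $O(1)$ queries, continuing until no further independent vector can be produced; at that point the collected vectors form a basis of $\spann(\supp)=G^\perp$, so their number is $n-\dim G$. Accept iff fewer than $n$ vectors were produced. By the lemma this rank is $<n$ precisely when $G\neq\{0^n\}$, i.e.\ precisely on $\Prop_{Simon}$. A $1/4$-far input has $a_s:=|\{i:x_i=x_{i\oplus s}\}|/N\le 1/2<1$ for every nonzero $s$, since the distance from $x$ to the $s$-periodic functions is $(1-a_s)/2$; hence $G=\{0^n\}$ and the rank is full. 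The total cost is $O(n)=O(\log N)$ queries. (If one prefers not to rely on exactness to detect termination, after spanning one can pick any nonzero $s^\ast\in V^\perp=G$ and verify $x_i=x_{i\oplus s^\ast}$ on $O(1)$ random positions; because $V^\perp=G$ this candidate is a genuine period and verification always passes on yes-inputs, while any nonzero candidate for a $1/4$-far input fails with probability $\ge 1/2$.)

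For the classical lower bound I would invoke Yao's principle with two input distributions. Let $D_{\mathrm{no}}$ be a uniformly random injective $x$ (a permutation of $[N]$): it has no nonzero period, so it is exactly $1/2$-far from $\Prop_{Simon}$. Let $D_{\mathrm{yes}}$ choose a uniformly random nonzero $s$ and then assign independent distinct values to the $N/2$ pairs $\{i,i\oplus s\}$, producing an input with period $s$, hence in $\Prop_{Simon}$. The point is that the period reveals itself \emph{only} through a collision $x_i=x_{i\oplus s}$: as long as a (possibly adaptive) algorithm has not queried two positions $i,j$ with $i\oplus j=s$, its transcript under $D_{\mathrm{yes}}$ is distributed exactly as under $D_{\mathrm{no}}$ (a list of fresh distinct random values). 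Since $s$ is uniform and independent, the probability that $q$ queries contain such a pair is at most $\binom{q}{2}/(N-1)=O(q^2/N)$, so the two transcript distributions have total variation $O(q^2/N)$. Distinguishing the two cases with success $2/3$ requires total variation $\ge 1/3$, forcing $q=\Omega(\sqrt{N})$, and averaging over the algorithm's randomness extends this to randomized testers.

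The main obstacle is the subtlety caused by the ``if'' rather than ``if and only if'' in the definition of $\Prop_{Simon}$: because distinct indices may accidentally share a value, the Simon sampling distribution's support can be a \emph{proper} subset of $G^\perp$ and can be highly non-uniform, so one cannot simply claim that a handful of plain Simon samples recover all of $G^\perp$, nor pick an arbitrary shift from a partially-determined span and expect it to be a true period. Getting the yes-side right hinges precisely on the two ingredients above: the Fourier/periodicity lemma identifying $\spann(\supp)$ with $G^\perp$ (so that the measured span is never accidentally too small in a way that would hide a symmetry), and the exactness of Brassard--H\o yer (so that $G^\perp$ is spanned using only $O(\log N)$ queries no matter how lopsided the sampling distribution is).
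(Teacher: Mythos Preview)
Your proof is correct and follows essentially the same strategy as the paper, but with a few noteworthy differences in execution.

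For the quantum upper bound, both you and the paper use the Brassard--H\o yer exact variant of Simon's subroutine. The paper runs it exactly $n-1$ times, picks any nonzero $s$ orthogonal to the resulting $y$'s, and then \emph{verifies} that $s$ is a period by random sampling. Your main tester instead runs until the span stabilizes and accepts based on rank alone; this yields a zero-error tester rather than the paper's one-sided-error tester. More importantly, you make explicit and prove the structural lemma $\spann(\supp)=G^\perp$. The paper's sketch actually relies on this lemma implicitly: its assertion that ``if $x\in\Prop_{Simon}$ then $s$ will have the property that $x_i=x_{i\oplus s}$ for all $i$'' requires precisely that the $n-1$ Brassard--H\o yer outputs span all of $G^\perp$, so that any orthogonal $s$ lands in $G$. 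Your Fourier argument for the reverse inclusion is the clean way to justify this, and your backup verification step coincides with the paper's tester.

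For the classical lower bound, the core birthday/collision argument is the same, but the hard distributions differ. The paper takes $\mathcal{D}_1$ uniform over $\Prop_{Simon}$ and $\mathcal{D}_0$ uniform over $1/4$-far inputs, and needs an intermediate step showing $\mathcal{D}_0$ is close to the fully uniform distribution $\mathcal{U}$ before running the collision analysis against $\mathcal{U}$. Your choice of a random permutation for $D_{\mathrm{no}}$ (which is automatically $1/2$-far) and a random two-to-one function with hidden shift for $D_{\mathrm{yes}}$ lets you compare the two transcript distributions directly, bypassing that intermediate step. The adaptive analysis you sketch (conditioning on no $s$-pair queried so far and tracking the remaining uniform randomness of $s$) is exactly what the paper's second claim carries out in detail.
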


``Essentially'' in the attribution of the above theorem refers to the fact that Buhrman et al.~\cite{bfnr:qpropj} devised a property of \emph{binary} strings of length~$N$. In our presentation it will be more convenient to consider a property consisting of strings over alphabet~$[N]$. As remarked by Aaronson and Ambainis~\cite{aaronson11j}, Theorem~\ref{thm:simon} has an interesting consequence regarding the question of when we can hope to achieve exponential quantum speed-ups. In order to obtain a super-polynomial quantum speed-up for computing some function $f$ in the query complexity model, it is known that there has to be a \emph{promise} on the input, i.e., $f$ has to be a partial function~\cite{bbcmw:polynomialsj}. The Simon property indeed involves a promise on the input, namely that it is either in or far from $\Prop_{Simon}$; however, this promise is in some sense very weak, as the algorithm has to output the right answer on a $1-o(1)$ fraction of~$[N]^N$.

\begin{proof}
{\bf Quantum upper bound (sketch).}
We run the Brassard-H\o yer version of Simon's subroutine $n-1$ times. We then classically compute a non-zero string~$s$ that is orthogonal to all the $n-1$ strings~$y$ produced by these runs (there may be several such~$s$, in which case we just pick any). We then randomly choose $i\in[N]$, query $x_i$ and $x_{i\oplus s}$, and check if these two values are equal.  If $x\in\Prop_{Simon}$ then $s$ will have the property that $x_i=x_{i\oplus s}$ for all~$i$. On the other hand, if $x$ is 1/4-far from $\Prop_{Simon}$, then there exist at least $N/4$ $(i,i\oplus s)$-pairs such that $x_i\neq x_{i\oplus s}$ (for otherwise we could put $x$ into $\Prop_{Simon}$ by changing one value for each such pair, making fewer than $N/4$ changes in total). Hence in this case we reject with constant probability.  Testing a few different $(i,i\oplus s)$-pairs reduces the error probability to below~$1/3$.

{\bf Classical lower bound.}
Consider three distributions: ${\cal D}_1$ is uniform over $\Prop_{Simon}$, ${\cal D}_0$ is uniform over all $x\in [N]^N$ that are $1/4$-far from $\Prop_{Simon}$, and ${\cal U}$ is uniform over $[N]^N$.  We first show ${\cal D}_0$ and ${\cal U}$ are very close.

\begin{claim}
The total variation distance between ${\cal D}_0$ and ${\cal U}$ is $o(1)$.
\end{claim}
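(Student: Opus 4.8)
The plan is to exploit the elementary fact that whenever $\mathcal{D}_0$ is the uniform distribution on a subset $F\subseteq[N]^N$ and $\mathcal{U}$ is uniform on all of $[N]^N$, the total variation distance is exactly the fraction of strings lying outside $F$, namely $1-|F|/N^N$. Here $F$ is the set of strings that are $1/4$-far from $\Prop_{Simon}$, so its complement consists precisely of the strings that are $1/4$-close to $\Prop_{Simon}$. It therefore suffices to show that a uniformly random $x\in[N]^N$ is $1/4$-far from $\Prop_{Simon}$ with probability $1-o(1)$; equivalently, that $\Pr_{x\sim\mathcal{U}}[x\text{ is }1/4\text{-close to }\Prop_{Simon}]=o(1)$. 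In other words, the content of the claim is simply that almost every string in $[N]^N$ is far from the Simon property.

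First I would reduce the geometric notion of distance to a clean combinatorial count. For a fixed nonzero $s\in\01^n$, the map $i\mapsto i\oplus s$ is a fixed-point-free involution, so it partitions $[N]$ into $N/2$ disjoint pairs $\{i,i\oplus s\}$. To modify $x$ into a string satisfying the Simon condition for this particular $s$, one must make the two entries of every disagreeing pair equal, at a cost of exactly one change per disagreeing pair; hence the Hamming distance from $x$ to the nearest string satisfying Simon's condition for $s$ equals the number of disagreeing pairs. Minimizing over $s$, we conclude that $x$ is $1/4$-close to $\Prop_{Simon}$ if and only if there exists some $s\neq 0^n$ for which at most $N/4$ of the $N/2$ pairs disagree.

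The heart of the argument is then a concentration bound. For a uniformly random $x$ and a fixed $s$, the $N/2$ pairs are disjoint, so the events ``$x_i=x_{i\oplus s}$'' are mutually independent, each occurring with probability $1/N$. The number of disagreeing pairs is thus a sum of $N/2$ independent Bernoulli variables with mean $(N/2)(1-1/N)\approx N/2$, and a Chernoff bound gives $\Pr[\text{at most }N/4\text{ pairs disagree}]\leq e^{-\Omega(N)}$. Taking a union bound over the fewer than $N$ choices of $s\neq 0^n$ then yields $\Pr_{x\sim\mathcal{U}}[x\text{ is }1/4\text{-close}]\leq N\cdot e^{-\Omega(N)}=2^{-\Omega(N)}$, which is $o(1)$, establishing the claim.

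The main obstacle, and really the only place requiring care, is the first reduction: verifying that the minimum number of coordinate changes needed to force the Simon condition for a given $s$ equals \emph{exactly} the number of disagreeing pairs, so that one neither over- nor under-counts, and confirming that across distinct pairs the disagreement events are genuinely independent so that the standard Chernoff tail applies cleanly. Everything after that is a routine concentration estimate followed by a union bound over the $N-1$ candidate shifts.
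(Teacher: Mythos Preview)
Your proof is correct, but it takes a different route from the paper's. The paper proceeds by a direct volume count: it first bounds $|\Prop_{Simon}|\le (N-1)N^{N/2}$ (choose the shift $s$, then one value per pair), then multiplies by the size $\binom{N}{N/4}N^{N/4}$ of a Hamming ball of radius $N/4$ to conclude that the set of $1/4$-close strings has cardinality $o(N^N)$. You instead fix $s$, note that the $N/2$ agreement indicators are independent Bernoulli$(1/N)$ variables, apply a Chernoff tail to bound the probability that at least $N/4$ pairs agree, and union-bound over the $N-1$ shifts. Both arguments are short and give far more than is needed; your probabilistic version actually yields the sharper $2^{-\Omega(N\log N)}$ bound (the expected number of agreeing pairs is only $1/2$), while the paper's counting argument is entirely self-contained and avoids any appeal to concentration. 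The structural observation underlying both---that the distance from $x$ to the nearest shift-$s$ Simon string equals the number of disagreeing pairs---is the same.
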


\begin{proof}
Let $S=\{y: y\mbox{ is not 1/4-far from }\Prop_{Simon}\}$ be the elements that are not in the support of ${\cal D}_0$.
We will upper bound the size of $S$.
Each element of $\Prop_{Simon}$ can be specified by giving an $s\in\01^n\backslash\{0^n\}$ and giving for each of the $N/2$ $(i,i\oplus s)$-pairs the value $x_i=x_{i\oplus s}$. Hence 
$$
|\Prop_{Simon}|\leq (N-1)N^{N/2}
$$
For each $x$, the number of $y$ that are $1/4$-close to $x$ is at most ${N\choose N/4}N^{N/4}$.
Hence the total number of elements $1/4$-close to $\Prop_{Simon}$ is
$$
|S|\leq (N-1)N^{N/2}{N\choose N/4}N^{N/4}=o(N^N).
$$
Since ${\cal U}$ is uniform over $[N]^N$ and ${\cal D}_0$ is uniform over $[N]^N\backslash S$,
the total variation distance between these two distributions is $O(|S|/N^N)=o(1)$.
\end{proof}

To finish the proof, below we slightly adapt the proof in~\cite{simon:power} to show that  
a $T$-query classical algorithm distinguishing distributions ${\cal D}_1$ and ${\cal U}$ has advantage of only $O(T^2/N)$ over random guessing.\footnote{The ``advantage'' of the algorithm is the difference between success and failure probabilities.}
Since ${\cal D}_0$ and ${\cal U}$ are $o(1)$-close, a $T$-query classical algorithm distinguishing distributions ${\cal D}_1$ and ${\cal D}_0$ has advantage $O(T^2/N)+o(1)$ over random guessing.
A classical tester for the Simon property can distinguish ${\cal D}_1$ and ${\cal D}_0$ with success probability at least~$2/3$, so it needs $T=\Omega(\sqrt{N})$ queries.
It remains to prove:

\begin{claim}
A $T$-query classical algorithm for distinguishing distributions ${\cal D}_1$ and ${\cal U}$ has advantage $O(T^2/N)$ over random guessing. 
\end{claim}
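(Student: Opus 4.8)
The plan is to adapt Simon's original collision argument through a coupling of the two sampling processes. First I would invoke the standard reduction to deterministic algorithms: a randomized $T$-query algorithm is a mixture of deterministic ones, and the distinguishing advantage (the gap $|\Pr_{{\cal D}_1}[\text{accept}]-\Pr_{{\cal U}}[\text{accept}]|$) is linear in this mixture, so it suffices to bound the advantage of an arbitrary deterministic depth-$T$ decision tree ${\cal A}$. To make ${\cal D}_1$ tractable I would first replace it by the ``planted'' distribution ${\cal D}_1'$: draw $s\in\01^n\setminus\{0^n\}$ uniformly, then assign each of the $N/2$ pairs $\{i,i\oplus s\}$ an independent uniform value in $[N]$. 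Every sample of ${\cal D}_1'$ lies in $\Prop_{Simon}$, and I would argue separately (see the last paragraph) that ${\cal D}_1'$ is $o(1)$-close to the uniform distribution ${\cal D}_1$ over $\Prop_{Simon}$.

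Next I would set up a lazy-sampling coupling of ${\cal U}$ and ${\cal D}_1'$ that uses the same fresh uniform answers in both worlds. Under ${\cal U}$, each query to a new index returns an independent uniform value; under ${\cal D}_1'$ the same holds, except that a query to an index $i$ whose partner $i\oplus s$ has already been queried is forced to repeat that partner's value. Define the bad event $B$ that, among the (at most $T$) indices the tree queries, some two of them $i_a,i_b$ satisfy $i_a\oplus i_b=s$. Off the event $B$ the two coupled transcripts are identical, so ${\cal A}$ produces the same output in both worlds; hence the advantage is at most $\Pr[B]$.

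To bound $\Pr[B]$ I would walk through the queries one at a time, conditioning on $B$ not having occurred yet. As long as $B$ has not occurred, every answer seen so far is a fresh uniform value independent of $s$, so the sequence of queried indices $i_1,\dots,i_t$ is a deterministic function of $s$-independent randomness. Consequently, when the $t$-th index is queried, $\Pr[\exists a<t:\,i_a\oplus i_t=s]\le (t-1)/(N-1)$, because $s$ is uniform over the $N-1$ nonzero strings while each $i_a\oplus i_t$ is a fixed nonzero string (distinct indices may be assumed without loss of generality). Summing over $t\le T$ gives $\Pr[B]\le \binom{T}{2}/(N-1)=O(T^2/N)$, the desired bound.

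I expect the main obstacle to be the bookkeeping step connecting ${\cal D}_1'$ back to ${\cal D}_1$. Uniform sampling over $\Prop_{Simon}$ weights each $x$ equally, whereas ${\cal D}_1'$ assigns $x$ probability proportional to $|\{s\neq 0^n:x_i=x_{i\oplus s}\ \forall i\}|$; the two distributions agree on the overwhelming majority of inputs, namely those with a unique shift. The inputs admitting two or more shifts are exactly those constant on cosets of a subgroup of $(\01^n,\oplus)$ of size at least $4$, of which there are at most $O(N^2)\cdot N^{N/4}$, an $o(1)$ fraction of $|\Prop_{Simon}|=\Theta(N^{N/2+1})$. Hence ${\cal D}_1$ and ${\cal D}_1'$ are $o(1)$-close in total variation, so the advantage against ${\cal D}_1$ is $O(T^2/N)+o(1)$; since this bound is only ever used in the regime $T=\Omega(\sqrt N)$, the additive $o(1)$ is harmless and the stated $O(T^2/N)$ follows.
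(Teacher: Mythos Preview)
Your proposal is correct and takes a genuinely different route from the paper's proof. The paper does not couple the two distributions; instead it defines a query sequence to be ``good'' if \emph{any} value collision $x_{i_a}=x_{i_b}$ is observed, bounds the probability of a good sequence separately under ${\cal U}$ (birthday bound) and under ${\cal D}_1$ (splitting into ``hit an $s$-partner'' and ``accidental value repeat''), and then argues that conditioned on a bad sequence the transcript is a uniform tuple of distinct values under both distributions. Your coupling sidesteps the value-collision bookkeeping entirely: since the fresh uniform answers are shared, accidental value repeats occur identically in both worlds and cannot help the distinguisher, so only $s$-partner hits matter. This is cleaner and yields the same $\binom{T}{2}/(N-1)$ bound with less case analysis. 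Two small remarks: first, to make your union-bound step airtight, it is cleanest to define the indices $i_1,\dots,i_T$ as those queried in the ${\cal U}$-world (these are $s$-independent by construction), so that $\Pr[B]\le\sum_{a<b}\Pr[i_a\oplus i_b=s]=\binom{T}{2}/(N-1)$ holds without worrying about conditioning; the coupling then guarantees divergence happens exactly on this event. Second, the paper handles the ${\cal D}_1$ versus ${\cal D}_1'$ issue the same way you do, simply noting that multi-shift inputs have negligible probability and ``ignoring'' them; your explicit $o(1)$ bound is a bit more honest, and you are right that the additive $o(1)$ is irrelevant for the $\Omega(\sqrt N)$ lower bound the claim is used for.
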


\begin{proof}
By the well-known Yao principle~\cite{yao:unified}, it suffices to prove the claim for an arbitrary \emph{deterministic} $T$-query algorithm. 
The proof will show that both under ${\cal D}_1$ and ${\cal U}$ the $T$ queries are likely to yield a uniformly random sequence of $T$ distinct values. Suppose the algorithm queries the indices $i_1,\ldots,i_T$ (this sequence may be adaptive, i.e., depend on $x$)
and gets outputs $x_{i_1},\ldots,x_{i_T}$. Call a sequence of queries $i_1,\ldots,i_T$
\emph{good} (for input~$x$) if it shows a collision, i.e., $x_{i_k}=x_{i_{\ell}}$ for some $k\neq\ell$. Call the sequence \emph{bad} (for~$x$) otherwise.
We will now show that the probability of a bad sequence is $O(T^2/N)$, both under input distribution~${\cal U}$ and under~${\cal D}_1$.

First, suppose the input~$x$ is distributed according to~${\cal U}$.
Then each output $x_{i_k}$ is uniformly distributed over $[N]$, independent of the other
entries of~$x$. The probability that $i_k$ and $i_\ell$ form a collision is exactly $1/N$, so
the expected number of collisions among the $T$ queries is ${T\choose 2}/N=O(T^2/N)$.
Hence, by Markov's inequality, the probability that $i_1,\ldots,i_T$ form a good sequence is $O(T^2/N)$.

Second, suppose the input $x$ is distributed according to~${\cal D}_1$.
Then there exists a nonzero $s\in\01^n$, unknown to the algorithm, such that $x_i=x_j$ whenever $j=i\oplus s$.
Initially, all such~$s$ are equally likely under~${\cal D}_1$
(the probability that there are two distinct such~$s$ for $x$ is negligibly small, and we will ignore this here).
If $i_1,\ldots,i_{k-1}$ is bad, then we have excluded ${k-1 \choose 2}$ of the $N-1$ possible values of $s$,
and all other values of $s$ are equally likely.
Let $i_k$ be the next query and $S=\{i_k\oplus i_j: j<k\}$.
This set $S$ has at most $k-1$ members, so the probability (under~${\cal D}_1$) that $S$ happens to contain the string~$s$ is at most $\frac{k-1}{N-1-{k-1\choose 2}}$.
If $S$ does not contain~$s$, then the only way to make the sequence good is if the uniformly random value~$x_{i_k}$ 
equals one of the $k-1$ earlier values, which has probability~$(k-1)/N$.
Hence the probability that the bad sequence $i_1,\ldots,i_{k-1}$ remains bad, after query $i_k$ is made, is very close to~1. More precisely:
\begin{eqnarray*}
\Pr[i_1,\ldots,i_T\mbox{ is bad}] & = &
\prod_{k=2}^T \Pr[i_1,\ldots,i_k\mbox{ is bad}: i_1,\ldots,i_{k-1}\mbox{ is bad}]\\
& \geq & \prod_{k=2}^T \left(1-\frac{k-1}{N-1-{k-1\choose 2}}-\frac{k-1}{N}\right)\\
& \geq & 1-\sum_{k=2}^T \left(\frac{k-1}{N-1-{k-1\choose 2}}+\frac{k-1}{N}\right).
\end{eqnarray*}
Here we used the fact that $(1-a)(1-b)\geq 1-(a+b)$ if $a,b\geq 0$.
The latter sum over~$k$ is $O(T^2/N)$, so the probability (under~${\cal D}_1$) that $i_1,\ldots,i_T$ form a good sequence is~$O(T^2/N)$.

In both cases (${\cal U}$ and ${\cal D}_1$), conditioned on seeing a bad sequence, the sequence of outputs is a uniformly random sequence of $T$ distinct values. Accordingly, the advantage (over random guessing) of the algorithm trying to distinguish these two distributions is upper bounded by the probability of seeing a good sequence, which is $O(T^2/N)$ in both cases.
\end{proof}
\end{proof}

\subsubsection{Using Shor's algorithm}

Probably the most famous quantum algorithm to date is Shor's polynomial-time algorithm for factoring integers~\cite{shor:factoring}.  
Its quantum core is an algorithm that can find the \emph{period} of a periodic sequence.
Chakraborty et al.~\cite{cfmw:qproptesting} used this to show that \emph{testing} periodicity exhibits a constant-versus-polynomial
quantum-classical separation.  Note that the Bernstein-Vazirani property (Section~\ref{sec:bvtester}) exhibits a constant-versus-logarithmic separation,
while the Simon property (Section~\ref{sec:simontester}) exhibits a logarithmic-versus-polynomial separation. 
Periodicity-testing thus exhibits a separation that is in some ways stronger than either of those.

\begin{quote}
{\bf Periodicity}: let $p$ be a prime number and $m$ an integer such that $m\geq p$.  A string $x\in[m]^N$ is \emph{1-1-$p$-periodic} if it satisfies that $x_i=x_j$ if and only if $i=j$ mod $p$
(equivalently, the elements in the sequence $x_0,\ldots,x_{p-1}$ are all unique, and after that the sequence repeats itself).
For integers $q$ and $r$ such that $q\leq r\leq\sqrt{N/2}$, define the property
$\Prop^{q,r}_{period}=\{x\in[m]^N: x \mbox{ is 1-1-$p$-periodic for some }p\in\{q,\ldots,r\}\}$
\end{quote} 

Note that \emph{for a given $p$} it is easy to test whether $x$ is $p$-periodic or far from it:
choose an $i\in[N]$ uniformly at random, and test whether $x_{i}=x_{i+kp}$ for a random positive integer $k$.
If $x$ is $p$-periodic then these values will be the same, but if $x$ is far from $p$-periodic then we will detect this with good probability.
However, $r-q+1$ different values of $p$ are possible in $\Prop^{q,r}_{period}$.  Below we will set $q=r/2$ so $r/2+1$ different values for the period are possible. This makes the property hard to test for classical testers. On the other hand, in the \emph{quantum} case the property can be tested efficiently.

\begin{thm}[Chakraborty et al.~\cite{cfmw:qproptesting}]\label{thperiodicity}
For every even integer $r\in[2,\sqrt{N})$ and constant distance~$\eps$, there is a quantum property tester for $\Prop^{r/2,r}_{period}$ using $O(1)$ queries, while every classical property tester for $\Prop^{r/2,r}_{period}$ makes $\Omega(\sqrt{r/\log r\log N})$ queries. 
In particular, for $r=\sqrt{N}$ testing can be done with $O(1)$ quantum queries but requires $\Omega(N^{1/4}/\log N)$ classical queries.
\end{thm}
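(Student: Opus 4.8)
The plan is to prove the two bounds separately, mirroring the structure of the Simon-property argument (Theorem~\ref{thm:simon}): a quantum upper bound built from a known period-finding subroutine followed by a cheap verification, and a classical lower bound via a pair of hard input distributions together with a collision (birthday) argument.

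For the \textbf{quantum upper bound}, I would use the period-finding core of Shor's algorithm as a one-query subroutine. On an input that is $1$-$1$-$p$-periodic the first $p$ symbols are distinct, so $i\mapsto x_i$ is injective on each length-$p$ block and period-finding applies cleanly. Preparing $\frac{1}{\sqrt N}\sum_i\ket{i}\ket{0}$, making one query to obtain $\frac{1}{\sqrt N}\sum_i\ket{i}\ket{x_i}$, measuring the second register to collapse the first onto an arithmetic progression with common difference $p$, and applying the QFT over $\Z_N$ yields a value $y$ close to a random multiple of $N/p$. The crucial structural facts are that the candidate periods lie in the narrow dyadic range $\{r/2,\dots,r\}$ and that $r\le\sqrt N$, so $p^2\le N$ and continued-fraction post-processing recovers $p$ exactly with constant probability. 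A constant number of runs therefore pins down a candidate $p$, which one then \emph{verifies} with $O(1)$ further queries: sampling random $i$ to check $x_i=x_{i+p}$ (periodicity) and random pairs inside a block to check distinctness (the $1$-$1$ condition). If $x$ is $\eps$-far from $\Prop^{r/2,r}_{period}$ it is in particular $\eps$-far from being $1$-$1$-$p$-periodic for the recovered $p$, so verification rejects with constant probability; standard amplification drives the error below $1/3$. Every step costs $O(1)$ queries, so does the tester.

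For the \textbf{classical lower bound}, I would set up distributions in the style of the Simon proof: let $\mathcal D_1$ be uniform over the property (choose $p\in\{r/2,\dots,r\}$ uniformly, choose distinct symbols for positions $0,\dots,p-1$, and repeat periodically), let $\mathcal U$ be a structureless reference distribution (uniform over $[m]^N$ with $m$ large), and let $\mathcal D_0$ be uniform over the strings that are $\eps$-far from the property. As in the excerpt, I would first argue that almost all strings are far, so $\mathcal D_0$ and $\mathcal U$ are statistically close, reducing the task to distinguishing $\mathcal D_1$ from $\mathcal U$. By Yao's principle it suffices to analyze a fixed deterministic $T$-query algorithm. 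The key observation is that the transcript under $\mathcal D_1$ can differ from the transcript under $\mathcal U$ only through a \emph{collision}: two queried indices $i,j$ with $i\equiv j\pmod p$, which forces $x_i=x_j$. Absent such a revealing collision the conditional distribution of the observed symbols is the same fresh, essentially uniform sequence under both inputs, so the algorithm has no advantage.

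It then remains to bound the probability of a period-revealing collision. With $T$ queries there are $\binom{T}{2}$ index pairs, and for a hidden period $p\ge r/2$ a fixed pair is congruent mod $p$ with probability $O(1/r)$, so a birthday/second-moment estimate gives collision probability $O(T^2/r)$; forcing this to be $\Omega(1)$ yields $T=\Omega(\sqrt r)$ up to lower-order factors. \textbf{The main obstacle}, and the source of the $\log r\,\log N$ loss in the denominator, is the bookkeeping needed to make this rigorous: a single collision at index-difference $D=i-j$ only reveals that $p\mid D$, and $D<N$ may be divisible by several periods inside the dyadic range $\{r/2,\dots,r\}$, so one collision need not identify $p$; moreover one must control accidental collisions and confirm that $\mathcal D_0$ really is supported on far inputs. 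Handling this ambiguity via union bounds over the $\Theta(r)$ candidate periods and over the scales of the differences $D<N$ is what degrades the clean $\sqrt r$ bound to the stated $\Omega(\sqrt{r/(\log r\,\log N)})$. The corollary for $r=\sqrt N$ then follows by direct substitution.
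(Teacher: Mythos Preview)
Your quantum upper bound is exactly the paper's approach: run Shor's period-finding to obtain a candidate $p$ (the constraint $p\le r<\sqrt{N}$ ensures continued fractions succeed), then verify with $O(1)$ further queries. The paper's sketch only mentions the periodicity check $x_i=x_{i+kp}$ for verification; your additional distinctness check is harmless but not discussed there.

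For the classical lower bound the paper does not give an argument at all: it simply points to the periodicity-testing lower bounds of Lachish and Newman~\cite{lachish&newman:periodicity} and says the proof is a modification of theirs. Your Simon-style two-distribution collision argument is a plausible reconstruction of what such a proof must do, but you have missed one structural feature of the property that substantially changes your ``main obstacle'': the definition requires the period $p$ to be \emph{prime}. With $p$ prime, a fixed index-difference $D<N$ can be divisible by at most $O(\log N/\log r)$ primes in the dyadic window $[r/2,r]$ (since each such prime is at least $r/2$ and their product is at most $D$), while there are $\Theta(r/\log r)$ primes in that window. So for a random prime $p$ in the window, $\Pr[p\mid D]=O((\log N)/r)$ rather than your claimed $O(1/r)$, and the ambiguity you worry about (many candidate periods dividing the same $D$) is automatically controlled without a union bound over $\Theta(r)$ periods. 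Plugging this into your birthday estimate over $\binom{T}{2}$ pairs already yields $T=\Omega(\sqrt{r/\log N})$; the remaining $\log r$ loss in the stated bound comes from the density of primes and from making the ``far from the property'' side rigorous, not from the divisor bookkeeping you flag.
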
 

The quantum upper bound is obtained by a small modification of Shor's algorithm: 
use Shor to find the period~$p$ of input~$x$ (if there is such a period) and then test this purported period with another $O(1)$ queries.%
\footnote{These ingredients are already present in work of Hales and Hallgren~\cite{hales&hallgren:improvedfourier}, and in Hales's PhD thesis~\cite{hales:phd}. However, their results are not stated in the context of property testing, and no classical lower bounds are proved there.}
The classical lower bound is based on modifying proofs from Lachish and Newman~\cite{lachish&newman:periodicity},
who showed classical testing lower bounds for more general (and hence harder) periodicity-testing problems.

This quantum-classical separation is of the form $O(1)$ quantum queries vs $N^{\Omega(1)}$ classical queries, for a problem over a polynomial-sized alphabet (so each ``entry'' of the input takes only $O(\log N)$ bits).  How large can we make this separation? This was already asked by Buhrman et al.~\cite{bfnr:qpropj} in the following way:

\begin{question}
Is there a property of strings of length $N$ (over a moderately-sized alphabet) 
that can be tested with $O(1)$ quantum queries but needs $\Omega(N)$ classical queries?
\end{question}

A very recent result of Aaronson and Ambainis~\cite{aaronson&ambainis:forrelation} is relevant here: they showed that if a (total or partial) function on $x\in\01^N$ can be computed with bounded error probability using $k$ quantum queries, then the same function can be computed by a classical randomized algorithm using $O(N^{1-1/2k})$ queries. They also show that for $k=1$ this upper bound is tight up to a logarithmic factor, for a testing problem called ``Forrelation.'' In that problem, roughly, the input consists of two Boolean functions $f$ and~$g$, each on $\ell$-bit inputs so the total input length is $N=2\cdot 2^\ell$ bits, such that $g$ is either strongly or weakly correlated with the Fourier transform of~$f$ (i.e., $g(x)=\mbox{sign}(\widehat{f}(x))$ either for most $x$ or for roughly half of the~$x$). They show that this problem can be tested with one quantum query, whereas classical testers need $\Omega(\sqrt{N}/\log N)$ queries.\footnote{The lower bound improves an earlier $N^{1/4}$ bound of Aaronson~\cite{aaronson:bqpph}, which constituted the first $O(1)$ vs $N^{\Omega(1)}$ separation for quantum vs classical propery testing.}

Hence for binary alphabets the answer to the above question is negative: everything that can be tested with $k=O(1)$ quantum queries can be tested with $O(N^{1-1/2k})=o(N)$ classical queries.  This classical upper bound can be extended to small alphabets, but the question remains open for instance when the alphabet size is~$N$.

\subsubsection{Using quantum counting}\label{sec:usinggrover}

Grover's quantum search algorithm~\cite{grover:search} can be used to find the index~$i$ of a 1-bit in $x\in\01^N$ (i.e., $x_i=1$) with high probability, using $O(\sqrt{N})$ queries.  We will not describe the algorithm here, but just note that it can be modified to also \emph{estimate}, for given $S\subseteq[m]$, the number of occurrences of elements from $S$ in a string~$x\in[m]^N$, using a number of queries that is much less than would be needed classically.
More precisely, we have the following ``quantum approximate counting'' lemma, which follows from the work of Brassard et al.~\cite[Theorem~13]{bhmt:countingj}:

\begin{lem}\label{lem:quantumcouting}
There exists a constant $C$ such that for every set $S\subseteq [m]$ and every positive integer~$T$, there is a quantum algorithm that makes $T$ queries to input~$x\in[m]^N$ and, with probability at least~$2/3$, outputs an estimate $p'$ to $p = |\{i: x_i\in S\}|/N$ such that $|p'-p|\leq C(\sqrt{p}/T+1/T^2)$.
\end{lem}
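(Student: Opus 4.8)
The plan is to derive this approximate-counting statement from the Quantum Amplitude Estimation theorem of Brassard et al.\ (their Theorem~13), which is the natural engine here. First I would set up the problem as an amplitude-estimation instance. Define the Boolean predicate $\chi:[N]\to\{0,1\}$ by $\chi(i)=1$ iff $x_i\in S$; one query to $\chi$ costs a constant number of queries to $x$, since we query $x_i$ and then classically check membership in $S$ (this check uses no queries). Build the algorithm $\mathcal{A}$ that prepares the uniform superposition $\frac{1}{\sqrt N}\sum_{i}\ket{i}$ over indices. Then the probability of measuring a ``good'' index (one with $\chi(i)=1$) is exactly $p=|\{i:x_i\in S\}|/N$. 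This casts $p$ as precisely the amplitude-squared quantity that amplitude estimation is designed to approximate.

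Next I would invoke the Brassard et al.\ amplitude-estimation guarantee. Their result says that with $T$ applications of $\mathcal{A}$ and its inverse (interleaved with reflections built from $\chi$), one obtains an estimate $\tilde a$ of $a=p$ satisfying, with probability at least $8/\pi^2$,
\[
|\tilde a - a| \;\leq\; 2\pi\frac{\sqrt{a(1-a)}}{T} + \frac{\pi^2}{T^2}.
\]
Each of the $T$ iterations uses $O(1)$ queries to $\chi$, hence $O(1)$ queries to $x$, so the total query count is $O(T)$; absorbing the constant factor into $T$ (or into the success-probability boosting step) gives exactly $T$ queries as claimed. To reach the stated bound, I would note $\sqrt{a(1-a)}\leq\sqrt{a}=\sqrt{p}$, so the first term is $O(\sqrt{p}/T)$ and the second is $O(1/T^2)$, which yields $|\tilde a-p|\leq C(\sqrt p/T + 1/T^2)$ for a suitable absolute constant $C$.

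Finally I would address the success probability. Amplitude estimation as stated succeeds with probability $8/\pi^2\approx 0.81$, which already exceeds $2/3$, but one can equivalently run the routine a constant number of times and take the median estimate to push the success probability as high as desired while only changing the constant $C$ and the hidden constant in the query count. This confirms the ``with probability at least $2/3$'' requirement. \emph{The main obstacle} is not any deep technical point but rather the careful bookkeeping of constants: one must verify that the $2\pi$ and $\pi^2$ factors from the phase-estimation analysis, together with the constant cost of simulating $\chi$ by queries to $x$ and the constant-factor overhead from median-boosting, all fold cleanly into the single absolute constant $C$, and that the bound $\sqrt{a(1-a)}\le\sqrt a$ is applied in the right direction. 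No additional structural idea is needed beyond recognizing $p$ as the squared amplitude of the uniform-superposition state under the membership predicate.
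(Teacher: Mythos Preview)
Your proposal is correct and matches the paper's approach exactly: the paper does not give a detailed proof but simply states that the lemma ``follows from the work of Brassard et al.~[Theorem~13],'' and your derivation is precisely the standard unpacking of that citation---cast $p$ as the good-state probability of the uniform superposition, apply their amplitude-estimation error bound $|\tilde a - a|\le 2\pi\sqrt{a(1-a)}/T + \pi^2/T^2$, bound $\sqrt{a(1-a)}\le\sqrt{p}$, and note $8/\pi^2>2/3$. Nothing further is needed.
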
 

We now describe an application of quantum counting to property testing, namely to testing whether two probability distributions are equal or $\eps$-far from each other in total variation distance. 

\begin{quote}
{\bf Equal distributions property}: $\Prop_{distribution}=\{(p,p) : p\mbox{ is a distribution on }[m]\}$.
\end{quote}

Our distance measure on the set of pairs of distributions will be the sum of the total variation distances: $d((p,q),(p',q'))=\norm{p-p'}_{tvd} + \norm{q-q'}_{tvd}$, where the \emph{total variation distance} between two distributions is $\norm{p-p'}_{tvd}:=\frac{1}{2}\sum_j|p(j)-p'(j)|$).  Note that a pair of distributions $(p,q)$ will be $\eps$-far from $\Prop_{distribution}$ if and only if $\norm{p-q}_{tvd}\geq\eps$.

There are different ways in which the distributions could be ``given'' to the tester, but in this section each distribution will be given as an input $x\in[m]^N$.  This naturally induces a probability distribution ${\cal D}_x$ on $[m]$ according to the relative frequencies of the different elements:
$$
{\cal D}_x(j)=\frac{|\{i:x_i=j\}|}{N}.
$$
We can obtain a sample according to~${\cal D}_x$ by just querying $x$ on a uniformly random index~$i$. Assuming the distribution is given in this way is quite natural in the setting of property testing,  where our input is usually a very long string~$x$, much too long to inspect each of its elements. Note that ${\cal D}_x$ does not change if we permute the elements of~$x$; it just depends on the relative frequencies. Also note that Lemma~\ref{lem:quantumcouting} can be used to estimate the probability of $S\subseteq[m]$ under ${\cal D}_x$.

Suppose we are given two distributions ${\cal D}_x$ and ${\cal D}_y$ on $[m]$ (the distributions are given in the form of two inputs $x,y\in[m]^N$),
and we want to test whether these two distributions are equal or $\eps$-far in total variation distance.
Batu et al.~\cite{batu13} exhibited classical testers for this using $O((m/\eps)^{2/3}\log m)$ queries\footnote{All these classical bounds are stated in terms of number of samples rather than number of queries, 
but it is not hard to see that these two complexity measures are equivalent here.}, 
and Valiant \cite{valiant11} proved an almost matching lower bound of $\Omega(m^{2/3})$ for constant $\eps$. These bounds have both recently been improved by Chan et al.~\cite{chan13} to $\Theta(m^{2/3}/\eps^{4/3})$, which is tight for all $\eps\geq m^{-1/4}$.
Bravyi et al.~\cite{BHH10j} showed that quantum testers can do better in terms of their dependence on $m$:

\begin{thm}[Bravyi et al.~\cite{BHH10j}]
There is a quantum tester to test if two given distributions on $[m]$ are equal or $\eps$-far using $O(\sqrt{m}/\eps^8)$ queries.
\end{thm}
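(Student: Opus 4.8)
The plan is to reduce equality testing in total variation distance to a \emph{one-sided} question about the $\ell_2$ distance, and then to estimate that $\ell_2$ distance using the amplitude-estimation primitive that underlies Lemma~\ref{lem:quantumcouting}. Write $p_j={\cal D}_x(j)$, $q_j={\cal D}_y(j)$. If $p=q$ then $\norm{p-q}_2=0$, whereas if $\norm{p-q}_{tvd}\geq\eps$ then Cauchy--Schwarz gives $\norm{p-q}_1\geq 2\eps$ and hence $\norm{p-q}_2^2\geq 4\eps^2/m$. So it suffices to distinguish $\norm{p-q}_2^2=0$ from $\norm{p-q}_2^2\geq 4\eps^2/m$, i.e.\ to estimate the squared $\ell_2$ distance to additive accuracy $\Theta(\eps^2/m)$. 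The key quantitative gain is that Lemma~\ref{lem:quantumcouting} estimates a probability $\pi$ with error $C(\sqrt{\pi}/T+1/T^2)$: for the ``birthday-scale'' values $\pi\sim 1/m$ that arise as collision probabilities of near-uniform distributions, reaching additive accuracy $\Theta(\eps^2/m)$ costs only $T=\widetilde{O}(\sqrt{m}/\poly(\eps))$ queries, which is the source of the $\sqrt{m}$ speed-up over the classical $m^{2/3}$.

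The difficulty is that $\norm{p-q}_2^2=\norm{p}_2^2+\norm{q}_2^2-2\langle p,q\rangle$ is a difference of collision probabilities that can be \emph{large} when the distributions are concentrated, in which case amplitude-estimating each term to accuracy $\eps^2/m$ would be too expensive. I would handle this with a heavy/light decomposition together with a flattening step. First, fix a threshold $\tau$ and call $j$ heavy if $p_j\geq\tau$ or $q_j\geq\tau$, collecting the (at most $O(1/\tau)$) heavy labels by sampling $x$ and $y$ at random indices and pinning down each heavy $p_j,q_j$ directly via Lemma~\ref{lem:quantumcouting} with $S=\{j\}$; their exact contribution to $\norm{p-q}_2^2$ is then computed. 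Second, partition the remaining light elements into $O(\log m)$ buckets according to probability scale, so that within each bucket the conditional distributions are nearly \emph{flat}; on a flat bucket of support size $s_k\leq m$ the relevant collision probabilities are $\Theta(1/s_k)$, so a collision-based $\ell_2$ estimator built from the oracles $O_x,O_y$ — a unitary that prepares two samples, flags the event that they collide and both land in the bucket, and is then fed to the amplitude-estimation routine behind Lemma~\ref{lem:quantumcouting} — runs in $\widetilde{O}(\sqrt{m}/\poly(\eps))$ queries. The self-minus-cross collision statistic yields $\norm{(p-q)|_{\text{bucket}}}_2^2$, and summing over buckets reconstructs the light part of $\norm{p-q}_2^2$.

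The tester accepts exactly when the heavy contribution and every bucket's $\ell_2^2$ estimate are essentially zero; completeness is immediate, and for soundness an $\eps$-far pair forces $\norm{p-q}_2^2\geq 4\eps^2/m$, which must surface either in the heavy contribution or in some bucket. The main obstacle, and the reason for the unfavourable $\eps^{-8}$ dependence, is the quantitative balancing of this pipeline: one must choose $\tau$ and the per-element and per-bucket target accuracies so that (i) each small collision probability is amplitude-estimated to additive accuracy $\Theta(\eps^2/m)$ — a \emph{relative} accuracy of order $\eps^2$ on quantities of size $1/m$ — at cost $\widetilde{O}(\sqrt{m})$, (ii) the $O(1/\tau)$ heavy estimates and $O(\log m)$ buckets can be union-bounded after confidence amplification, and (iii) the errors aggregate correctly through the $\sqrt{m}$ loss of the $\ell_1\to\ell_2$ conversion and the flattening overhead. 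Composing these nested polynomial losses and verifying that the total is $O(\sqrt{m}/\eps^8)$ is the crux of the argument; the only place the stated primitive must be extended is in replacing the single set-membership event of Lemma~\ref{lem:quantumcouting} by a two-sample collision event, which follows from the same amplitude-estimation result of Brassard et al.\ on which the lemma itself rests.
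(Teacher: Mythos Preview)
Your approach is genuinely different from the paper's, and the paper's route is substantially simpler. The paper does \emph{not} go through $\ell_2$ distance, heavy/light decomposition, or bucketing at all. Instead, it exploits the identity
\[
\norm{{\cal D}_x-{\cal D}_y}_{tvd}=\E_{j\sim\frac{1}{2}({\cal D}_x+{\cal D}_y)}\left[\frac{|{\cal D}_x(j)-{\cal D}_y(j)|}{{\cal D}_x(j)+{\cal D}_y(j)}\right],
\]
so the tester simply (i) samples $j$ from the mixture with one query, (ii) estimates ${\cal D}_x(j)$ and ${\cal D}_y(j)$ individually via Lemma~\ref{lem:quantumcouting} with $S=\{j\}$ and $T=O(\sqrt{m}/\eps^6)$, (iii) outputs the ratio, and (iv) averages $O(1/\eps^2)$ independent runs. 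The random variable in the expectation is bounded in $[0,1]$, so no flattening or cancellation of large terms is needed; the whole analysis fits in a few lines, and the $\eps^{-8}$ arises transparently as $\eps^{-6}\cdot\eps^{-2}$.

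Your plan, by contrast, has a real gap beyond the admitted incompleteness of the $\eps$-bookkeeping: the bucketing step ``partition the remaining light elements into $O(\log m)$ buckets according to probability scale'' presupposes knowledge of the individual $p_j,q_j$ that you do not have. You cannot flag the event ``both samples land in bucket $k$'' inside the amplitude-estimation unitary without first learning, for each $j$, which bucket it belongs to --- and doing that for all light elements would already cost far more than $\sqrt{m}$ queries. Classical $\ell_2$-based testers avoid this by using the raw collision statistic directly (no bucketing), but then you are back to the problem you correctly flagged: if $\norm{p}_2^2$ and $\norm{q}_2^2$ are large, amplitude-estimating each to additive accuracy $\eps^2/m$ costs $\Omega(m/\eps^2)$, not $\sqrt{m}$. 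Your heavy-element step does not fix this either, since after removing $O(1/\tau)$ heavy elements the residual $\ell_2^2$ can still be as large as $\tau$, not $O(1/m)$. So as written the pipeline does not close; the paper's identity sidesteps all of this by never forming a difference of large quantities.
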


The dependence on $\eps$ can probably be improved with more technical effort.

\begin{proof}[Proof sketch]
Bravyi et al.~\cite{BHH10j} actually showed something stronger, namely that the total variation distance between two distributions can be estimated up to small additive error~$\eps$ using $O(\sqrt{m}/\eps^8)$ quantum queries; this clearly suffices for testing.  We sketch their idea here.
Consider the following random process:
\begin{quote}
Sample $j\in[m]$ according to ${\cal D}=\frac{1}{2}({\cal D}_x+{\cal D}_y)$.\\
Output $\displaystyle\frac{|{\cal D}_x(j)-{\cal D}_y(j)|}{{\cal D}_x(j)+{\cal D}_y(j)}$
\end{quote}
It is easy to see that the expected value of the output of this process is exactly the total variation distance between ${\cal D}_x$ and ${\cal D}_y$,
so it suffices to approximate that expected value. We sample $j$ according to ${\cal D}$ (which costs just one query), use the 
quantum algorithm of Lemma~\ref{lem:quantumcouting} with $S=\{j\}$ and $T=O(\sqrt{m}/\eps^6)$ queries to approximate both ${\cal D}_x(j)$ and ${\cal D}_y(j)$, and output the absolute difference between these two approximations divided by their sum. 
Bravyi et al.~\cite{BHH10j} show that repeating this $O(1/\eps^2)$ times and taking the average gives, with probability at least~$2/3$, an $\eps$-approximation of the expected value ${{\cal D}_x-{\cal D}_y}_{tvd}$ of the above random process.
\end{proof}

A second problem is where we fix one of the two distributions, say to the uniform distribution on~$[m]$ 
(assume $m$ divides $N$ so we can properly ``fit'' this distribution in $x\in[m]^N$).
Goldreich and Ron~\cite{gr} showed a classical testing lower bound of $\Omega(\sqrt{m})$ queries for this,
and Batu et al.~\cite{BFF+01} proved a nearly tight upper bound of $\widetilde{O}(\sqrt{m})$ queries.
Bravyi et al.~\cite{BHH10j}, and independently also Chakraborty et al.~\cite{cfmw:qproptesting}, 
showed that testing can be done more efficiently in the quantum case:

\begin{thm}[Bravyi et al.~\cite{BHH10j}, Chakraborty et al.~\cite{cfmw:qproptesting}]\label{th:qtestuniform}
There is a quantum tester to test if a given distribution on $[m]$ equals or is $\eps$-far from the uniform distribution on $[m]$,
using $O(m^{1/3}/\eps^2)$ quantum queries.
\end{thm}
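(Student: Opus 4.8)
The plan is to reduce the testing problem to \emph{estimating the collision probability} of ${\cal D}_x$, and then to show this quantity can be approximated to the required accuracy using few quantum queries. Write $p={\cal D}_x$, let $u$ denote the uniform distribution on $[m]$, and set $\nu=\norm{p}_2^2=\sum_{j\in[m]}p(j)^2$. Since $\sum_j(p(j)-1/m)^2=\nu-1/m$, we have $\nu=1/m+\norm{p-u}_2^2$, so $\nu=1/m$ exactly when $p=u$. If instead $p$ is $\eps$-far from uniform, then $\norm{p-u}_{tvd}\geq\eps$ gives $\norm{p-u}_1\geq 2\eps$, and Cauchy--Schwarz ($\norm{p-u}_1\leq\sqrt m\,\norm{p-u}_2$) yields $\norm{p-u}_2^2\geq 4\eps^2/m$, hence $\nu\geq(1+4\eps^2)/m$. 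Thus it suffices to estimate $\nu$ to additive accuracy $\Theta(\eps^2/m)$ and threshold against the known value $1/m$.

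Next I would put $\nu$ into a form amenable to quantum counting. Observe $\nu=\sum_j p(j)\,p(j)=\E_{j\sim p}[\,p(j)\,]$, so a natural estimator queries $x$ at a uniformly random index to obtain a sample $j=x_i\sim p$ (one query) and then invokes Lemma~\ref{lem:quantumcouting} with $S=\{j\}$ to estimate $p(j)={\cal D}_x(j)$. Equivalently, $\nu$ is exactly the probability that two independent uniform indices $i,i'$ satisfy $x_i=x_{i'}$, i.e.\ the acceptance probability of a two-query \emph{collision test}. Applying plain amplitude estimation to this test estimates $\nu$ with $T$ queries to additive error $O(\sqrt\nu/T)=O(1/(\sqrt m\,T))$; forcing this below $\eps^2/m$ needs $T=\Theta(\sqrt m/\eps^2)$, which only reproduces the $\sqrt m$ bound of the two-distribution tester above and does \emph{not} reach $m^{1/3}$.

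The improvement to $m^{1/3}$ must therefore come from estimating the collision statistic with a genuinely quantum collision-counting subroutine (of the Brassard--H\o yer--Tapp / element-distinctness type), which locates and counts collisions among $\Theta(\sqrt m)$ effective samples using only $\widetilde O(m^{1/3})$ queries, rather than the $\Theta(\sqrt m)$ samples a classical birthday-style count would require. Concretely, I would (i) split $[m]$ into \emph{heavy} elements (those with $p(j)$ substantially above $1/m$), of which there can be only few, and estimate their weights directly with Lemma~\ref{lem:quantumcouting}; and (ii) estimate the light-element contribution to $\nu$ via the quantum collision-counting routine, tuning its internal precision so that the total additive error in $\nu$ stays $O(\eps^2/m)$. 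Boosting to the relative precision $\eps^2$ contributes the $1/\eps^2$ factor, for an overall $O(m^{1/3}/\eps^2)$ queries.

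The main obstacle is precisely step (ii): proving that the collision-counting estimator attains additive error $O(\eps^2/m)$ on $\nu$ with only $\widetilde O(m^{1/3})$ queries. This demands a careful variance analysis of the number of collisions seen, together with a correct setting of the effective sample size and of the quantum walk's precision, because the two cases to be separated ($\nu=1/m$ versus $\nu\geq(1+4\eps^2)/m$) differ only by the multiplicative factor $1+4\eps^2$, and the naive estimators collapse back into the $\sqrt m$ regime. Controlling this variance while keeping the query count at the quantum collision exponent $m^{1/3}$ is the crux of the argument.
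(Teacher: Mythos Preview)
Your reduction to the collision probability $\nu=\|p\|_2^2$ and the gap $\nu=1/m$ versus $\nu\ge(1+4\eps^2)/m$ is correct, and you are right that naive amplitude estimation of the two-index collision event only gives $O(\sqrt{m}/\eps^2)$. But from there your proposal stalls: you wave at a BHT/element-distinctness-style collision counter and a heavy/light split, yet you yourself flag that you do not know how to control the variance of such an estimator at the $m^{1/3}$ scale while preserving additive error $O(\eps^2/m)$ in $\nu$. That is not a detail; it is the whole difficulty, and the machinery you invoke (quantum walks for collision finding) is built to \emph{detect} a collision, not to estimate collision \emph{probability} to multiplicative precision $1+O(\eps^2)$. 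So as written the proposal has a genuine gap at exactly the step that produces the $m^{1/3}$ exponent.

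The paper's route avoids all of this with a much simpler ``sample then count'' trick, using only Lemma~\ref{lem:quantumcouting}. Draw $r=m^{1/3}$ uniformly random indices and read their values to form a set $S\subseteq[m]$ (cost $r$ queries). Now consider the single quantity $p=\Pr_{i}[x_i\in S]$. If ${\cal D}_x$ is uniform and the $r$ sampled values are distinct (which happens with high probability), then $p=r/m=m^{-2/3}$. If ${\cal D}_x$ is $\eps$-far from uniform, one shows that with high probability $p\ge(1+c\eps^2)r/m$ for a constant $c>0$; morally this is because the expected mass picked up by $r$ samples drawn from $p$ is about $r\nu$, so the $\ell_2$ gap you already computed gets \emph{amplified} by the factor $r$. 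Now apply Lemma~\ref{lem:quantumcouting} with $T=\Theta(m^{1/3}/\eps^2)$: the additive error is $O(\sqrt{p}/T)=O(m^{-1/3}\cdot \eps^2 m^{-1/3})=O(\eps^2 r/m)$, which is exactly small enough to separate $r/m$ from $(1+c\eps^2)r/m$. Total cost: $r+T=O(m^{1/3}/\eps^2)$.

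The point you were missing is that you do not need to estimate $\nu$ directly. By first spending $m^{1/3}$ queries to fix a random set $S$, you convert the problem into a \emph{single} counting problem whose target value is $\Theta(m^{-2/3})$ rather than $\Theta(m^{-1})$; the larger target makes quantum counting quadratically cheaper, and that is where the $m^{1/3}$ comes from. No collision-finding, no quantum walks, no heavy/light decomposition.
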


\begin{proof}[Proof sketch]
Pick a set $R\subseteq[N]$ of $r=m^{1/3}$ indices uniformly at random, and query its elements. 
If ${\cal D}_x$ is uniform then it is very likely that all values $\{x_i\}_{i\in R}$ are distinct, 
so if there is some collision then we can reject immediately.
Otherwise, let $S=\{x_i: i\in R\}$ be the $r$ distinct results, and define $p=|\{i: x_i\in S\}|/N$.
If ${\cal D}_x$ is uniform then $p=r/m=1/m^{2/3}$, but \cite[Lemma~13]{cfmw:qproptesting} 
shows that if ${\cal D}_x$ is $\eps$-far from uniform then $p$ will be noticeably higher: 
there is a constant $c>0$ such that with high probability $p\geq (1+c\eps^2)r/m$.

Now we use the quantum algorithm of Lemma~\ref{lem:quantumcouting} with $T=4Cm^{1/3}/c\eps^2$ queries 
to obtain (with high probability) an estimate $p'$ of $p$ within additive error $|p'-p|\leq  C(\sqrt{p}/T + 1/T^2)$.
We accept if $p'\leq (1+c\eps^2/2)r/m$, and reject otherwise.
If $p=r/m=1/m^{2/3}$ then the additive error is at most 
$C(c\eps^2/4Cm^{2/3} + c^2\eps^4/16C^2m^{2/3})\leq \frac{c\eps^2}{2} \cdot \frac{r}{m}$, so then we will accept correctly.
If $p\geq (1+c\eps^2)r/m$ then it is easy to show that $p'\geq(1+c\eps^2/2)r/m$, so then we will reject correctly.
\end{proof}

Both Bravyi et al.~\cite{BHH10j} and Chakraborty et al.~\cite{cfmw:qproptesting} showed that $\Omega(m^{1/3})$ quantum queries are also necessary, so the above result is essentially tight; the lower bound follows from a reduction from the collision problem~\cite{aaronson&shi:collision}.
Bravyi et al.~\cite{BHH10j} also exhibited a quantum tester for whether two distributions are equal or of disjoint support (i.e., orthogonal), using $O(m^{1/3})$ quantum queries.
Chakraborty et al.~\cite{cfmw:qproptesting} extended Theorem~\ref{th:qtestuniform} to testing equality to \emph{any} fixed distribution (not just the uniform one), at the expense of a polylog factor in the number of queries. They in turn used equality-testing to obtain better quantum testers for graph isomorphism.


\subsubsection{Using Ambainis's algorithm}\label{sec:usingambainis}

Ambainis's element distinctness algorithm~\cite{ambainis:edj} acts on an input $x\in[m]^N$, and finds
(with high probability) a pair of distinct indices such that $x_i=x_j$ if such a pair exist, and reports ``no collision'' otherwise. 
It uses $O(N^{2/3})$ queries, which is optimal~\cite{aaronson&shi:collision}.
This algorithm spawned a large class of algorithms based on \emph{quantum walks} (see~\cite{santha:qrwsurvey} for a survey).

Ambainis et al.~\cite{acl:testing} use the element distinctness algorithm to give better quantum testers for certain \emph{graph properties}.
Graph properties have some amount of symmetry: they are invariant under relabelling of vertices. 
Problems with ``too much'' symmetry are known not to admit exponential quantum speed-up in the query complexity model~\cite{aaronson11j}, and the symmetry inherent to graph properties makes them an interesting test case for the question of how symmetric the problems can be for which we do obtain a significant quantum advantage.
Ambainis et al.~\cite{acl:testing} use the element distinctness algorithm
to give $\widetilde{O}(N^{1/3})$-query quantum testers for the properties of bipartiteness and being an expander in bounded-degree graphs. 
It is known that for classical testers, $\widetilde{\Theta}(\sqrt{N})$ queries are necessary and sufficient to test these properties~\cite{gr,gr:boundeddegree}.
Together with the graph isomorphism tester mentioned briefly at the end of Section~\ref{sec:usinggrover},
these are the only quantum results we are aware of for testing graph properties.
In contrast, graph properties have been one of the main areas of focus in classical property testing.

Let us describe the results of~\cite{acl:testing} a bit more precisely.
The object to be tested is an $N$-vertex graph~$G$ of degree~$d$, so each vertex has at most~$d$ neighbors.
We think of $d$ as a constant and will absorb the dependence of the bounds on~$d$ into the constant factor.  The input is given as an \emph{adjacency list}.
Formally, it corresponds to an $x\in([N]\cup\{*\})^{N\times d}$.  The entries of $x$ are indexed by a pair of a vertex $v\in[N]$ and a number $i\in[d]$,
and $x_{v,i}$ is the $i$th neighbor of vertex~$v$; $x_{v,i}=*$ in case $v$ has fewer than~$i$ neighbors. The \emph{distance} between two graphs given as adjacency lists is defined to be the minimal number of edges one most change in the first graph to obtain the second.

A graph is {\bf Bipartite} if its set of vertices can be partitioned into two disconnected sets,
and is an {\bf Expander} if there is a constant $c>0$ such that every set $S\subseteq[N]$ of at most $N/2$ vertices has at least $c|S|$ neighbors outside of~$S$.%
\footnote{Equivalently, if there is a constant gap between the first and second eigenvalue of $G$'s normalized adjacency matrix.  A crucial property of an expander is that the endpoint of a short ($O(\log N)$-step) random walk starting from any vertex is close to uniformly distributed over $[N]$.  We refer to~\cite{hlw:expander} for much more background on expander graphs and their many applications.}

\begin{thm}[Ambainis et al.~\cite{acl:testing}]
There exist quantum testers for {\bf Bipartite} and {\bf Expander} using $\widetilde{O}(N^{1/3})$ queries.
\end{thm}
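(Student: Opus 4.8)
The plan is to reduce both testing problems to the element distinctness primitive by exhibiting, for each property, a classical tester whose core subroutine is a search for a suitable ``collision'' pattern, and then to replace that search by Ambainis's $O(N^{2/3})$-query quantum walk. The crucial point is that the classical testers of Goldreich and Ron for bipartiteness and expansion both operate by performing many short random walks (of length $O(\log N)$) from randomly chosen start vertices and looking for a coincidence: for {\bf Bipartite}, one runs pairs of walks and searches for two walks of \emph{different parities} that end at the same vertex (a ``bad'' odd cycle witness); for {\bf Expander}, one searches for two walks from the same source that \emph{collide} at a common endpoint more often than the expander mixing property would allow. In both cases the tester's acceptance is governed by the (non)existence of a collision among the endpoints of a collection of walks, which is exactly the structure that element distinctness solves quadratically faster.

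Concretely, I would first fix the classical walk-based tester and extract its collision-detection step as a black box: sample $k = \widetilde{\Theta}(\sqrt{N})$ walk-endpoints (for bipartiteness, tagged with walk parity), then decide based on whether two of them coincide. Classically this requires $\widetilde{\Theta}(\sqrt{N})$ walks to expect a collision, matching the $\widetilde{\Theta}(\sqrt{N})$ classical bound. The quantum step is to define an implicit input for element distinctness: index $i \in [k]$ maps to the endpoint $v_i$ of the $i$th walk (its computation costs $O(\log N)$ queries to the adjacency list, which we absorb into the $\widetilde{O}$). Two indices ``collide'' precisely when the endpoints agree (and, for bipartiteness, the parities disagree). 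Running element distinctness on this derived input of size $k = \widetilde{\Theta}(\sqrt{N})$ uses $O(k^{2/3}) = \widetilde{O}(N^{1/3})$ walk-evaluations, i.e.\ $\widetilde{O}(N^{1/3})$ queries after accounting for the $O(\log N)$ cost of each walk. One accepts or rejects according to whether a collision is found, with the completeness/soundness guarantee inherited verbatim from the Goldreich--Ron analysis.

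The main obstacle, and the part requiring genuine care rather than a routine citation, is that element distinctness as stated finds an \emph{arbitrary} collision, whereas the classical testers need a collision of a specific \emph{type} (differing parity for {\bf Bipartite}; a collision whose frequency exceeds the uniform-mixing baseline for {\bf Expander}). For bipartiteness this is handled by folding the parity bit into the value $x_i$ so that only parity-mismatched coincidences register as equalities—one defines $x_i = (v_i, \text{parity}_i)$ and searches for indices with equal $v_i$ but the construction ensures genuine odd-cycle witnesses dominate. For the expander case the difficulty is sharper: the test is statistical, distinguishing a collision \emph{rate} of roughly $k^2/N$ (uniform mixing) from a noticeably higher rate when a sparse cut exists, so one does not merely want \emph{a} collision but an \emph{estimate of how many}. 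I expect the correct route here is to combine element distinctness with the approximate-counting ideas behind Lemma~\ref{lem:quantumcouting}, or to argue that the \emph{gap} between the two rates is large enough that a single found collision (on a slightly enlarged sample) already certifies the non-expander case with constant probability. Verifying that the Goldreich--Ron threshold analysis survives this quantum speed-up—in particular that the constant-factor separation in collision probabilities is preserved when $k$ is reduced to the value for which element distinctness is efficient—is the technical heart of the argument and the step I would expect to consume most of the effort.
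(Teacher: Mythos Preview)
Your approach is essentially the same as the paper's: reduce to collision detection among $\widetilde{\Theta}(\sqrt{N})$ endpoints of short random walks and apply Ambainis's element distinctness algorithm to get $\widetilde{O}(N^{1/3})$. Two points deserve tightening. First, for \textbf{Bipartite}, your encoding $x_i=(v_i,\text{parity}_i)$ does the opposite of what you want: equalities of $x_i$ correspond to \emph{matched} parity, not mismatched. The clean fix is to split the walks into an even-parity list and an odd-parity list and run the claw-finding variant of element distinctness between the two lists; this is what the paper implicitly does. Second, for \textbf{Expander}, your alternative of ``a single found collision already certifies the non-expander case'' cannot work: with $k=\Theta(\sqrt{N})$ endpoints even a true expander produces $\Theta(1)$ collisions in expectation, so mere existence of a collision carries no signal. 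The paper indeed takes your first option and uses a variant of Ambainis's algorithm to \emph{approximately count} collisions among the $c\sqrt{N}$ endpoints, rejecting when the count exceeds the expander baseline; you should commit to that route.
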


\begin{proof}[Proof sketch]
At a high level, the optimal \emph{classical} testers for both properties look for collisions in a set of roughly $\sqrt{N}$ elements.  Using Ambainis's algorithm, this can be done in roughly $N^{1/3}$ queries. Let us see how this works for the two properties.

A bipartite graph has no odd cycles. In contrast, for a graph that is far from bipartite one can show the following.
Among roughly $\sqrt{N}$ short ($O(\log N)$-step) random walks from the same starting vertex~$v$, 
there is likely to be a pair that ``collides'' in the sense that one walk reaches a vertex~$w$ after an even number of steps and the other reaches the same vertex~$w$ after an odd number of steps. These two paths between $v$ and $w$ now form an odd cycle. 
Hence, fixing the randomness of such a classical tester, it suffices to detect such collisions in a string $x\in[m]^{c\sqrt{N}}$,
for some constant $c>0$, where the alphabet $[m]$ corresponds to short walks in the graph. 
A variant of Ambainis's algorithm can detect this in $O((c\sqrt{N})^{2/3})=O(N^{1/3})$ queries to~$x$. 
Each query to~$x$ corresponds to an $O(\log N)$-walk through the graph, 
so we use $O(N^{1/3}\log N)$ queries to the input graph in total.

In the case of expanders, a short random walk will quickly converge to the uniform distribution.
In contrast, for a graph that is far from any expander, such a walk will typically not be very close to uniform.  If we sample $k$ times from the uniform distribution
over some $s$-element set, the expected number of collisions is ${k\choose 2}/s$.  
In particular, for $k\approx\sqrt{2s}$ we expect to see one collision.
In contrast, $k$~samples from a non-uniform distribution give a higher expected number of collisions.
Hence if we do $c\sqrt{N}$ short random walks, for some constant~$c$, 
then the expected number of collisions among the $c\sqrt{N}$ endpoints is likely to be significantly smaller
for an expander than for a graph that is far from every expander.
Again we use a variation of Ambainis's algorithm, this time to approximately \emph{count} the number of collisions 
in an input $x\in[m]^{c\sqrt{N}}$, consisting of the endpoints of the $c\sqrt{N}$ random walks.
If this number is too high, we reject. This uses $\widetilde{O}(N^{1/3})$ queries to the graph. 
The technical details are non-trivial, but we will skip them here.
\end{proof}

Ambainis et al.\ also proved an $\widetilde{\Omega}(N^{1/4})$ quantum lower bound for testing expanders, using the polynomial lower bound method (see Section~\ref{sec:polynomialmethod}).
They were not able to show $N^{\Omega(1)}$ lower bounds for testing bipartiteness. This all leaves the following very interesting question open:

\begin{question}
Is there any graph property which admits an exponential quantum speed-up?
\end{question}

\subsubsection{Quantum speed-ups for testing group-theoretic properties}
Finally, a number of authors have considered quantum testers for properties of groups; we list these here without explaining them in detail.
\begin{itemize}
\item
Friedl et al.~\cite{fmsp:testhidden} give efficient quantum testers for the property of periodic functions on groups
(the testers are even \emph{time-efficient} for Abelian groups), as well as a few other group-theoretic properties. The testers are based on the use of the (Abelian and non-Abelian) quantum Fourier transform.
\item 
Friedl et al.~\cite{friedl05} exhibit an efficient ($\poly(\log N,1/\eps)$-query) \emph{classical} tester for the property of $N \times N$ multiplication tables corresponding to $N$-element Abelian groups, which is based on ``dequantizing'' a quantum tester. The distance used is the so-called ``edit distance.''
\item
Inui and Le Gall~\cite{inui&legall:testingj}, extending~\cite{friedl05}, exhibit an efficient ($\poly(\log N,1/\eps)$-query) quantum tester for the property of $N \times N$ multiplication tables corresponding to $N$-element \emph{solvable} groups. In this case, no efficient classical tester is known.
\item Le Gall and Yoshida~\cite{legall11j} give classical lower bounds on various group testing problems, which in particular demonstrate an exponential separation between the classical and quantum complexities of testing whether the input is an Abelian group generated by $k$ elements (where $k$ is fixed).
\end{itemize}

\subsection{Lower bounds}\label{sec:clasproplowerbounds}

Here we describe the main methods for obtaining \emph{lower bounds} on the number of queries that quantum property testers need. Most such lower bounds have been obtained using the so-called \emph{polynomial method}, but in principle the stronger \emph{adversary method} can give tight bounds for any property. At the end of this section we also describe an elegant approach for deriving \emph{classical} testing lower bounds from communication complexity, leaving its generalization to lower bounds on quantum testers as an open question.

\subsubsection{The polynomial method}
\label{sec:polynomialmethod}

The first lower bounds for quantum property testing were proven by Buhrman et al.~\cite{bfnr:qpropj}.
They were based on the polynomial method~\cite{bbcmw:polynomialsj}, which we now briefly explain.
The key property is:
\begin{quote}
The acceptance probability of a $T$-query quantum algorithm on input $x\in\01^N$ can be written as an $N$-variate multilinear polynomial $p(x)$ of degree~$\leq 2T$.
\end{quote}
This property can be generalized to non-Boolean inputs~$x$, but for simplicity we will assume $x\in\01^N$ in our presentation. 

Note that if we have a $T$-query quantum tester for some property $\Prop\subseteq\01^N$, then its acceptance probability $p$ is a degree-$2T$ polynomial~$p$
such that $p(x)\in[2/3,1]$ if $x\in \Prop$; $p(x)\in[0,1/3]$ if $x$ is far from~$\Prop$; and $p(x)\in[0,1]$ for all other~$x$.
The polynomial method derives lower bounds on the query complexity~$T$ from lower bounds on the minimal degree of such polynomials.

Our first application of this method is a result which is essentially from~\cite{bfnr:qpropj}. Informally, the result says the following: if we have a property $\Prop$ such that a (not necessarily uniform) random $x\in\Prop$ is indistinguishable from a random $N$-bit string if we only look at up to~$k$ bits, then the quantum query complexity of testing $\Prop$ is $\Omega(k)$.

\begin{thm}[Buhrman et al.~\cite{bfnr:qpropj}]
\label{thm:uniform}
Let $\Prop \subseteq \01^N$ be a property such that the number of elements $\eps$-close to~$\Prop$ is $< 2^{N-1}$. Let $\mathcal{D} = (p_z)$ be a distribution on $\01^N$ such that $p_z = 0$ for $z \notin \Prop$, and $\E_{\mathcal{D}} [z_{i_1} \dots z_{i_\ell}] = 2^{-\ell}$ for all choices of $\ell \le k$ distinct indices $i_1,\dots,i_\ell\in[N]$. Then every quantum $\eps$-property tester for $\Prop$ must make at least $(k+1)/2$ queries.
\end{thm}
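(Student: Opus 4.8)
The plan is to run the polynomial method and derive a contradiction from the assumption that too few queries suffice. Suppose toward contradiction that a quantum $\eps$-tester for $\Prop$ makes $T < (k+1)/2$ queries; since $T$ and $k$ are integers this means the degree bound $2T \le k$. By the key property of the polynomial method stated above, the acceptance probability of the tester on input $x \in \01^N$ is an $N$-variate multilinear polynomial
\[
p(x) = \sum_{S \subseteq [N],\, |S| \le 2T} c_S \prod_{i \in S} x_i
\]
of degree at most $2T \le k$. Because it is an $\eps$-tester, $p(z) \ge 2/3$ for every $z \in \Prop$, $p(x) \le 1/3$ for every $x$ that is $\eps$-far from $\Prop$, and $p(x) \in [0,1]$ for all remaining $x$.

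The heart of the argument is to show that such a low-degree $p$ cannot tell $\mathcal{D}$ apart from the uniform distribution $\mathcal{U}$ on $\01^N$. First I would compute the expectation under $\mathcal{D}$: since every monomial appearing in $p$ has degree $|S| \le 2T \le k$, the moment-matching hypothesis $\E_{\mathcal{D}}[z_{i_1}\cdots z_{i_\ell}] = 2^{-\ell}$ applies to each of them, giving
\[
\E_{\mathcal{D}}[p(z)] = \sum_{S:\,|S|\le 2T} c_S\, \E_{\mathcal{D}}\Big[\prod_{i\in S} z_i\Big] = \sum_{S:\,|S|\le 2T} c_S\, 2^{-|S|}.
\]
On the other hand, under $\mathcal{U}$ the coordinates are independent fair coins, so $\E_{\mathcal{U}}[\prod_{i\in S} x_i] = 2^{-|S|}$ as well, and the same sum appears: $\E_{\mathcal{U}}[p(x)] = \sum_{S:\,|S|\le 2T} c_S\,2^{-|S|} = \E_{\mathcal{D}}[p(z)]$. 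Since $\mathcal{D}$ is supported on $\Prop$, where $p \ge 2/3$, this chain of equalities yields $\E_{\mathcal{U}}[p(x)] = \E_{\mathcal{D}}[p(z)] \ge 2/3$.

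Next I would contradict this via the counting hypothesis. Writing $a$ for the number of strings that are $\eps$-close to $\Prop$, the assumption gives $a < 2^{N-1}$, so strictly more than $2^{N-1}$ strings are $\eps$-far, and on those $p \le 1/3$. Averaging over $\mathcal{U}$ and bounding the close strings by $1$ and the far strings by $1/3$,
\[
\E_{\mathcal{U}}[p(x)] \le \frac{1}{2^N}\left[(2^N - a)\cdot \tfrac{1}{3} + a\right] = \frac{1}{3} + \frac{2a}{3\cdot 2^N} < \frac{1}{3} + \frac{1}{3} = \frac{2}{3},
\]
using $a < 2^{N-1}$ in the last step. This strict bound $\E_{\mathcal{U}}[p] < 2/3$ contradicts $\E_{\mathcal{U}}[p] \ge 2/3$, so the supposition $2T \le k$ is false and $T \ge (k+1)/2$.

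The step I expect to be the crux is the moment-matching identity $\E_{\mathcal{D}}[p] = \E_{\mathcal{U}}[p]$: the whole point of the hypothesis on $\mathcal{D}$ is that it is tailored so that no polynomial of degree at most $k$ can separate $\mathcal{D}$ from uniform in expectation, which is exactly why a tester of degree $2T \le k$ is impossible. The remaining pieces—the degree bound from the polynomial method and the averaging argument from the counting hypothesis—are routine once this identity is in place; the only care needed is matching the integrality $2T \le k$ to the stated threshold $(k+1)/2$ and keeping the final inequality strict.
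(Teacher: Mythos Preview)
Your proof is correct and follows essentially the same approach as the paper: derive the degree-$2T$ polynomial from the polynomial method, use the moment-matching hypothesis to equate $\E_{\mathcal{D}}[p]$ with $\E_{\mathcal{U}}[p]$, and then contradict this via the counting bound on $\eps$-close strings. The computations and the handling of the strict inequality match the paper's argument line by line.
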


\begin{proof}
Suppose there is a quantum algorithm which tests $\Prop$ using $T$ queries, where $T < (k+1)/2$. Then by the polynomial method, its acceptance probability is given by a polynomial $p(z)$ of degree at most $2T \le k$. Intuitively, the reason the theorem holds is that such a degree-$k$ polynomial cannot be correlated with a $k$-wise independent distribution. To make this precise, assume towards a contradiction that the algorithm has success probability at least~$2/3$ on every input~$z$ that is in or $\eps$-far from~$\Prop$. Then
\[ 
\E_{z \sim \mathcal{D}}[p(z)] \ge \frac{2}{3}
\]
and, letting $\Prop_{close}$ be the set of~$z$ that are $\eps$-close to $\Prop$, and $\mathcal{U}$ the uniform distribution over $\01^N$, we have 
\[ 
\E_{z \sim \mathcal{U}}[p(z)] \le \frac{|\Prop_{close}|}{2^N} + \frac{1}{3} \left(1-\frac{|\Prop_{close}|}{2^N}\right) < \frac{2}{3}, 
\]
Write $p(z) = \sum_{S \subseteq [N]} \alpha_S m_S(z)$, where $m_S$ is the monomial $\prod_{i \in S} z_i$. We have
\[ 
\E_{z \sim \mathcal{D}}[p(z)] = \sum_{S \subseteq [N]} \alpha_S \E_{z \sim \mathcal{D}}[m_S(z)] = \sum_{S \subseteq [N]} \alpha_S 2^{-|S|} = \sum_{S \subseteq [N]} \alpha_S \E_{z \sim \mathcal{U}}[m_S(z)] = \E_{z \sim \mathcal{U}} [p(z)]. 
\]
We have obtained a contradiction, which completes the proof.
\end{proof}

A variant of Theorem \ref{thm:uniform}, which generalizes the claim to an underlying set $[m]^N$ ($m > 2$) but does not consider the property testing promise, was independently shown by Kane and Kutin~\cite{kane09}. It is apparently quite hard to satisfy the uniformity constraint of Theorem \ref{thm:uniform}; however, it can sometimes be achieved. For example, consider any property which can be expressed as membership of a linear code $\mathcal{C} \subseteq \F_2^N$. Such a linear code is described as the set $\{Mz:z \in \{0,1\}^\ell\}$ for some $N \times \ell$ matrix $M$. A code has \emph{dual distance}~$d$ if every codeword $c'$ in the dual code $\mathcal{C}^{\perp} := \{ z: z\cdot c = 0, \forall~c \in \mathcal{C} \}$ satisfies $|c'| \ge d$.
As Alon et al.~\cite{alon05} observe, it is well-known in coding theory that if $\mathcal{C}$ has dual distance~$d$, then any subset of at most $d-1$ of the bits of $\mathcal{C}$ are uniformly distributed. As the (easy) proof does not seem easy to find in the recent literature, we include it here.

\begin{thm}\cite[Chapter 1, Theorem 10]{macwilliams83}
Let $\mathcal{C} \subseteq \{0,1\}^N$ be a code with dual distance $d$. Then every $k < d$ bits of codewords in $\mathcal{C}$ are uniformly distributed.
\end{thm}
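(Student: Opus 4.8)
The plan is to use a character-sum (Fourier) argument over $\F_2$. Let $c$ denote a uniformly random codeword of $\mathcal{C}$, i.e.\ $c = Mz$ for $z$ drawn uniformly from $\01^\ell$, and fix a set $S \subseteq [N]$ of $k < d$ coordinates. Writing $c_S$ for the restriction of $c$ to $S$, I want to show that $c_S$ is uniform over $\01^k$. The key reduction is the standard fact that a distribution on $\01^k \cong \F_2^k$ is uniform if and only if all of its nonzero Fourier coefficients vanish; concretely, it suffices to prove that for every nonempty $T \subseteq S$,
$$\E_c\left[(-1)^{\sum_{i \in T} c_i}\right] = 0.$$

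First I would rewrite this character expectation in terms of the dual code. Let $a \in \01^N$ be the indicator vector of $T$, so that $a$ is supported inside $S$ and $a \cdot c = \sum_{i \in T} c_i \pmod 2$; then the quantity above is exactly $\E_c[(-1)^{a \cdot c}]$. Because $\mathcal{C}$ is a linear subspace and $c$ is uniform over it, the map $c \mapsto a \cdot c$ is an $\F_2$-linear functional on $\mathcal{C}$. If $a \in \mathcal{C}^{\perp}$ this functional is identically $0$, so the expectation equals $1$; otherwise it is a nontrivial linear functional, hence takes the values $0$ and $1$ equally often, so the expectation equals $0$.

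Second, I would invoke the dual-distance hypothesis to rule out the first case. Since $T$ is nonempty we have $a \neq 0$, and since $a$ is supported on $S$ its Hamming weight satisfies $|a| = |T| \le k < d$. But by definition of dual distance every nonzero element of $\mathcal{C}^{\perp}$ has weight at least $d$, so $a \notin \mathcal{C}^{\perp}$. Thus we are always in the second case and $\E_c[(-1)^{a \cdot c}] = 0$ for every nonempty $T \subseteq S$. This annihilates all nonzero Fourier coefficients of the distribution of $c_S$, so $c_S$ is uniformly distributed over $\01^k$, which is the claim.

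There is no real obstacle here; the argument is essentially a one-line observation once the setup is in place. The only points that need care are the clean statement of the Fourier criterion for uniformity and the elementary fact that a nontrivial $\F_2$-linear functional is balanced. The genuine crux — and the reason the dual distance enters — is the weight bound $|a| \le k < d$, which forces the low-weight character vector out of $\mathcal{C}^{\perp}$ and hence makes its character sum vanish.
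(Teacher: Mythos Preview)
Your proof is correct and rests on the same core observation as the paper's: a nonzero vector of weight less than $d$ cannot lie in $\mathcal{C}^\perp$. The only difference is packaging---the paper phrases this as ``any $k<d$ rows of the generator matrix $M$ are linearly independent, so the projected linear map is surjective and hence its output is uniform,'' whereas you phrase it via vanishing of nonzero Fourier coefficients; these are equivalent formulations of the same one-line argument.
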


\begin{proof}
Dual distance~$d$ implies that every set of $k \le d-1$ rows in the matrix~$M$ are linearly independent (otherwise such a linear combination would imply the existence of a Hamming weight $k < d$ vector~$z$ such that $Mz = 0^N$). So for each submatrix $M'$ formed by choosing $k$ rows from~$M$, all the rows of $M'$ are linearly independent, hence the output $M'z$ is uniformly distributed over $\{0,1\}^k$.
\end{proof}

Thus, if $\mathcal{C}$ has dual distance $d$, taking $\mathcal{D}$ to be uniform over $\mathcal{C}$ in Theorem~\ref{thm:uniform} gives an $\Omega(d)$ lower bound on the quantum query complexity of testing membership in $\mathcal{C}$. A natural example for which this result gives a tight lower bound is the Reed-Muller code $R(d,\ell)$. Each codeword of this code is a binary string of length $N=2^\ell$ obtained by evaluating a function $f:\{0,1\}^\ell \rightarrow \{0,1\}$, which can be written as a degree-$d$ polynomial in $\ell$ variables over $\F_2$, at every element $z \in \{0,1\}^\ell$. $R(d,\ell)$ is known to have dual distance $2^{d+1}$~\cite[Chapter 13]{macwilliams83}, so Theorem~\ref{thm:uniform} implies that any quantum algorithm testing the set of degree-$d$ polynomials in $\ell$ variables over $\F_2$ must make $\Omega(2^d)$ queries. In particular, this means that quantum algorithms obtain no asymptotic speed-up, in terms of their dependence on $d$, over the best classical algorithm for testing this property~\cite{alon05}. One can generalize this whole argument to derive quantum lower bounds for testing membership of various interesting properties corresponding to codes over $\F_q$, for $q > 2$; we omit the details. One example  of this approach outside of the property-testing setting is~\cite{kane09}, which proves bounds on the complexity of quantum interpolation of polynomials. Here the relevant code is the Reed-Solomon code.

Buhrman et al.\ also applied the polynomial method to show, by a counting argument, that \emph{most} properties do not have an efficient quantum property tester. Informally speaking, there are too many properties, and too few low-degree polynomials.

\begin{thm}[Buhrman et al.~\cite{bfnr:qpropj}]
Let $\Prop \subset \{0,1\}^N$ be chosen at random subject to $|\Prop| = 2^{N/20}$, and fix $\eps$ to be a small constant. Then, except with probability exponentially small in $N$, any quantum $\eps$-property tester for $\Prop$ must make $\Omega(N)$ queries.
\end{thm}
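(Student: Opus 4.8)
The plan is to prove this by a counting / union-bound argument of exactly the kind sketched just before the statement: there are astronomically many properties of size $2^{N/20}$, but comparatively few ``essentially different'' low-query testers, and each such tester can be correct on only a negligible fraction of properties. Write $s=|\Prop|=2^{N/20}$, suppose a quantum $\eps$-tester uses $T$ queries, and set $d=2T$. By the key property of the polynomial method quoted above, the tester's acceptance probability is a multilinear polynomial $p$ of degree at most $d$ in the $N$ input bits, and the only feature relevant to correctness is its \emph{accept set} $A=\{z\in\01^N:p(z)\ge 2/3\}$ (together with the requirement that every $\eps$-far input be rejected, i.e.\ have $p(z)\le 1/3$). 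I would bound (i) how many distinct accept sets $A$ can arise over all degree-$\le d$ polynomials, and (ii) for each fixed $A$, the probability over the random choice of $\Prop$ that the tester is correct, then multiply and sum.

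For step (ii), fix a behavior with accept set $A$. Correctness on $\Prop$ requires $\Prop\subseteq A$ (every element of $\Prop$ is accepted) and that every $z$ which is $\eps$-far from $\Prop$ is rejected. Since any $z$ outside $\mathrm{close}(\Prop)=\bigcup_{y\in\Prop}B(y,\eps N)$ is $\eps$-far, the reject constraint forces $A\subseteq\mathrm{close}(\Prop)$. Hence $|A|\le|\mathrm{close}(\Prop)|\le s\cdot\sum_{j\le\eps N}\binom{N}{j}\le 2^{(1/20+H(\eps))N}=:2^{\beta N}$, where $H$ is the binary entropy; choosing $\eps$ a small enough constant guarantees $\beta<1$. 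So a fixed behavior can be correct only when $|A|\le 2^{\beta N}$, and in that case the number of size-$s$ properties it correctly tests is at most $\binom{|A|}{s}\le\binom{2^{\beta N}}{s}$. Dividing by the total $\binom{2^N}{s}$ and using $\binom{a}{s}/\binom{b}{s}\le(a/b)^s$ yields a per-behavior probability of correctness at most $2^{-(1-\beta)Ns}$, which is doubly exponentially small in $N$.

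For step (i), I would count accept sets by a hyperplane-arrangement argument. A degree-$\le d$ multilinear polynomial is specified by $M=\sum_{j=0}^d\binom{N}{j}$ real coefficients, and for each of the $2^N$ inputs $z$ the equation $p(z)=2/3$ is a single hyperplane in this $M$-dimensional coefficient space; the accept set $A$ is constant on each cell of the resulting arrangement. The standard bound on the number of cells of $2^N$ hyperplanes in $\R^M$ gives at most $\bigl(e\,2^N/M\bigr)^M\le 2^{(N+2)M}$ distinct accept sets. For $T=\gamma N$ with $2\gamma\le 1/2$ we have $M\le 2^{H(2\gamma)N}$, so the number of behaviors is at most $2^{(N+2)2^{H(2\gamma)N}}$.

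Combining the two steps, the probability that a random $\Prop$ of size $s$ admits a correct $T$-query tester is at most $2^{(N+2)M}\cdot 2^{-(1-\beta)Ns}$, whose base-$2$ exponent is $(N+2)2^{H(2\gamma)N}-(1-\beta)N\,2^{N/20}$. Both terms are doubly exponential, so the only real task — and the main obstacle — is to fix the constants so that the savings term dominates: it suffices that $H(2\gamma)<1/20$, which holds for a sufficiently small constant $\gamma>0$ since $H(\delta)\to 0$ as $\delta\to 0$. For any such $\gamma$ the exponent tends to $-\infty$ (indeed like $-2^{\Omega(N)}$), so whenever $T<\gamma N$ the probability that a random $\Prop$ is testable is far below any fixed $2^{-\Omega(N)}$; equivalently, except with probability exponentially small in $N$, testing $\Prop$ requires $\Omega(N)$ queries. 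I expect the delicate point to be precisely this balancing of the two doubly-exponential quantities — keeping the entropy $H(2\gamma)$ from the dimension count below the exponent $1/20$ coming from $|\Prop|$ — rather than any single estimate, each of which is routine.
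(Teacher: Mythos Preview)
Your argument is correct and follows precisely the counting strategy the paper sketches (``too many properties, too few low-degree polynomials''): you bound the number of distinct accept sets of degree-$2T$ polynomials via a hyperplane-arrangement count, bound the number of size-$s$ properties any single accept set can handle via $\Prop\subseteq A\subseteq\mathrm{close}(\Prop)$, and balance the two doubly-exponential terms to force $H(2\gamma)<1/20$. The paper itself only cites Buhrman et al.\ and gives the one-line intuition, so your proposal is a faithful and complete fleshing-out of exactly that approach.
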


A more involved application of the polynomial method is the tight $\Omega(\log N)$ lower bound that Koiran et al.~\cite{knp:simon} proved
for the quantum query complexity of Simon's problem.  With a bit of work, their proof also works to show that the property tester presented in Section~\ref{sec:simontester} is essentially optimal.

Another highly non-trivial application of the polynomial method is the $\widetilde{\Omega}(N^{1/4})$ lower bound of Ambainis et al.~\cite{acl:testing}
for testing the property of a bounded-degree graph being an {\bf Expander} (see Section~\ref{sec:usingambainis}).
Their lower bound is inspired by the one for the collision problem~\cite{aaronson&shi:collision}, and at a high level works as follows.
They give an input distribution ${\cal D}_\ell$ over $N$-vertex $d$-regular graphs with $\ell$ components, obtained from $M$-vertex graphs that consist of $\ell$ equal-sized random parts ($M$ is slightly bigger than~$N$ and divisible by~$\ell$; its role in the proof is rather technical). They then show that the 
acceptance probability of a $T$-query quantum tester can be written as an $O(T\log T)$-degree bivariate polynomial $p(\ell,M)$ in $\ell$ and~$M$.
A random graph of $\ell=1$ components is very likely to be an expander, so $p(1,M)\approx 1$;
on the other hand, every graph with $\ell>1$ components will be far from an expander, so $p(\ell,M)\approx 0$ for integers $\ell>1$.
They then use results about polynomial approximation to show that such polynomials need degree $\Omega(N^{1/4})$.

\subsubsection{The adversary method}\label{ssecadversary}

The two main lower bound methods that we know for quantum query complexity are the above polynomial method, and the so-called \emph{adversary method}, introduced by Ambainis~\cite{ambainis:lowerboundsj}. For a long time this adversary method faced the so-called ``property testing barrier''~\cite{hls:madv}: for every $N$-bit partial Boolean function where all 0-inputs are at Hamming distance $\Omega(N)$ from all 1-inputs, the method can prove only a \emph{constant} lower bound on the query complexity. Note that all testing problems for classical properties with respect to Hamming distance fall in this regime, since the 0-inputs are required to be far from all 1-inputs (i.e., elements of the property).

However, H\o yer et al.~\cite{hls:madv} generalized Ambainis's method to something substantially stronger,
which can prove optimal bounds for quantum property testing.
We now describe their ``negative weights'' adversary bound.
Let $F:D\rightarrow\01$, with $D\subseteq[m]^N$, be a Boolean function.
An \emph{adversary matrix} $\Gamma$ for~$F$ is a real-valued matrix whose rows and columns are indexed by all $x\in D$,
satisfying that $\Gamma_{xy}=0$ whenever $f(x)=f(y)$.
Let $\Delta_j$ be the Boolean matrix whose rows and columns are indexed by all $x\in D$, such that $\Delta_j[x,y]=1$ if $x_j\neq y_j$, and $\Delta_j[x,y]=0$ otherwise. The (negative-weights) adversary bound for~$F$ is given by the following expression:
\begin{equation*}
\ADV^{\pm}(F)=\max_\Gamma\frac{\norm{\Gamma}}{\max_{j\in[N]}\norm{\Gamma\circ \Delta_j}},
\end{equation*}
where $\Gamma$ ranges over all adversary matrices for~$F$, `$\circ$' denotes entry-wise product of two matrices, and `$\norm{\cdot}$' denotes operator norm (largest singular value) of the matrix.\footnote{Crucially, the adversary matrix $\Gamma$ may have negative entries. Restricting it to non-negative entries gives one of the many equivalent formulations of Ambainis's earlier adversary method~\cite{spalek&szegedy:adversaryj}.}

H\o yer et al.~\cite{hls:madv} showed that this quantity is indeed a valid lower bound:
every quantum algorithm that computes $F$ with error probability $\leq\eps$ needs to make 
at least $\frac{1}{2}(1-\sqrt{\eps(1-\eps)})\ADV^{\pm}(F)$ queries.
Subsequently, Reichardt et al.~\cite{reichardt:tight,lmrss:stateconv} showed this lower bound is actually essentially tight: 
for every Boolean function~$F$ there is a quantum algorithm computing it with error $\leq 1/3$ using $O(\ADV^{\pm}(F))$ queries.
Since property testing is just a special case of this (the 1-inputs of~$F$ are all $x\in \Prop$, and the 0-inputs are all $x$ that are far from~$\Prop$),
in principle the adversary bound characterizes the quantum complexity of testing classical properties.
However, in practice it is often hard to actually calculate the value of $\ADV^{\pm}(F)$, and we are not aware of
good quantum property testing lower bounds that have been obtained using this method. 


\subsubsection{A communication complexity method?}
Recently, a very elegant lower bound method for classical property testing was developed by Blais et al.~\cite{bbmj}, based on \emph{communication complexity}.  In the basic setting of communication complexity~\cite{yao:distributive,kushilevitz&nisan:cc}, two parties (Alice with input~$x$ and Bob with input~$y$) try to compute a function $F(x,y)$ that depends on both of their inputs, using as little communication as possible. This is a very well-studied model with many applications, particularly for deriving lower bounds in other areas, such as circuits, data structures, streaming algorithms, and many others (for which see~\cite{kushilevitz&nisan:cc}).

Blais et al.~\cite{bbmj} showed for the first time how to derive property testing lower bounds from communication complexity.
Their idea is to convert a $T$-query property tester for some property~$\Prop$ into 
a protocol for some related communication problem~$F$, by showing that 1-inputs
$(x,y)$ for~$F$ somehow correspond to elements of~$\Prop$, 
while 0-inputs $(x,y)$ for $F$ correspond to elements that are far from $\Prop$.
The more efficient the tester, the less communication the protocol needs.
Communication complexity lower bounds for~$F$ then imply lower bounds on the complexity~$T$ of the tester.

This is best explained by means of an example.  A \emph{$k$-linear function} $f:\01^n\rightarrow\01$ is a linear function that depends on exactly $k$ of its input bits: there exists a weight-$k$ $x\in\01^n$ such that $f(i)=i\cdot x$ mod~2 for all $i\in\01^n$. 
Let $\Prop$ be the set of $k$-linear functions, and assume~$k$ is even. 
Suppose we have a randomized $T$-query tester~$\cal T$ for~$\Prop$.
We will show how such a tester induces an efficient communication protocol for the communication complexity problem
of deciding whether weight-$k/2$ strings $x\in\01^n$ and $y\in\01^n$ are \emph{disjoint} or not (i.e., whether $x\wedge y=0^n$).
Alice, who received input $x$, forms the function $f(i) = i\cdot x$ and Bob forms the function $g(i) = i\cdot y$. 
Consider the function $h(i) = i\cdot (x\oplus y)$. Since $|x \oplus y| = |x| + |y| - 2 |x \wedge y|$ and $|x|+|y|=k$, 
the function $h$ is a $(k-2|x\wedge y|)$-linear function. 
In particular, $h$ is a $k$-linear function if $x$ and $y$ are disjoint, and 1/2-far from any $k$-linear function if $x$ and $y$ intersect.
Now Alice and Bob use a shared random coin to jointly sample one of the deterministic testers that make up the property tester~$\cal T$.
Note that they can simulate a query~$i$ to $h$ by 2 bits of communication: Alice sends $i\cdot x$ to Bob and Bob sends $i\cdot y$ to Alice.
Hence a $T$-query tester for~$\Prop$ implies a $2T$-bit communication protocol for disjointness on weight-$k/2$ inputs $x$ and $y$.
Plugging in the known communication lower bound~\cite{ks:disj,razborov:disj} of $\Omega(k)$ bits for multi-round disjointness on weight-$k/2$ inputs implies that every
classical tester for $k$-linear functions needs $\Omega(k)$ queries, which is nearly tight (the best upper bound is $O(k\log k)$ due to Blais~\cite{blais_juntas}). Plugging in a better $\Omega(k\log k)$ lower bound for \emph{one-way} communication complexity 
gives $T=\Omega(k\log k)$ for \emph{non-adaptive} classical testers (i.e., testers where the next index to query is independent 
of the outcomes of the earlier queries), which is tight~\cite{dks:sparsedisj,bgmw:kparity}.

Can we use the same idea to prove lower bounds on \emph{quantum} testers?
In principle we can, but notice that the overhead when converting a \emph{quantum} tester into a communication protocol is much worse than in the classical case.
In the classical case, thanks to the fact that Alice and Bob can use shared randomness to fix a deterministic tester, they both know at each point in the protocol which query~$i$ will be made next.  Hence they only need to communicate the constant number of bits corresponding to the answer to that query, so the overall communication is $O(T)$.
In the quantum case, the queries can be made in superposition, so the conversion will have an overhead of $O(n)$ qubits of communication: each query will be ``simulated'' by an $n$-qubit message from Alice to Bob, and another such message from Bob to Alice. 
More precisely, suppose we let Alice run the $T$-query quantum tester for~$\Prop$. Whenever the tester wants to make a query to the function~$h$, its state will be in a superposition $\sum_{i\in\01^n}\alpha_i\ket{i}\ket{\phi_i}$ over all indices~$i$, possibly entangled with another register. To perform a phase-query to~$h$, Alice unitarily maps $\ket{i}\mapsto (-1)^{i\cdot x}$, sends the first $n$ qubits of the state to Bob, who unitarily maps $\ket{i}\mapsto (-1)^{i\cdot y}$ and sends back the $n$ qubits. This correctly implements a phase-query to~$h$ 
$$
\ket{i}\mapsto (-1)^{i\cdot x+i\cdot y}=(-1)^{h(i)},
$$
on Alice's state, at the expense of $2n$~qubits of communication.
Thus a $T$-query quantum tester induces a quantum protocol for disjointness that uses $2nT$ qubits of communication.  But the best communication lower bound one can hope for 
on communication complexity problems with $n$-bit inputs is $\Omega(n)$, which gives only a trivial $T=\Omega(1)$ lower bound!
This, however, is not due to a suboptimal reduction: for example, testing $k$-linear functions can be done with $O(1)$ quantum queries using the Bernstein-Vazirani algorithm, as in Section~\ref{sec:bvtester}.

\begin{question}
Can some modification of the ideas of Blais et al.~\cite{bbmj} be used to obtain non-trivial lower bounds on quantum testers?
\end{question}


\section{Classical testing of quantum properties}\label{sec:cltestqprop}

In this section we will survey what is known about classical testing of two kinds of quantum objects: implementations of basic unitary operations, and implementations of quantum protocols that win certain two-player games (most famously the ``CHSH game'') with high probability. 
Even though they are testing properties of \emph{quantum} objects, our testers will be classical in the sense that they will base their decision solely on classical data, in particular classical measurement outcomes from feeding classical inputs into the quantum objects.

Before we go there, let us mention that there is another way in which one can consider classical testing of quantum properties: by imagining that we are given classical access to a quantum object which is too large for an efficient classical description. For example, we might be given access to an unknown pure state $\ket{\psi}$ of $n$ qubits by being allowed to query arbitrary amplitudes in the computational basis at unit cost. This then becomes an entirely classical property testing problem. Some natural properties of quantum states in this context have indeed been studied classically; one example is the {\em Schmidt rank}. A bipartite state $\ket{\psi}$ is said to have Schmidt rank $r$ if it can be written as $\ket{\psi} = \sum_{i=1}^r \sqrt{\lambda_i} \ket{v_i}\ket{w_i}$ for orthonormal sets of states $\{\ket{v_i}\}$, $\{\ket{w_i}\}$ and non-negative $\lambda_i$; this is known as the Schmidt decomposition of $\ket{\psi}$. A tester for this property follows from work of Krauthgamer and Sasson~\cite{krauthgamer03}, who have given an efficient tester for low-rank matrices. Their algorithm distinguishes between the case that a $d\times d$ matrix $M$ is rank at most~$r$, and the case that at least an $\eps$-fraction of the entries in $M$ must be changed to reduce its rank to~$r$. Their algorithm queries only $O((r/\eps)^2)$ elements of the matrix. If we think of $M$ as the amplitudes of a bipartite pure quantum state $\ket{\psi} \in (\C^d)^{\otimes 2}$ (i.e., $M_{ij} = \bra{i}\ip{j}{\psi}$), this is equivalent to a tester for the property of $\ket{\psi}$ having Schmidt rank at most~$r$.

\subsection{Self-testing gates}
\label{sec:stgates}

When experimentalists try to implement a quantum computer in the usual circuit model, they will have to faithfully implement a number of basic quantum operations, called elementary \emph{gates}. Suppose we can implement some superoperator\footnote{Completely positive trace-preserving linear map, a.k.a.~``quantum channel.'' See Section~\ref{sec:channel} for more on these.}~$\bf G$. How can we test whether it indeed implements the gate it is supposed to implement? We are dealing here with the situation of classical testing of quantum properties, which means we can only ``trust'' classical states; we cannot assume that we have trusted machinery to faithfully prepare specific quantum states. What we \emph{can} do is faithfully prepare an initial computational basis state (i.e., a classical state), apply~$\bf G$ to it a number of times, measure the resulting state in the computational basis, and look at the classical outcomes.  

For example, say $\bf G$ is supposed to implement (conjugation by) the Hadamard gate~$H=\frac{1}{\sqrt{2}}\left(
\begin{array}{rr}
1 & 1\\
1 & -1
\end{array}
\right)$. If we prepare $\ket{0}$, apply $\bf G$ once and measure in the computational basis, the probability to see a~0 should be~1/2.  Similarly, if we prepare $\ket{0}$, apply $\bf G$ \emph{twice} and measure, the probability to see~0 should be~1.  These are examples of so-called \emph{experimental equations}. In general, an experimental equation specifies the probability of obtaining a certain outcome from an experiment that starts from a specific classical state and applies a specific sequence of the available superoperators.  A \emph{self-tester} for a set of gates repeatedly performs the experiments corresponding to a specific set of experimental equations, in order to verify that the probabilities of the specified outcomes are indeed (close to) what the equations claim.  A good self-tester will test experimental equations which (when approximately satisfied by $\bf G$), ``essentially'' tell us what $\bf G$ is, in a sense made precise below.

It should be noted that such experimental equations cannot \emph{fully} pin down a gate.  For example, if $\bf G$ is the Hadamard gate in a basis where $\ket{1}$ is replaced with $e^{i\phi}\ket{1}$, then no experiment as described above can detect this: $H$ and its cousin satisfy exactly the same experimental equations, and no self-tester is able to distinguish the two. Still, van Dam et al.~\cite{dmms:selftestj} showed that such experimental equations are surprisingly powerful and can essentially characterize many gate sets, including some universal sets.%
\footnote{A finite set of gates is \emph{universal} if every $n$-qubit unitary
can be approximated arbitrarily well (in the operator norm) by means of a circuit consisting of these gates.
We cannot hope to represent all unitaries \emph{exactly}, because the set of circuits over a finite (or even countable) set
of elementary gates is only countable, hence much smaller than the uncountable set of all unitaries.}
For concreteness we will focus below on a specific universal set, namely the one consisting of the Hadamard gate~$H$, the $\pi/4$-phase gate 
$T=\left(
\begin{array}{cc}
1 & 0\\
0 & e^{i\pi/4}
\end{array}
\right)$, and the controlled-NOT operation. This set has the added benefit that it supports fault-tolerant quantum computing: implementing these gates up to small error suffices for universal quantum computing.

Let us first define experimental equations a bit more precisely. Following van Dam et al.~\cite{dmms:selftestj}, we use $\Pr^c[\rho]$ to denote the probability that measuring the (pure or mixed) state~$\rho$ in the computational basis gives outcome~$c$. Then an experimental equation in one superoperator variable~$\bf G$ is of the form
$$
{\Pr}^c[{\bf G}^k(\ketbra{b}{b})]=r,
$$
for $b,c\in\01$, positive integer $k$, and $r\in[0,1]$.
Note that we assume here that we can apply exactly the same superoperator~${\bf G}$ more than once.
An experimental equation in two variables~$\bf F$ and~$\bf G$ is of the form
$$
{\Pr}^c[{\bf F}^{k_1}{\bf G}^{\ell_1}\cdots {\bf F}^{k_t}{\bf G}^{\ell_t}(\ketbra{b}{b})]=r,
$$
for $b,c\in\01$, integers $k_1,\ldots,k_t,\ell_1,\ldots,\ell_t$, and $r\in[0,1]$
(concatenation of superoperators here denotes composition).
We can similarly write experimental equations in more than two operators, and on systems of more than one qubit. Such experimental equations are all the things a self-tester can test.

Suppose one-qubit operators $\bf H$ and $\bf T$ are intended to be the Hadamard gate~$H$ and the $\pi/4$-phase gate~$T$, respectively, and two-qubit operator $\bf C$ is supposed to be CNOT (with slight abuse of notation we identify unitary gates with the corresponding superoperators here).
Let us see to what extent we can test this.
To start, the following experimental equations are clearly \emph{necessary} for $\bf H$:
\begin{align*}
{\Pr}^0[{\bf H}(\ketbra{0}{0})]& =1/2\\
{\Pr}^0[{\bf H}^2(\ketbra{0}{0})]& =1\\
{\Pr}^1[{\bf H}^2(\ketbra{1}{1})]& =1
\end{align*}
Van Dam et al.~\cite[Theorem~4.2]{dmms:selftestj} showed that these equations \emph{characterize} the Hadamard gate up to the one remaining degree of freedom that we already mentioned, in the following sense:
$\bf H$ satisfies the above three equations if and only if
there exists $\phi\in[0,2\pi)$ such that $\bf H$ equals (the superoperator corresponding to) $H_{\phi}$, which is the Hadamard gate where $\ket{1}$ is replaced with $e^{i\phi}\ket{1}$:
$$
H_{\phi}=\frac{1}{\sqrt{2}}\left(
\begin{array}{cc}
1 & e^{-i\phi}\\
e^{i\phi} & -1
\end{array}
\right).
$$
The unknown phase~$\phi$ cannot be ignored, because it might interact with the effects of other gates.

The following two experimental equations are clearly necessary for~$\bf T$:
\begin{align*}
{\Pr}^0[{\bf T}(\ketbra{0}{0})]& =1\\
{\Pr}^1[{\bf T}(\ketbra{1}{1})]& =1
\end{align*}
These two equations are far from sufficient for characterizing the $T$~gate; for example, every diagonal unitary will satisfy these two equations, as would the superoperator that fully decoheres a qubit in the computational basis. However, by introducing some additional equations involving both $\bf H$ and $\bf T$ we can do better:
\begin{align*}
{\Pr}^0[{\bf H}{\bf T}^8{\bf H}(\ketbra{0}{0})]& =1\\
{\Pr}^0[{\bf H}{\bf T}{\bf H}(\ketbra{0}{0})]& =\frac{1}{2}(1+\cos(\pi/4))
\end{align*}
Note that if ${\bf H}=H$, then both ${\bf T}=T$ and its inverse ${\bf T}=T^{-1}$ would satisfy the above equations; this is unfortunate, but will turn out below not to matter.
Van Dam et al.~\cite[Theorem~4.4]{dmms:selftestj} showed that a pair of superoperators $\bf H$ and $\bf T$ satisfy the above set of 7~equations if and only if there exists $\phi\in[0,2\pi)$ such that ${\bf H}=H_{\phi}$, and ${\bf T}$ corresponds to either $T$ or $T^{-1}$.

To complete our self-test, consider the superoperator $\bf C$.
The following experimental equations are clearly necessary for $\bf C$ to equal CNOT:
\begin{align*}
{\Pr}^{00}[{\bf C}(\ketbra{00}{00})]& =1\\
{\Pr}^{01}[{\bf C}(\ketbra{01}{01})]& =1\\
{\Pr}^{11}[{\bf C}(\ketbra{10}{10})]& =1\\
{\Pr}^{10}[{\bf C}(\ketbra{11}{11})]& =1
\end{align*}
These equations ensure that $\bf C$ implements the same permutation of basis states as the CNOT gate. This is still far from sufficient. We add the following experimental equations, which describe the desired interaction between CNOT and $H$:
\begin{align*}
{\Pr}^{00}[({\bf I\otimes H}){\bf C}({\bf I\otimes H})(\ketbra{00}{00})]& =1\\
{\Pr}^{10}[({\bf I\otimes H}){\bf C}({\bf I\otimes H})(\ketbra{10}{10})]& =1\\
{\Pr}^{00}[({\bf H\otimes I}){\bf C}^2({\bf H\otimes I})(\ketbra{00}{00})]& =1\\
{\Pr}^{01}[({\bf H\otimes I}){\bf C}^2({\bf H\otimes I})(\ketbra{01}{01})]& =1\\
{\Pr}^{00}[({\bf H\otimes H}){\bf C}({\bf H\otimes H})(\ketbra{00}{00})]& =1
\end{align*}
Van Dam et al.~\cite[Theorem~4.5]{dmms:selftestj} showed that
if superoperators $\bf H$, $\bf T$, $\bf C$ satisfy the above 16~experimental equations, then there exists $\phi\in[0,2\pi)$ such that:
$$
{\bf H}=H_{\phi}; {\bf T}=T \mbox{ or }{\bf T}=T^{-1}; {\bf C}=C_{\phi},
$$
where $C_\phi$ denotes (the superoperator corresponding to the) controlled-NOT gate with $\ket{1}$ replaced with $e^{i\phi}\ket{1}$.

Because our apparatuses are never perfect, we cannot hope to implement the elementary gates
exactly. Fortunately, thanks to quantum fault-tolerant computing
it suffices if we can implement them up to small error (in fact different applications
of the same superoperator can have different errors and need not all be identical).
Hence we also cannot expect the gates that we are testing to \emph{exactly} satisfy all of the above experimental equations.
Furthermore, even if they did satisfy these equations exactly, we would never be able
to perfectly test this with a finite number of experiments.
Accordingly, we would like the test consisting of these experimental equations to be \emph{robust}, in the sense that if $\bf H$, $\bf T$, and $\bf C$ \emph{approximately} satisfy these equations, then they will be \emph{close} to the gates they purport to be. We say that superoperators \emph{$\eps$-satisfy} a set of experimental equations if for each of the equations the left- and right-hand sides differ by at most~$\eps$. We measure closeness between superoperators in the norm induced by the trace norm:\footnote{This norm $\norm{G}_{\infty}$ is different from (and weaker than) the diamond norm defined later in Eq.~\eqref{def:diamondnorm}, which is also often used to measure distance between superoperators.}
$$
\norm{G}_{\infty}=\sup\{\norm{G(V)}_1:\norm{V}_1=1\},
$$
where the trace norm (Schatten 1-norm) is defined as $\|M\|_1 := \Tr(|M|)$.

Van Dam et al.~\cite[Theorem~6.5, last item]{dmms:selftestj} indeed showed that the above equations constitute a robust self-test:

\begin{thm}[van Dam et al.~\cite{dmms:selftestj}]\label{thm:selftestHTC}
There exists a constant $c$ such that for all $\eps>0$ the following holds.
If superoperators $\bf H$, $\bf T$, $\bf C$ $\eps$-satisfy the above 16~experimental equations, then there exists $\phi\in[0,2\pi)$ such that:
$$
\norm{{\bf H}-H_\phi}_\infty\leq c\sqrt{\eps};
\norm{{\bf T}-T}_\infty\leq c\sqrt{\eps}\mbox{ or }
\norm{{\bf T}-T^{-1}}_\infty\leq c\sqrt{\eps};
\norm{{\bf C}-C_{\phi}}_\infty\leq c\sqrt{\eps}.
$$
\end{thm}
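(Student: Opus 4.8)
The plan is to treat Theorem~\ref{thm:selftestHTC} as a quantitative, perturbative version of the three exact characterizations already quoted (van Dam et al.~\cite[Theorems 4.2, 4.4, 4.5]{dmms:selftestj}): there the experimental equations, satisfied \emph{exactly}, force $\mathbf{H}=H_\phi$, $\mathbf{T}\in\{T,T^{-1}\}$ and $\mathbf{C}=C_\phi$. I would carry this out in two phases. In the first phase I extract \emph{approximate unitarity}: I show that the $\eps$-satisfied equations force each of $\mathbf{H}$, $\mathbf{T}$, $\mathbf{C}$ to be $O(\sqrt{\eps})$-close in $\norm{\cdot}_\infty$ to some honest unitary channel. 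In the second phase I rerun the exact argument ``with error bars,'' showing that a unitary which $\eps$-satisfies the equations must be $O(\sqrt{\eps})$-close to the intended gate. The ubiquitous $\sqrt{\eps}$ (rather than $\eps$) originates from the gentle-measurement and Fuchs--van de Graaf inequalities: a ``certainty'' equation $\Pr^c[\cdots]=1$ that is only met up to $\eps$ tells us a fidelity is at least $1-\eps$, which translates into a trace distance of only $O(\sqrt{\eps})$.

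For the unitarity-extraction phase, the key observations are that (i) the certainty equations, read through the gentle-measurement lemma, say that the relevant channels map certain pure inputs to states that are $O(\sqrt{\eps})$-close to pure computational-basis outputs, and (ii) the equations $\Pr^0[\mathbf{H}^2(\ketbra{0}{0})]=1$ and $\Pr^1[\mathbf{H}^2(\ketbra{1}{1})]=1$, together with the superposition-probing equations involving $\mathbf{H}\mathbf{T}\mathbf{H}$, $\mathbf{H}\mathbf{T}^8\mathbf{H}$ and the Hadamard-conjugated CNOT equations, force $\mathbf{H}\circ\mathbf{H}$ (and analogously the other composites) to be close to the identity channel on a spanning set of states. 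I would then invoke a robust form of the elementary fact that an invertible quantum channel must be unitary: if a channel has an approximate two-sided inverse that is itself a channel, it is $O(\sqrt{\eps})$-close to a unitary. This yields unitaries $U_H,U_T,U_C$ with $\norm{\mathbf{H}-U_H}_\infty,\norm{\mathbf{T}-U_T}_\infty,\norm{\mathbf{C}-U_C}_\infty=O(\sqrt{\eps})$, after which the triangle inequality lets me replace each superoperator by its unitary proxy at the cost of an $O(\sqrt{\eps})$ term.

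In the second phase I parametrize $U_H,U_T,U_C$ by their matrix entries (equivalently, by Bloch-sphere angles for the single-qubit gates and a small set of angles for the two-qubit gate). Each experimental probability is a smooth trigonometric function of these parameters, and by the exact theorems the intended gates are the joint zeros of the corresponding ``defect'' functions. Since the $U$'s now $O(\sqrt{\eps})$-satisfy the equations, each defect is $O(\sqrt{\eps})$, and a local stability (implicit-function / non-degeneracy) argument around the exact solution pins the parameters to within $O(\sqrt{\eps})$ of those of $H_\phi$, $T^{\pm1}$, $C_\phi$. Two bookkeeping points need care: the single free phase $\phi$ must be \emph{the same} for $\mathbf{H}$ and $\mathbf{C}$, which I enforce by noting that the mixed $\mathbf{H}$--$\mathbf{C}$ conjugation equations couple the two phases and leave only one free parameter; and the $T$-versus-$T^{-1}$ ambiguity must be propagated consistently. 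Finally I translate parameter closeness back into $\norm{\cdot}_\infty$-closeness and combine it with the phase-one bound via the triangle inequality to obtain the claimed $c\sqrt{\eps}$.

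I expect the main obstacle to be the robust unitarity-extraction step. The difficulty is that ``$\mathbf{H}^2\approx\mathrm{id}$ on the computational basis'' is far weaker than ``$\mathbf{H}\approx$ unitary'': a channel can act correctly on basis states while badly decohering superpositions, so I must genuinely use the superposition-probing equations to control $\mathbf{H}$ and $\mathbf{C}$ on an informationally complete set of inputs before the approximate-invertibility lemma can be applied, and I must do so while keeping all error terms multiplicatively bounded through compositions as long as $\mathbf{T}^8$ and through the entangling $(\mathbf{H}\otimes\mathbf{H})\,\mathbf{C}\,(\mathbf{H}\otimes\mathbf{H})$ structure. Each composition of $k$ channels can inflate an $O(\sqrt{\eps})$ error by a factor of $k$, which is harmless for the fixed constant depths appearing here, but the bookkeeping that certifies this---and that keeps the extracted phase $\phi$ consistent across gates---is the technically demanding part of the argument.
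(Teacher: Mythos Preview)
The paper does not actually prove Theorem~\ref{thm:selftestHTC}: it is stated as \cite[Theorem~6.5, last item]{dmms:selftestj} and only used, not reproved, in this survey. So there is no ``paper's own proof'' to compare against; your proposal should be read as an attempt to reconstruct the van Dam et al.\ argument rather than to match anything in the present text.

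As a reconstruction, your two-phase strategy (first extract approximate unitarity, then do a stability analysis around the exact characterizations of \cite[Theorems~4.2--4.5]{dmms:selftestj}) is the natural one and is in the spirit of how such robustness results are typically proved. You also correctly locate the source of the $\sqrt{\eps}$ scaling in the fidelity-to-trace-distance conversion. The point you flag as the main obstacle is exactly right, and I would push on it harder before declaring the outline sound: the three $\mathbf{H}$-only equations do not probe $\mathbf{H}$ on any superposition, so ``$\mathbf{H}^2\approx\mathrm{id}$ on the computational basis'' is all you get from them, and your proposed fix---using the $\mathbf{H}\mathbf{T}\mathbf{H}$ and $\mathbf{H}\mathbf{T}^8\mathbf{H}$ equations to generate superposition inputs---introduces a genuine circularity, since $\mathbf{T}$'s near-unitarity is itself only certified through equations that already involve $\mathbf{H}$. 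In \cite{dmms:selftestj} this is handled by a more intertwined analysis that does not cleanly separate into ``first unitarity, then identification''; if you want your two-phase decomposition to go through, you will need either a joint fixed-point argument treating $(\mathbf{H},\mathbf{T})$ simultaneously, or an explicit check that the $\mathbf{H}$-equations alone (including $\Pr^0[\mathbf{H}(\ketbra{0}{0})]=1/2$) already force $\mathbf{H}$ to be $O(\sqrt{\eps})$-close to unitary before $\mathbf{T}$ enters. The latter is not obvious and may simply be false without the $\mathbf{T}$-coupling, so the cleaner route is probably to abandon the strict phase separation and instead run the perturbative analysis directly on the Kraus representations, as the original paper does.
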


Let us mention explicitly how this testing of sets of gates fits in the framework outlined in the introduction.  The universe now consists of all triples of superoperators $({\bf H},{\bf T},{\bf C})$. The property $\Prop$ consists of all triples for which there is a $\phi$ such that such that ${\bf H}=H_\phi$, ${\bf T}=T$ or  ${\bf T}=T^{-1}$, and ${\bf C}=C_{\phi}$. The distance measure would be 
$$
d(({\bf H},{\bf T},{\bf C}),({\bf H'},{\bf T'},{\bf C'}))=\max\left(\norm{{\bf H}-{\bf H'}}_\infty,\norm{{\bf T}-{\bf T'}}_\infty,\norm{{\bf C}-{\bf C'}}_\infty\right).
$$
One can derive a tester from Theorem~\ref{thm:selftestHTC} by running the experiments for each experimental equation $O(1/\eps)$ times, estimating the probabilities in their right-hand side up to additive error $c\sqrt{\eps}$, and accepting if and only if for each of the 16 equations, the estimate is $c\sqrt{\eps}$-close to what it should be.  This will accept (with high probability) every triple in $\Prop$, and reject (with high probability) every triple that is $2c\sqrt{\eps}$-far from $\Prop$.

Each triple $({\bf H},{\bf T},{\bf C})$ that passes the test is a universal (and fault-tolerant) set of elementary gates, so can in principle be used to realize any quantum circuit. The fact that we do not know $\phi$ is not important when implementing a circuit using this triple of gates: since $\phi$ cannot be detected by any experimental equations, it cannot affect the classical input-output behavior of a quantum circuit built from these superoperators. We also do not know whether $\bf T$ approximately equals $T$ or its inverse $T^{-1}$. Using Hadamard and CNOTs cannot help distinguish these two cases, because they only differ in a minus sign for the imaginary unit (something gates with real entries cannot pick up).  However, precisely because such a change is undetectable experimentally, we can just build our circuit assuming $\bf T$ is close to $T$; if it is close to $T^{-1}$ instead, that will incur no observable differences in the input-output behavior of our circuit, so for all intents and purposes we may just assume assume $\bf T$ is close to $T$.

In addition to the above result, van Dam et al.~\cite{dmms:selftestj} also showed a number of other families of gates to be robustly self-testable, and proved more general robustness results. In follow-up work, Magniez et al.~\cite{mmmo:selftesting} study self-testing of quantum circuits together with measurement apparatuses and sources of EPR-pairs, introducing notions of simulation and equivalence.

\subsection{Self-testing protocols}\label{sec:stprotocols}

In addition to quantum gates and circuits, a large area of application of quantum self-testing is in multi-party quantum \emph{protocols}.
Here typically two or more parties share an entangled state on which they operate locally. In the two-party case these are often EPR-pairs---or at least \emph{should} be EPR-pairs. Experimentalists often need to test that their apparatuses actually produce the required entangled state, or at least something close to it, and that the local operations and measurements act as required. Unless we somehow already have some other trusted quantum objects available, we are in the self-testing regime: like in the previous section, we can only trust preparations of classical states and measurements in the computational basis. We would like to test a quantum object by classically interacting with it, without making assumptions about the measurement apparatuses, the states used, or even the dimension of the Hilbert spaces that are involved.

Again, for concreteness we will focus on testing protocols for one specific example in the two-party setting%
\footnote{In the three-party setting, the most famous game is the GHZ game~\cite{ghz}. Colbeck~\cite{colbeck:phd} seems to have been the first to give a self-testing result for this.}, namely the famous CHSH game~\cite{chsh}. This is defined as follows.
\begin{quote}
{\bf CHSH game.}
Alice and Bob receive uniformly distributed inputs $x,y\in\01$, respectively. They output $a,b\in\01$, respectively. The players (equivalently, the protocol) \emph{win} the game if and only if the XOR of the outputs equals the AND of the inputs: $a\oplus b=xy$.
\end{quote}
Alice and Bob want to coordinate to maximize their probability\footnote{This probability is taken over the input distribution as well as over the internal randomness of the protocol.} of winning this game, without communication between them. It is known that classical protocols can win with probability $0.75$, but not more, even when they use shared randomness. In contrast, the following quantum protocol $P^*$ wins the game with probability $\cos(\pi/8)^2\approx 0.854$.%
\footnote{This ``Bell inequality violation'' has been confirmed by many experiments, albeit with a few remaining experimental ``loopholes,'' suggesting that Nature does not behave according to classical physics.
See the recent survey by Brunner et al.~\cite{bcpsw:bellnonlocality} for much more on such ``nonlocal'' behavior, where two spatially separated entangled players are correlated in ways that are impossible for classical players.}
It is defined in terms of the four single-qubit \emph{Pauli matrices}, which are
\begin{equation*}
I = \mm{1&0\\0&1},\;X=\mm{0&1\\1&0},\;Y=\mm{0&-i\\i&0},\;Z=\mm{1&0\\0&-1}. 
\end{equation*}
\begin{quote}
{\bf Standard protocol for CHSH.}
$P^*$ uses one EPR-pair $\ket{\phi^+}=\frac{1}{\sqrt{2}}(\ket{00}+\ket{11})$ as starting state. Depending on their inputs, Alice and Bob apply the following specific $\pm 1$-valued observables\footnote{A $\pm$-valued observable~$A$ can be written as the difference $A=P_+ - P_-$ of two orthogonal projections that satisfy $P_++P_-=I$. It corresponds to a projective measurement in a natural way, with outcome $+1$ corresponding to $P_+$ and outcome $-1$ corresponding to $P_-$. Note that such an $A$ is both Hermitian and unitary, and hence $A^2=I$.}: Alice measures $X$ if $x=0$, or $Z$ if $x=1$. She outputs~0 if her measurement yields~1, and she outputs 1 if it yields~$-1$. Bob measures the observable~$(X+Z)/\sqrt{2}$ if~$y=0$ and $(X-Z)/\sqrt{2}$ if~$y=1$, and outputs 0 or 1 accordingly. 
\end{quote}
Note that for $\pm 1$-valued observables $A$ and $B$, $\bra{\phi^+}A\otimes B\ket{\phi^+}=\Tr(AB^T)/2$ is the difference between the probability that the two output bits are equal and the probability that the outputs are different. If $xy=0$ a protocol tries to get this difference close to~1, and if $x=y=1$ it tries to get the difference close to~$-1$. In the above protocol $P^*$, the difference is $1/\sqrt{2}$ if $xy=0$, and $-1/\sqrt{2}$ if $x=y=1$, so the sum of these 4 terms (negating the last one) equals $2\sqrt{2}$. 
Tsirelson famously proved that this value of $2\sqrt{2}$ is optimal among all possible protocols~\cite{tsirelson80}, no matter how much entanglement they use;  hence the corresponding winning probability $\frac{1}{2}+\frac{1}{2\sqrt{2}}=\cos(\pi/8)^2$ is optimal as well.

\begin{thm}[Tsirelson~\cite{tsirelson80}]\label{thm:tsirelsonbound}
Suppose Alice and Bob run a protocol for CHSH that starts with a shared pure state $\ket{\psi}$, where Alice applies $\pm 1$-valued observables $A_0$ or $A_1$ depending on her input~$x$, and Bob applies $\pm 1$-valued observables $B_0$ or $B_1$ depending on~$y$. Then
$|\bra{\psi}(A_0B_0+A_0B_1+A_1B_0-A_1B_1)\ket{\psi}|\leq 2\sqrt{2}$.
\end{thm}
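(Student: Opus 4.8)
The plan is to use Tsirelson's trick: instead of estimating the expectation directly, work with the CHSH operator itself and bound its operator norm. Define the Hermitian operator
$$\mathcal{B} = A_0 B_0 + A_0 B_1 + A_1 B_0 - A_1 B_1,$$
where throughout I abbreviate $A_i$ for $A_i \otimes I$ and $B_j$ for $I \otimes B_j$. Since $\bra{\psi}\mathcal{B}\ket{\psi}\leq\norm{\mathcal{B}}$ for any unit vector $\ket{\psi}$, it suffices to prove $\norm{\mathcal{B}}\leq 2\sqrt{2}$; and because $\mathcal{B}$ is Hermitian, this is equivalent to the estimate $\norm{\mathcal{B}^2}\leq 8$.

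First I would record the two structural facts that drive everything. Because each observable is $\pm 1$-valued, it squares to the identity, $A_i^2=B_j^2=I$; and because Alice's and Bob's observables act on different tensor factors, every $A_i$ commutes with every $B_j$. These are the only features of the protocol the argument uses, which is why the bound holds regardless of the dimension or amount of entanglement.

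Next I would compute $\mathcal{B}^2$ explicitly. Grouping $\mathcal{B}=A_0(B_0+B_1)+A_1(B_0-B_1)$ and squaring, the commutation of the $A$'s past the $B$'s lets the two ``diagonal'' terms collapse via $A_i^2=I$ into $(B_0+B_1)^2+(B_0-B_1)^2=4I$, while the two cross terms assemble, using $(B_0+B_1)(B_0-B_1)=-[B_0,B_1]$ and $(B_0-B_1)(B_0+B_1)=[B_0,B_1]$, into a single commutator product. The clean outcome is the identity
$$\mathcal{B}^2 = 4I - [A_0,A_1]\,[B_0,B_1],$$
which is the heart of the proof.

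Finally I would bound the norm. Each $A_i$ is unitary (being Hermitian with $A_i^2=I$), so by the triangle inequality $\norm{[A_0,A_1]}\leq\norm{A_0 A_1}+\norm{A_1 A_0}=2$, and likewise $\norm{[B_0,B_1]}\leq 2$. Submultiplicativity of the operator norm then yields $\norm{\mathcal{B}^2}\leq 4+2\cdot 2=8$, hence $\norm{\mathcal{B}}\leq 2\sqrt{2}$, and the theorem follows. I expect the only real obstacle to be the bookkeeping in the middle step: correctly expanding the square of the four-term operator while keeping the tensor-factor commutations and the $\pm$ signs straight, so that the cross terms genuinely telescope into $-[A_0,A_1][B_0,B_1]$ rather than into a residual expression. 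Once that identity is in hand, everything that follows is a routine norm estimate.
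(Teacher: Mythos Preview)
Your proof is correct and is essentially the same as the paper's: both compute $\mathcal{B}^2=4I+[A_0,A_1]\otimes[B_1,B_0]$ (your $-[A_0,A_1][B_0,B_1]$ is the same operator) and then bound each commutator by~$2$. The only cosmetic difference is that the paper passes through $(\bra{\psi}\mathcal{B}\ket{\psi})^2\le\bra{\psi}\mathcal{B}^2\ket{\psi}$ via Cauchy--Schwarz before bounding the commutator term, whereas you bound the operator norm $\norm{\mathcal{B}^2}$ directly; either route gives $8$ and hence $2\sqrt{2}$.
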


For simplicity we abbreviate $A\otimes B$ to $AB$ in the above statement as well as the rest of this section (and $A\otimes I$ to just $A$). The assumption that the starting state is pure and that Alice and Bob apply $\pm 1$-valued observables is without loss of generality, so Tsirelson's bound covers all possible quantum protocols.

\begin{proof}
Define $C=A_0B_0+A_0B_1+A_1B_0-A_1B_1$. Using that $A_x^2=B_y^2=I$, the square of~$C$ works out to
$$
C^2=4I+[A_0,A_1]\otimes[B_1,B_0],
$$
where $[A,B]=AB-BA$ denotes the \emph{commutator} of two operators. Note that if $\|A\|,\|B\|\leq 1$ then $\norm{[A,B]}\leq 2$.
Hence, using Cauchy-Schwarz we get: 
\begin{align*}
|\bra{\psi}C\ket{\psi}|^2 & \leq\bra{\psi}C^2\ket{\psi}=4+\bra{\psi}[A_0,A_1]\otimes[B_1,B_0]\ket{\psi}\\
& =4+\bra{\psi}([A_0,A_1]\otimes I)\cdot(I\otimes[B_1,B_0])\ket{\psi}\\
& \leq 4+\norm{([A_0,A_1]\otimes I)\ket{\psi}}\cdot\norm{(I\otimes[B_1,B_0])\ket{\psi}}\\
& \leq 4+\norm{[A_0,A_1]}\cdot\norm{[B_1,B_0]}\leq 4+2\cdot 2=8,
\end{align*}
which implies $|\bra{\psi}C\ket{\psi}|\leq 2 \sqrt{2}$.
\end{proof}

There are many different protocols that achieve the optimal value $2\sqrt{2}$ or something close to it. For example, applying a local basis change to $P^*$ results in a different protocol that still achieves the maximal value.  How much freedom do we have in such optimal or near-optimal protocols for the CHSH game? Surprisingly, this freedom is essentially limited to local basis transformations. Popescu and Rohrlich~\cite{popescu&rohrlich:generic} and Braunstein et al.~\cite{bmr:maxbellviolation} independently showed that any protocol that wins CHSH with maximal probability needs to start with an EPR-pair,
or something that can be turned into an EPR-pair (possibly in tensor product with another state shared between Alice and Bob) using local isometries.%
\footnote{The correct attribution of this result is not completely clear, see also the work of Summers and Werner~\cite{summers&werner:maxviolation} and Tsirelson~\cite[p.~11]{tsirelson:some}.}

However, as in the previous section, \emph{robustness} is important: we expect that if a protocol wins the CHSH game with \emph{close-to}-maximal probability, then its entangled state must be \emph{close} to an EPR-pair, and its measurement operators must be in some sense close to those of the standard protocol. Such a robust result was proved independently in~\cite{mys:robustselftest,miller&shi:selftestingxor}\footnote{The earlier work of Mayers and Yao~\cite{mayers&yao:impapp,mayers&yao:selftesting} that started the area of self-testing of quantum states also had a protocol for robustly self-testing EPR-pairs, albeit based on more than the CHSH game.}:

\begin{thm}[\cite{mys:robustselftest,miller&shi:selftestingxor}]\label{th:chshselftest}
Suppose Alice and Bob run a protocol for CHSH that starts with a shared pure state $\ket{\psi}$, where Alice applies $\pm 1$-valued observables $A_0$ or $A_1$ depending on her input~$x$, and Bob applies $\pm 1$-valued observables $B_0$ or $B_1$ depending on~$y$. Suppose the protocol wins CHSH with probability at least $\cos(\pi/8)^2-\eps$. Define new operators for Alice and Bob, respectively:
$$
\begin{array}{ll}
X'_A=A_0, & Z'_A=A_1,\\
X'_B=\displaystyle\frac{B_0+B_1}{\sqrt{2}}, & Z'_B=\displaystyle\frac{B_0-B_1}{\sqrt{2}}.
\end{array}
$$
Then there exists a local isometry $\Phi=\Phi_A\otimes\Phi_B$ and a pure state $\ket{junk}$ shared between Alice and Bob, such that for all $M,N\in\{I,X,Z\}$ we have
$$
\norm{\Phi(M'_AN'_B\ket{\psi})-\ket{junk}\otimes M_A N_B\ket{\phi^+}}\,=O(\sqrt{\eps}),
$$
where, e.g., if $M=X$ the notation $M'_A$ denotes the operator $X'_A$.
\end{thm}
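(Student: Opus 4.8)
The plan is to follow the standard ``swap-isometry'' strategy for self-testing: extract from the near-optimal CHSH value an approximate Pauli algebra for Alice's and Bob's operators on $\ket\psi$, and then use a SWAP circuit to transplant the logical qubits into fresh ancillas. First I would pass from the winning probability to an operator statement. Writing $C=A_0B_0+A_0B_1+A_1B_0-A_1B_1$, the winning probability equals $\frac12+\frac18\bra\psi C\ket\psi$, so the hypothesis gives $\bra\psi C\ket\psi\ge 2\sqrt2-8\eps$. Theorem~\ref{thm:tsirelsonbound} gives $C\preceq 2\sqrt2\,I$, and since $\norm{2\sqrt2\,I-C}\le 4\sqrt2$ we get $\bra\psi(2\sqrt2 I-C)^2\ket\psi\le 4\sqrt2\cdot 8\eps$, i.e. $\norm{(C-2\sqrt2 I)\ket\psi}=O(\sqrt\eps)$. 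Using $X'_A=A_0,\,Z'_A=A_1$ and the definitions of $X'_B,Z'_B$ one computes $C=\sqrt2(X'_AX'_B+Z'_AZ'_B)$, so this reads
\[
(X'_AX'_B+Z'_AZ'_B)\ket\psi \approx 2\ket\psi .
\]

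Next I would extract the approximate Pauli relations, all with error $O(\sqrt\eps)$. From the easily checked identity $(X'_B)^2+(Z'_B)^2=2I$ one has $\norm{X'_B\ket\psi}^2+\norm{Z'_B\ket\psi}^2=2$, while taking the inner product of the displayed relation with $\ket\psi$ and applying Cauchy--Schwarz term by term (using $\norm{X'_A\ket\psi}=\norm{Z'_A\ket\psi}=1$) forces $\bra\psi X'_AX'_B\ket\psi,\bra\psi Z'_AZ'_B\ket\psi\to1$ and $\norm{X'_B\ket\psi},\norm{Z'_B\ket\psi}\to1$. Expanding $\norm{X'_A\ket\psi-X'_B\ket\psi}^2$ and the $Z$-analogue then yields the correlation relations $X'_A\ket\psi\approx X'_B\ket\psi$ and $Z'_A\ket\psi\approx Z'_B\ket\psi$. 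The crucial anticommutation relation comes by transport through Bob: Alice's and Bob's operators act on different tensor factors, hence commute, and a short calculation gives $\{X'_B,Z'_B\}=\tfrac12([B_1,B_0]+[B_0,B_1])=0$ exactly. Therefore
\[
\{X'_A,Z'_A\}\ket\psi \approx \{X'_B,Z'_B\}\ket\psi = 0,
\]
where each $\approx$ step replaces an $A$-operator by the corresponding $B$-operator using the two correlation relations together with $A$/$B$ commutation (the intermediate operators have norm $\le\sqrt2$, so errors stay $O(\sqrt\eps)$). Combined with the exact relations $(X'_A)^2=(Z'_A)^2=I$, this is precisely the approximate Pauli algebra we need.

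I would then define the local isometry $\Phi=\Phi_A\otimes\Phi_B$ by the usual SWAP gadget: each party adjoins an ancilla qubit in $\ket0$ and applies, controlled on it, the reflections playing the role of logical $X$ and $Z$ interleaved with Hadamards, arranged so that in the ideal case the logical qubit is swapped into the ancilla. The heart of the verification is the intertwining lemma $\Phi\,M'_A N'_B\ket\psi\approx(M_AN_B)^{\mathrm{anc}}\,\Phi\ket\psi$, proved by pushing each controlled reflection past $M'_A,N'_B$ using the approximate algebra above; combined with the base case $\Phi\ket\psi\approx\ket{junk}\otimes\ket{\phi^+}$ (which follows from the correlation relations, as they force the swapped-out pair to carry the EPR correlations), this yields the claimed bound for all $M,N\in\{I,X,Z\}$, the $O(\sqrt\eps)$ errors accumulating over the constantly many gadget steps.

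The main obstacle I anticipate is that $X'_B=(B_0+B_1)/\sqrt2$ and $Z'_B=(B_0-B_1)/\sqrt2$ are \emph{not} genuine reflections (they are not unitary), so the controlled-$X'_B$ gadget is not literally an isometry. The relations $(X'_B)^2\ket\psi\approx\ket\psi$ and $(Z'_B)^2\ket\psi\approx\ket\psi$ derived above show it is within $O(\sqrt\eps)$ of one on the relevant states, but making this rigorous---either by building $\Phi_B$ from the genuine unitaries $B_0,B_1$ and correcting, or by absorbing the deviation into $\ket{junk}$---and carefully tracking the error through all sixteen operator-pushing steps is where the real work lies.
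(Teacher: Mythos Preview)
Your proposal is correct and follows the same overall strategy as the paper: derive an approximate Pauli algebra from the near-optimal CHSH value, then use the swap-isometry of Figure~\ref{fig:chshisometry} to extract an EPR pair. The paper only writes out the $\eps=0$ case in detail, performing an explicit four-term expansion of $\Phi(M'_AN'_B\ket\psi)$ and verifying each $(M,N)$ by hand; your intertwining viewpoint is an equivalent way to organize the same computation.

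The one genuine difference is in how the algebraic relations are extracted. The paper works from the identity $C^2=4I+[A_0,A_1]\otimes[B_1,B_0]$ used in Tsirelson's bound: equality in the bound forces $|\bra\psi[A_0,A_1]\ket\psi|=2$, whence $A_0A_1\ket\psi=-A_1A_0\ket\psi$ (and similarly for Bob), and the correlation relations are then derived separately. You instead go through the identity $C=\sqrt2(X'_AX'_B+Z'_AZ'_B)$, obtain the correlations first, and then transport Bob's \emph{exact} anticommutation $\{X'_B,Z'_B\}=0$ back to Alice. Your route is arguably cleaner for the robust case, since it avoids having to control how close the commutator inequality is to saturation; in fact the sum-of-squares identity $2I-(X'_AX'_B+Z'_AZ'_B)=\tfrac12\big((X'_A-X'_B)^2+(Z'_A-Z'_B)^2\big)$ gives the $O(\sqrt\eps)$ correlation bounds in one line from $\bra\psi C\ket\psi\ge 2\sqrt2-8\eps$, without the detour through $\norm{(C-2\sqrt2 I)\ket\psi}$. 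The obstacle you flag about $X'_B,Z'_B$ not being unitary is real and is exactly what the paper means by ``technical but straightforward details needed to keep track of the errors and approximations''; your suggested fix (building $\Phi_B$ from the genuine reflections $B_0,B_1$, or showing the gadget is $O(\sqrt\eps)$-close to an isometry on the relevant vectors) is the standard one.
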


In words, up to a local basis change and small errors depending on~$\eps$, $\ket{\psi}$ behaves like an EPR-pair and $X'_A,Z'_A,X'_B,Z'_B$ behave like the standard Pauli operators $X$ and $Z$ for Alice and Bob, respectively, applied to that EPR-pair. Note that this also implies that $A_0,A_1,B_0,B_1$ behave like the observables of the standard protocol~$P^*$.
We give the proof of~\cite{mys:robustselftest} here for the special case where $\eps=0$.  This allows us to describe the main ideas, without going into the technical but straightforward details needed to keep track of the errors and approximations.

\begin{proof}[Proof for $\eps=0$]
Consider the proof of Tsirelson's bound (Theorem~\ref{thm:tsirelsonbound}). If a protocol achieves the maximum value $2\sqrt{2}$, then the inequalities in the proof must be equalities. This implies that $\norm{([A_0,A_1]\otimes I)\ket{\psi}}=2$ and hence (omitting the identities on Bob's space as before):
$$
A_0A_1\ket{\psi}=-A_1A_0\ket{\psi}.
$$
So $A_0$ and $A_1$ anti-commute on $\ket{\psi}$.
Similarly $|\bra{\psi}[B_1,B_0]\ket{\psi}|=2$, and hence $B_0$ and $B_1$ anti-commute on $\ket{\psi}$ as well:
$$
B_0B_1\ket{\psi}=-B_1B_0\ket{\psi}.
$$

We list some properties of the operators $X'_A, Z'_A, X'_B, Z'_B$ that were defined in the statement of the theorem. All are clearly Hermitian.  On Alice's side, $X'_A$ and $Z'_A$ are unitary because $A_0$ and $A_1$ are. They anti-commute on $\ket{\psi}$ because $A_0$ and $A_1$ do. On Bob's side, $X'_B$ and $Z'_B$ anti-commute. We cannot assume $X'_B$ and $Z'_B$ are unitary. However, since $(X'_B)^2=I+(B_0B_1+B_1B_0)/2$ and $B_0$ and $B_1$ anti-commute on $\ket{\psi}$, we have $(X'_B)^2\ket{\psi}=\ket{\psi}$. Hence $\norm{X'_B\ket{\psi}}^2=\bra{\psi}(X'_B)^2\ket{\psi}=1$, so $X'_B$ preserves the norm of $\ket{\psi}$. Similarly, $Z'_B$ preserves the norm of $\ket{\psi}$, as does $X'_BZ'_B$. 

We now want to show that $X'_AX'_B\ket{\psi}=\ket{\psi}$.
First,
\begin{equation}\label{eq:ABC}
\bra{\psi}A_0(B_0+B_1)\ket{\psi}+\bra{\psi}A_1(B_0-B_1)\ket{\psi}=\bra{\psi}C\ket{\psi}=2\sqrt{2}.
\end{equation}
Second, by squaring the operator $A_0(B_0+B_1)$ and using anti-commutativity of $B_0$ and $B_1$ on $\ket{\psi}$ we can show $\bra{\psi}A_0(B_0+B_1)\ket{\psi}\leq\sqrt{2}$, and similarly $\bra{\psi}A_1(B_0-B_1)\ket{\psi}\leq\sqrt{2}$. Combining with Eq.~\ref{eq:ABC}, it follows that both terms \emph{equal} $\sqrt{2}$. Then we have
$$
\bra{\psi}X'_AX'_B\ket{\psi}=\frac{1}{\sqrt{2}}\bra{\psi}A_0(B_0+B_1)\ket{\psi}=1,
$$
hence $X'_AX'_B\ket{\psi}=\ket{\psi}$. Since $X'_A$ is unitary and Hermitian it is self-inverse, which implies $X'_A\ket{\psi}=X'_B\ket{\psi}$.
A similar argument shows $Z'_AZ'_B\ket{\psi}=\ket{\psi}$ and $Z'_A\ket{\psi}=Z'_B\ket{\psi}$.

We now need to show that, after a local isometry, $\ket{\psi}$ behaves like an EPR-pair (tensored with some ``junk'' state) and $X'_A, Z'_A, X'_B, Z'_B$ behave like $X_A, Z_A, X_B, Z_B$.  Consider the dimension-increasing map on states $\ket{\phi}$ (in the same space as $\ket{\psi}$) that is described by Figure~\ref{fig:chshisometry}. It adds one auxiliary qubit for Alice (at the top line of the figure) and one for Bob (at the bottom), both initially~$\ket{0}$. Because all operators involved preserve norm on all states involved, this can be extended to a local isometry $\Phi=\Phi_A\otimes\Phi_B$.  

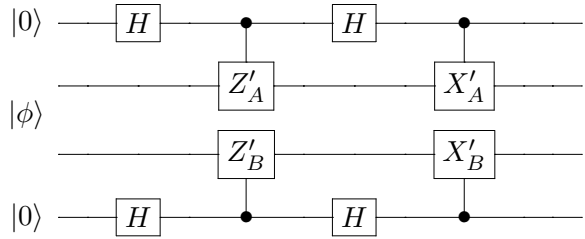
\begin{figure}[hbt]
\[
\Qcircuit @C=1em @R=.7em {
\lstick{\ket{0}} & \qw & \gate{H} & \qw & \ctrl{1} & \qw & \gate{H} & \qw & \ctrl{1} & \qw & \qw & \qw \\
\dstick{\ket{\phi}\hspace*{2.2em}} & \qw & \qw & \qw & \gate{Z'_A} & \qw & \qw & \qw & \gate{X'_A} & \qw & \qw & \qw \\
\lstick{} & \qw & \qw  &\qw & \gate{Z'_B} & \qw & \qw & \qw & \gate{X'_B} & \qw & \qw & \qw \\
\lstick{\ket{0}} & \qw  & \gate{H} & \qw & \ctrl{-1} & \qw & \gate{H} & \qw & \ctrl{-1} & \qw & \qw & \qw 
}
\]
\caption{Isometry for transforming a perfect CHSH protocol to the standard one.}
\label{fig:chshisometry}
\end{figure}

For convenience we will write the two auxiliary qubits on the right of the state, the first for Alice and the second for Bob. Let $M,N\in\{I,X,Z\}$. Following the state through the different steps of Figure~\ref{fig:chshisometry}, a straightforward calculation shows
\begin{align}\label{eq:isometry}
\Phi(M'_AN'_B\ket{\psi})& = \frac{1}{4}(I+Z'_A)(I+Z'_B)M'_AN'_B\ket{\psi}\ket{00}\nonumber\\
                & +  \frac{1}{4}X'_B(I+Z'_A)(I-Z'_B)M'_AN'_B\ket{\psi}\ket{01}\nonumber\\
                & +  \frac{1}{4}X'_A(I-Z'_A)(I+Z'_B)M'_AN'_B\ket{\psi}\ket{10}\nonumber\\
                & +  \frac{1}{4}X'_AX'_B(I-Z'_A)(I-Z'_B)M'_AN'_B\ket{\psi}\ket{11}.
\end{align}
First consider the case where $M=N=I$. Then the second term vanishes, because $I\ket{\psi}=Z'_AZ'_B\ket{\psi}$ and $Z'_A\ket{\psi}=Z'_B\ket{\psi}$.
Similarly the third term vanishes.
The fourth term equals the first (except in the last two qubits) because $X'_AX'_B(I-Z'_A)(I-Z'_B)\ket{\psi}=(I+Z'_A)(I+Z'_B)X'_AX'_B\ket{\psi}$ by anti-commutativity, and $X'_AX'_B\ket{\psi}=\ket{\psi}$.
Hence we end up with
$$
\Phi(\ket{\psi})  = \left(\frac{1}{2\sqrt{2}}(I+Z'_A)(I+Z'_B)\ket{\psi}\right)\otimes\frac{1}{\sqrt{2}}(\ket{00}+\ket{11})=\ket{junk}\otimes\ket{\phi^+},
$$
where we defined $\ket{junk}:=\frac{1}{2\sqrt{2}}(I+Z'_A)(I+Z'_B)\ket{\psi}$.

If $MN=XX$ then the same proof applies, because $X'_AX'_B\ket{\psi}=\ket{\psi}$ and $X_AX_B\ket{\phi^+}=\ket{\phi^+}$. The same holds if $MN=ZZ$. 

Now consider the case $MN=XZ$. Looking at Eq.~(\ref{eq:isometry}), the first term vanishes because 
$$
(I+Z'_A)(I+Z'_B)X'_AZ'_B\ket{\psi}=X'_A(I-Z'_A)(I+Z'_B)\ket{\psi}=0, 
$$
using the anti-commutativity of $X'_A$ and $Z'_A$, and the fact that $(I+Z'_B)Z'_B\ket{\psi}=(I+Z'_B)\ket{\psi}$ (because $(Z'_B)^2\ket{\psi}=I\ket{\psi}$). Similarly the fourth term vanishes. For the second term we use
\begin{align*}
X'_B(I+Z'_A)(I-Z'_B)X'_AZ'_B\ket{\psi} & =(I+Z'_A)(I+Z'_B)X'_AX'_BZ'_B\ket{\psi}\\
                                      & =-(I+Z'_A)(I+Z'_B)X'_AX'_B\ket{\psi} =-(I+Z'_A)(I+Z'_B)\ket{\psi},
\end{align*}
where we used $X'_BZ'_B\ket{\psi}=-Z'_BX'_B\ket{\psi}$, $X'_AX'_B\ket{\psi}=\ket{\psi}$, and $(I+Z'_B)Z'_B\ket{\psi}=(I+Z'_B)\ket{\psi}$. We similarly analyze the third term. We end up with
\begin{align*}
\Phi(X'_AZ'_B\ket{\psi})&  = -\frac{1}{4}(I+Z'_A)(I+Z'_B)\ket{\psi}\ket{01}+\frac{1}{4}(I+Z'_A)(I+Z'_B)\ket{\psi}\ket{10}\\ 
                       & = \frac{1}{2\sqrt{2}}(I+Z'_A)(I+Z'_B)\ket{\psi}\otimes\frac{1}{\sqrt{2}}(\ket{10}-\ket{01})=\ket{junk}\otimes X_AZ_B\ket{\phi^+}.
\end{align*}
For the other five possible $M,N$ pairs, a similar calculation (starting from Eq.~(\ref{eq:isometry}) and using the known commutation and anti-commutation properties) works to establish the desired property: 
$$
\Phi(M'_AN'_B\ket{\psi})=\ket{junk}\otimes M_A N_B\ket{\phi^+}.
$$
\end{proof}

Accordingly, we can use this robust result to test whether a given protocol behaves essentially like $P^*$, based only on classical-input output behavior: run it multiple times on uniformly distributed classical input bits, observe the classical output bits, and see if the winning probability is close to the optimal value $\cos(\pi/8)^2$. If so, then (up to local change of basis) the state must be close to an EPR-pair tensored with some other ``junk'' state, and the behavior of the measurements must be close to the ones of the standard CHSH protocol~$P^*$.  

There has been a lot more work along these lines. McKague et al.~\cite{mys:robustselftest} give a more general framework for bipartite robust self-testing that subsumes the CHSH inequality, the Mayers-Yao self-test (simplifying~\cite{mmmo:selftesting}), as well as others. Yang and Navascu\'es~\cite{yang&navascues} give robust self-tests for any entangled two-qubit states, not just maximally entangled ones; the noise-resistance was further improved in~\cite{yvbsn:openingblackbox}. McKague~\cite{mckague:selftestgraph} and Miller and Shi~\cite{miller&shi:selftestingxor} give results about self-testing of states shared by more than two parties. 

In some applications one needs to have many states that all behave like EPR-pairs, not just the one EPR-pair that is needed for an optimal protocol for CHSH. Recently, Reichardt et al.~\cite{ruv:leash} proved a subtle robustness result for playing \emph{many instances} of CHSH.  Roughly, their result says: if a quantum protocol wins a fraction of nearly $\cos(\pi/8)^2$ of a sequence of $k$ given instances of the CHSH game, then most blocks of $m=k^{\Omega(1)}$ instances have the property that they start ``essentially'' (again, up to local operations and small differences like in Theorem~\ref{th:chshselftest}) from $m$~EPR-pairs and run $m$ independent instances of the standard protocol~$P^*$. With significant additional work it is possible to use this result to devise methods that allow a classical system to ``command'' an untrusted quantum system, in the sense of forcing that quantum system to either use essentially the states and operations you want it to use, or be detected if it deviates too much from those states and operations. Such control enables various kinds of device-independent quantum cryptography, as well as the ability to offload general quantum computation to untrusted devices.


\section{Quantum testing of quantum properties: States}\label{sec:states}

In the third part of this survey we discuss quantum testers for quantum properties. The first decision we have to take in this setting is how the quantum object which we wish to test is presented to us. The two options are a quantum presentation (i.e., we are given access to the object as a black box, which can be used in a quantum algorithm), or a classical presentation (i.e., we are given an efficient classical description of the object, such as a quantum circuit). We concentrate on the former option (Sections~\ref{sec:states}--\ref{sec:dynamics}), as this seems to be the most natural generalization of ideas from classical property testing. However, in Section~\ref{sec:compcomp} we also discuss the latter option, which turns out to be important in quantum computational complexity.

Our focus in this part of the survey is on quantum tests for quantum properties which generalize the idea of classical property testing. That is, tests which are designed to distinguish quantum states (or operations) with some property from those far from having that property, given access to the state (or operation) as a black box. We also mention here two related and well-studied areas elsewhere in quantum information theory. The first is quantum state discrimination, which can be seen as a quantum generalization of classical hypothesis testing. The archetypal problem in this setting is as follows: given the ability to create copies of an unknown quantum state $\rho$ picked from a known set $S$ of quantum states, identify $\rho$ with minimal probability of error.
Some authors use the term ``quantum hypothesis testing'' for this problem~\cite{chefles00}; others reserve this term for the case $|S|=2$, where precise results have been obtained relating the optimal error probability to the number of copies of $\rho$ consumed, and trade-offs between different kinds of error have been determined~\cite{audenaert08}. See the surveys~\cite{barnett09,chefles00} for detailed reviews of quantum state discrimination. The second area is the question of directly estimating some quantity of interest about a completely unknown quantum state $\rho$, given access to multiple copies of the state, without performing full tomography. Results of this form include direct estimation of the spectrum of $\rho$~\cite{keyl01}, estimation of polynomials in the entries of $\rho$~\cite{brun04}, and estimation of quantities related to entanglement (e.g.,~\cite{guhne09}).

We begin our discussion of quantum properties by considering properties of quantum states, first pure states and then mixed states.



\subsection{Pure states}
\label{sec:pure}

A pure state $\ket{\psi}$ of a $d$-dimensional quantum system is described by a $d$-dimensional complex unit vector (technically, a ray; that is, $e^{i \theta} \ket{\psi}$ is equivalent to $\ket{\psi}$ for all real $\theta$). A property of $d$-dimensional pure quantum states is therefore a set $\Prop \subseteq \C^d$. One can naturally generalize this to properties of pairs of quantum states, where $\Prop \subseteq \C^d \times \C^d$, etc.

There is a natural measure of distance between quantum states $\ket{\psi}$ and $\ket{\phi}$: the trace distance
\be \label{eq:tracedistance} D(\ket{\psi},\ket{\phi}) := \frac{1}{2} \|\proj{\psi} - \proj{\phi}\|_1 = \sqrt{1-|\ip{\psi}{\phi}|^2}. \ee
Here, as in Section~\ref{sec:stgates}, $\|\cdot\|_1$ is the trace norm (Schatten 1-norm) $\|M\|_1 := \Tr(|M|)$. Given a state promised to be either $\ket{\psi}$ or $\ket{\phi}$, with equal probability of each, the optimal probability of determining via a measurement which state we have is exactly $(1+D(\ket{\psi},\ket{\phi}))/2$~\cite{helstrom76,nielsen&chuang:qc}. We therefore say that $\ket{\psi}$ is \emph{$\eps$-close} to having property $\Prop$ if
\[ D(\ket{\psi},\Prop) := \inf_{\ket{\phi} \in \Prop} D(\ket{\psi},\ket{\phi}) \le \eps, \]
and similarly that $\ket{\psi}$ is \emph{$\eps$-far} from having property $\Prop$ if $D(\ket{\psi},\Prop) \ge \eps$. If $\ket{\psi}$ is $\eps$-close to having property $\Prop$, there is no hope of certifying that $\ket{\psi} \notin \Prop$ with worst-case bias larger than $\eps$, given access to only one copy of $\ket{\psi}$.

The complexity of algorithms for testing pure quantum states is measured by the number of copies of the test state $\ket{\psi}$ required to distinguish between the two cases that (a) $\ket{\psi} \in \Prop$, or (b) $\ket{\psi}$ is $\eps$-far away from having property $\Prop$.
We therefore say that $\Prop$ can be \emph{$\eps$-tested with $q$ copies} if there exists a quantum algorithm which uses $q$ copies of the input state to distinguish between these two cases, and fails with probability at most $1/3$ on any input. As with classical property testers, we say that a tester has \emph{perfect completeness} if it accepts every state in $\Prop$ with certainty.  Crucially, we look for algorithms where the number of copies used scales only in terms of $\eps$, and there is no dependence on the dimension~$d$, making this a fair analog of the classical concept. If we cannot find such an algorithm, we attempt to minimize the dependence on~$d$.

On the other hand, if we do not care about the dependence on $d$, \emph{any} (even infinite) property $\Prop \subseteq \C^d$ can be tested using $O(d/\eps^2)$ copies of the input state $\ket{\psi}$; it suffices to obtain an estimate $\ket{\psi'}$ such that $D(\ket{\psi'},\ket{\psi}) < \eps/2$, and accept if and only if $D(\ket{\psi'},\Prop) \le \eps/2$. In order to produce such an estimate one can use a procedure known as \emph{quantum state estimation}, which needs $O(d/\eps^2)$ copies of $\ket{\psi}$ to achieve the required accuracy with success probability at least $2/3$~\cite{bruss99}.


\subsubsection{Equality}
\label{sec:equality}

The first property we consider is extremely basic, but a useful building block for more complicated protocols: whether the input state is equal to some fixed state. We say that a state $\ket{\psi}$ satisfies the {\bf Equality to $\ket{\phi}$} property if $\ket{\psi} = e^{i \theta} \ket{\phi}$ for some real $\theta$, so $\Prop=\{e^{i \theta} \ket{\phi}:\theta\in\R\}$; it is necessary to allow an arbitrary phase~$\theta$ in the definition of this property, as $\ket{\psi}$ cannot be distinguished from $e^{i \theta} \ket{\psi}$ by any measurement. A natural test for {\bf Equality to $\ket{\phi}$} is simply to perform the measurement $\{ \proj{\phi}, I-\proj{\phi} \}$ on $\ket{\psi}$, and accept if and only if the first outcome is obtained. The probability of acceptance is precisely $|\ip{\psi}{\phi}|^2$, so if $\ket{\psi}$ satisfies the property, the test accepts with certainty. On the other hand, if $D(\ket{\psi},\ket{\phi})=\eps$, the test rejects with probability $1-|\ip{\psi}{\phi}|^2=\eps^2$. Via repetition, we find that for any~$\ket{\phi}$, {\bf Equality to $\ket{\phi}$} can be tested with $O(1/\eps^2)$ copies.

A matching lower bound follows from considering the special case where the input state is promised to be either $\ket{\phi}$ or some state $\ket{\phi'}$ such that $D(\ket{\phi},\ket{\phi'})=\eps$, with equal probability of each. Then any test which uses $k$ copies to test whether the input is equal to $\ket{\phi}$ is equivalent to a procedure which discriminates between $\ket{\phi}^{\otimes k}$ and $\ket{\phi'}^{\otimes k}$, which has success probability upper-bounded by $(1+D(\ket{\psi}^{\otimes k},\ket{\phi}^{\otimes k}))/2$. Using the definition (\ref{eq:tracedistance}) of the trace distance, we require $k = \Omega(1/\eps^2)$ to achieve success probability $2/3$. This same argument in fact shows that \emph{any} non-trivial property of pure states requires $\Omega(1/\eps^2)$ copies to be tested.

We remark that testing equality to a fixed state immediately generalizes to the problem of testing whether $\ket{\psi} \in \C^d$ is contained in a known subspace $S \subseteq \C^d$. Here the prescription is to perform the measurement $\{\Pi_S,I-\Pi_S\}$ $O(1/\eps^2)$ times, where $\Pi_S$ is the projector onto $S$, and accept if and only if the first outcome is obtained every time. For example, this allows the property {\bf Permutation Invariance} to be tested efficiently, where $\ket{\psi} \in (\C^d)^{\otimes n}$ satisfies the property if it is invariant under any permutation of the $n$ subsystems. As $\ket{\psi}$ is permutation-invariant if and only if it is contained in the symmetric subspace of $(\C^d)^{\otimes n}$, projecting onto this subspace gives an efficient test for this property. This procedure, which is known as \emph{symmetrization}, has been studied in the context of quantum fault-tolerance and can be performed efficiently~\cite{barenco97}; see Section~\ref{sec:unitinv} below for a description of how this can be achieved via the powerful primitive of generalized phase estimation.

Another immediate generalization of {\bf Equality to $\ket{\phi}$} is the question of testing whether two \emph{unknown} states are the same. We say that a pair of states $\ket{\psi}$, $\ket{\phi}$ satisfies the {\bf Equality} property if $\ket{\phi} = e^{i \theta} \ket{\psi}$ for some real $\theta$, so now the property is $\Prop=\{(\ket{\psi},e^{i \theta} \ket{\psi}): \ket{\phi}\text{ is a pure state},\theta\in\R\}$. In order to test this property, we will use a simple but important procedure known as the \emph{swap test}. This was used by Buhrman et al.~\cite{buhrman01} to demonstrate an exponential separation between the quantum and classical models of simultaneous message passing (SMP) communication complexity, and has since become a standard tool in quantum algorithm design. In the test, we take two (possibly mixed\footnote{See Section~\ref{sec:mixed} for more about mixed states and a formal definition.}) states $\rho$, $\sigma$ as input and attach an ancilla qubit in state $\ket{0}$. We then apply a Hadamard gate to the ancilla, followed by a controlled-SWAP gate (controlled on the ancilla), and another Hadamard gate. We then measure the ancilla qubit and accept if the answer is 0. This procedure is illustrated by the circuit in Figure~\ref{fig:swaptest}.
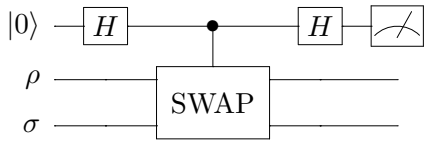
\begin{figure}[hbt]
\[
\Qcircuit @C=1em @R=.7em {
\lstick{\ket{0}} & \gate{H} & \ctrl{1} & \gate{H} & \meter \\
\lstick{\rho} & \qw & \multigate{1}{\text{SWAP}} & \qw & \qw\\
\lstick{\sigma} & \qw & \ghost{\text{SWAP}} & \qw & \qw
}
\]
\caption{The swap test.}
\label{fig:swaptest}
\end{figure}

One can show~\cite{buhrman01,kobayashi03} that the swap test accepts with probability
\[ \dfrac{1}{2} + \dfrac{1}{2}\Tr(\rho\,\sigma), \]
which for pure states $\ket{\psi}$, $\ket{\phi}$ is equal to $(1 + |\ip{\psi}{\phi}|^2)/2 = 1-D(\ket{\psi},\ket{\phi})^2/2$. In particular, if this test is applied to two pure states which satisfy the {\bf Equality} property then the test accepts with certainty. On the other hand, if the states are $\eps$-far away from equal, then by definition
\[ \inf_{\ket{\xi}} D(\ket{\psi}\ket{\phi},\ket{\xi}^{\otimes 2}) \ge \eps. \]
But
\[ \inf_{\ket{\xi}} D(\ket{\psi}\ket{\phi},\ket{\xi}^{\otimes 2}) = \sqrt{1-\sup_{\ket{\xi}} |\ip{\psi}{\xi}\ip{\phi}{\xi}|^2} \le \sqrt{1- |\ip{\psi}{\phi}|^2} = D(\ket{\psi},\ket{\phi}), \]
where the inequality follows by taking $\ket{\xi}=\ket{\phi}$.
Thus the test rejects with probability at least $\eps^2/2$, so $O(1/\eps^2)$ repetitions suffice to detect states $\eps$-far away from equal with constant probability; in other words, {\bf Equality} can be tested with $O(1/\eps^2)$ copies. The swap test is in fact optimal among all testers for this property which have perfect completeness and use one copy of each of the input states. To see this, observe that the swap test is precisely the operation of projecting onto the symmetric subspace of $(\C^d)^{\otimes 2}$. Any tester which accepts every pair of equal states $\ket{\psi}^{\otimes 2}$ must accept every state in this subspace, so the swap test is the most refined test of this type. One can generalize this to prove that the swap test is also optimal among tests which are allowed two-sided error, in the sense that it achieves the largest possible gap between the acceptance probabilities in equal and orthogonal instances~\cite{kada08}.

The property of {\bf Equality} can be generalized further, to the question of testing whether $n$ pure states $\ket{\psi_1},\dots,\ket{\psi_n}$ are all equal. The natural tester for this property, generalizing the swap test, is to project onto the symmetric subspace of $(\C^d)^{\otimes n}$, i.e., to perform symmetrization~\cite{barenco97}. Kada et al.~\cite{kada08} have studied this procedure under the name of the \emph{permutation test}, and show that the test accepts $n$-tuples where at least one pair of states is orthogonal with probability at most $1/n$, and that this is optimal among tests with perfect completeness. No explicit bounds appear to be known on this tester's parameters if the promise is relaxed, for example to specify that at least one pair of states has overlap at most $\eps$. Kada et al.\ also study a related tester, called the \emph{circle test}, and prove that this tester is also optimal for prime $n$~\cite{kada08}. This procedure is somewhat simpler as it only involves taking a quantum Fourier transform over $\Z_n$, rather than $S_n$.


\subsubsection{Productness}
\label{sec:productness}

A pure state $\ket{\psi} \in (\C^d)^{\otimes n}$ of $n$ $d$-dimensional subsystems is said to be \emph{product} (i.e., satisfy the {\bf Product} property) if it can be written as a tensor product $\ket{\psi} = \ket{\psi_1} \ket{\psi_2} \dots \ket{\psi_n}$ for some local states $\ket{\psi_1},\dots,\ket{\psi_n} \in \C^d$. A state which is not product is called \emph{entangled}. Entanglement is a ubiquitous phenomenon in quantum information theory (see for example~\cite{horodecki09} for an extensive review), so the property of being a product state is an obvious target to test.

Given just one copy of $\ket{\psi}$, our ability to test whether it is product is very limited. Indeed, as every quantum state can be written as a linear combination of product states, any tester which accepts all product states with certainty must accept all states with certainty. However, if we are given two copies of $\ket{\psi}$, there are non-trivial tests we can perform. In particular, consider the following procedure, which was first discussed by Mintert et al.~\cite{mintert05} and is called the \emph{product test}~\cite{harrow13}: apply the swap test across each corresponding pair of subsystems of $\ket{\psi}^{\otimes 2}$, and accept if and only if all of the tests accept. The overall procedure is illustrated in Figure \ref{fig:prodtest}.

\begin{figure}[hbt]
\begin{center}
\begin{tikzpicture}[scale=1.25]

\filldraw[fill=gray!10,rounded corners] (-0.6,-0.4) rectangle (4.6,0.4);
\filldraw[fill=gray!10,rounded corners] (-0.6,0.6) rectangle (4.6,1.4);

\draw[rounded corners] (-0.4,-0.6) rectangle (0.4,1.6);
\draw[rounded corners] (0.6,-0.6) rectangle (1.4,1.6);
\draw[rounded corners] (1.6,-0.6) rectangle (2.4,1.6);
\draw[rounded corners] (3.6,-0.6) rectangle (4.4,1.6);

\foreach \x in {1,...,3} {
  \filldraw[fill=gray!75] (\x-1,0) node {\x} circle (0.25);
  \filldraw[fill=gray!50] (\x-1,1) node {\x} circle (0.25);
}
\node at (3,0) {\Huge ...};
\node at (3,1) {\Huge ...};
\filldraw[fill=gray!75] (4,0) node {$n$} circle (0.25);
\filldraw[fill=gray!50] (4,1) node {$n$} circle (0.25);
\node[anchor=east] at (-0.8,1) {$\ket{\psi_1}$};
\node[anchor=east] at (-0.8,0) {$\ket{\psi_2}$};
\end{tikzpicture}

\caption{Schematic of the product test applied to an $n$-partite state $\ket{\psi}$. The swap test (vertical boxes) is applied to the $n$ pairs of corresponding subsystems of two copies of $\ket{\psi}$ (horizontal boxes).}

\label{fig:prodtest}
\end{center}
\end{figure}
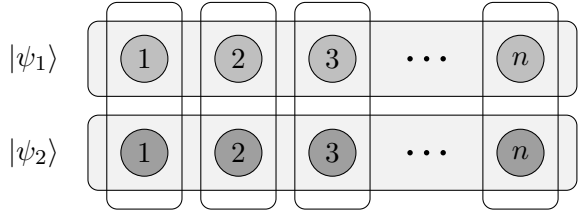

If $\ket{\psi}$ is indeed product, then all of the swap tests will accept. On the other hand, if $\ket{\psi}$ is far from product, the intuition is that the entanglement in $\ket{\psi}$ will cause at least some of the tests to reject with fairly high probability. This intuition can be formalized to give the following result.

\begin{thm}[Harrow and Montanaro~\cite{harrow13}]
\label{thm:producttest}
If $\ket{\psi}$ is $\eps$-far from product, the product test rejects with probability $\Omega(\eps^2)$.
\end{thm}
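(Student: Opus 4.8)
The plan is to reduce the analysis of the product test to a statement about the purities of the reduced density matrices of $\ket\psi$ across all bipartitions, and then to show that being $\eps$-far from product forces these purities to be small on average.

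First I would compute the acceptance probability exactly. Since the $n$ swap tests act on disjoint pairs of registers (with disjoint ancillas), together they implement the projection $\bigotimes_{i=1}^n \frac{I+F_i}{2}$ onto the product of symmetric subspaces, where $F_i$ swaps the $i$-th subsystem of the two copies of $\ket\psi$. Expanding $\bigotimes_i \frac{I+F_i}{2} = 2^{-n}\sum_{S\subseteq[n]}F_S$ (with $F_S=\prod_{i\in S}F_i$) and using the swap identity $\bra\psi\bra\psi F_S\ket\psi\ket\psi = \Tr(\rho_S^2)$ — the same computation underlying the swap test's acceptance probability in Section~\ref{sec:equality}, with $\rho_S := \Tr_{\bar S}\proj\psi$ — gives
\[ P_{\text{acc}} = \frac{1}{2^n}\sum_{S\subseteq[n]} \Tr(\rho_S^2), \qquad\text{hence}\qquad P_{\text{rej}} = \E_S\big[1 - \Tr(\rho_S^2)\big], \]
the expectation being over a uniformly random subset $S$. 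Each term $1-\Tr(\rho_S^2)\ge 0$ measures the entanglement of $\ket\psi$ across the cut $S\mid\bar S$, so perfect completeness (all terms vanish for product states) is immediate, and the task becomes lower-bounding this average impurity.

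Next I would exploit the freedom to apply local unitaries, which commute with every $F_i$ and so leave $P_{\text{rej}}$ unchanged. Choosing the closest product state to be $\ket{0^n}$ and writing $|\ip{0^n}{\psi}|^2 = 1-\delta^2$ (so $\delta\ge\eps$ by the $\eps$-far hypothesis), first-order optimality of this choice — obtained by varying each local factor $\ket0\mapsto\cos t\,\ket0+e^{i\theta}\sin t\,\ket v$ and setting the derivative of the overlap to zero — forces all Hamming-weight-$1$ amplitudes of $\ket\psi$ to vanish. Thus I can write $\ket\psi = \sqrt{1-\delta^2}\,\ket{0^n} + \delta\,\ket\chi$ with $\ket\chi=\sum_z c_z\ket z$ supported entirely on basis states of weight $\ge 2$. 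Substituting into $\rho_S$ and expanding $\Tr(\rho_S^2)$ to second order in $\delta$, the weight-$1$ vanishing kills the $O(\delta)$ term, and a short computation shows $1-\Tr(\rho_S^2) = 2\delta^2\sum_{z\,:\,\supp(z)\cap S\ne\emptyset,\ \supp(z)\cap\bar S\ne\emptyset}|c_z|^2 + O(\delta^3)$: only basis states whose support is \emph{split} by the cut contribute. Averaging over $S$ and using that a fixed support of size $w\ge2$ is split by a random cut with probability $1-2^{1-w}\ge\tfrac12$ yields $P_{\text{rej}} \ge \delta^2 + O(\delta^3) \ge \eps^2 + O(\eps^3)$, giving the desired $\Omega(\eps^2)$ behaviour for small~$\eps$ (and showing that ``spreading out'' the entanglement over many weight-$2$ terms cannot help, since each straddles a random cut with probability at least $\tfrac12$).

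The main obstacle is promoting this leading-order heuristic to a rigorous bound valid for all $\eps\in(0,1]$. Controlling the $O(\delta^3)$ remainder requires either bounding the omitted terms uniformly or, more robustly, replacing the Taylor expansion by an exact estimate. A clean non-perturbative route uses the bipartite Schmidt decomposition across each cut: writing $\mu_1^S$ for the largest squared Schmidt coefficient, one has $\Tr(\rho_S^2)\le (\mu_1^S)^2+(1-\mu_1^S)^2$ and hence $1-\Tr(\rho_S^2)\ge 2\mu_1^S(1-\mu_1^S)$, which reduces everything to lower-bounding $\E_S[1-\mu_1^S]$. The delicate point — and where I expect the real work to lie, mirroring the case analysis in Harrow and Montanaro's original argument — is separating the regime where $\delta$ is bounded away from $0$ (where the test rejects with constant probability) from the near-product regime, and patching these together to obtain $P_{\text{rej}}=\Omega(\eps^2)$ with an explicit, if modest, constant.
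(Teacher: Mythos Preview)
Your setup matches the paper's exactly: the identity $P_{\text{acc}}=2^{-n}\sum_S\Tr(\rho_S^2)$, the choice of $\ket{0^n}$ as the closest product state, the vanishing of weight-$1$ amplitudes by first-order optimality, and the resulting small-$\eps$ bound $P_{\text{rej}}\ge \eps^2-O(\eps^3)$ are precisely the first half of the Harrow--Montanaro argument (their explicit bound is $P_{\text{test}}\le 1-\eps^2+\eps^3+\eps^4$).

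The gap is in the large-$\eps$ regime. You correctly flag it as the hard part, but neither of your proposed routes supplies the missing idea. The Schmidt bound $1-\Tr(\rho_S^2)\ge 2\mu_1^S(1-\mu_1^S)$ is true, but lower-bounding $\E_S[1-\mu_1^S]$ for a state far from product is essentially the original problem restated, and there is no obvious way to do it without further structure. The paper's actual technique is different and worth knowing: it observes that the $n$-party product test is at least as stringent as the product test with respect to \emph{any coarser partition} of $[n]$ into $k\le n$ blocks (since the latter projects onto a larger subspace). One then argues that if $\ket\psi$ is far from $n$-partite product, there exists a coarser partition for which the distance to the nearest product state lies in the small-$\eps$ range where the perturbative bound already works. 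This recursive reduction to a partition with controlled distance is the genuine content of the large-$\eps$ case, and is what your proposal is missing.
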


Thus the property of productness can be tested with $O(1/\eps^2)$ copies. We will not give the full, and somewhat technical, proof of Theorem~\ref{thm:producttest} here, but merely sketch the proof technique; see~\cite{harrow13} for details. 

\begin{proof}[Proof sketch]
Let $P_{\text{test}}(\ket{\psi})$ denote the probability of the product test accepting when applied to two copies of $\ket{\psi}$, and let the distance of $\ket{\psi}$ from the nearest product state be $\eps$. The proof is split into two parts, depending on whether $\eps$ is low or high. For $S \subseteq [n]$, let $\psi_S$ be the mixed state obtained by tracing out (discarding) the qubits not in $S$. Then the starting point is the observation that
\be \label{eq:avgpurity} P_{\text{test}}(\ket{\psi}) = \frac{1}{2^n} \sum_{S \subseteq [n]} \Tr(\psi_S^2). \ee
The quantity $\Tr(\psi_S^2)$ measures the purity of the reduced state $\psi_S$, which can be seen as a measure of the entanglement of $\ket{\psi}$ across the bipartition $(S,S^c)$; if $\ket{\psi}$ were product across this bipartition, $\psi_S$ would be pure and $\Tr(\psi_S^2)$ would equal 1. By (\ref{eq:avgpurity}), the probability that the test passes is equal to the average purity of the reduced state obtained by a random bipartition of the $n$ systems. Writing $\ket{\psi} = \sqrt{1-\eps^2} \ket{0^n} + \eps \ket{\phi}$ (without loss of generality), for some product state $\ket{0^n}$ and arbitrary orthogonal state $\ket{\phi}$, Eq.~(\ref{eq:avgpurity}) allows an explicit expression for $\Tr(\psi_S^2)$ in terms of $\eps$ and $\ket{\phi}$ to be obtained. Expanding $\ket{\phi} = \sum_{x \in \{0,\dots,d-1\}^n} \alpha_x \ket{x}$ and summing over $S$, we get an expression containing terms of the form $\sum_{x \in \{0,\dots,d-1\}^n} |\alpha_x|^2 c^{|x|}$ for some $c<1$, where $|x|:= |\{i:x_i \neq 0\}|$. In order to obtain a non-trivial bound from this, the final step of the first part of the proof is to use the fact that $\ket{0^n}$ is the closest product state to $\ket{\psi}$ to argue that $\ket{\phi}$ cannot have any amplitude on basis states $\ket{x}$ such that $|x| \le 1$. A bound is eventually obtained that is applicable when $\eps$ is small, namely that
\[ P_{\text{test}}(\ket{\psi}) \le 1 - \eps^2 + \eps^3 + \eps^4. \]
In the case where $\eps$ is large, this does not yet give a useful upper bound, so the second part of the proof finds a constant upper bound on $P_{\text{test}}(\ket{\psi})$. This quantity can be shown to be upper bounded by the probability that a relaxed test, for being product across some partition of the $n$ subsystems into $k \le n$ parties, passes. If $\ket{\psi}$ is far from product across the $n$ subsystems, the proof shows that one can find a partition into $k$ parties (for some $k \le n$) such that the distance from the closest product state (with respect to this partition) falls into the regime where the first part of the proof works. The eventual result is that if $\eps^2 \ge 11/32 > 0.343$, then $P_{\text{test}}(\ket{\psi}) \le 501/512 < 0.979$; combining these two bounds completes the proof.
\end{proof}

We mention two implications of Theorem \ref{thm:producttest}. First, by the characterization (\ref{eq:avgpurity}), the content of Theorem \ref{thm:producttest} can be understood as: if a pure state of $n$ systems is still fairly pure on average after discarding a random subset of the systems, it must in fact have been close to a product state in the first place. In the classical property testing literature, one of the motivations for analysing tests for combinatorial properties is to obtain some insight into the structure of the property being tested; Theorem \ref{thm:producttest} can be seen as achieving something similar in a quantum setting.

Second, by allowing one to efficiently certify productness given two copies of $\ket{\psi}$, the product test can be used to show that quantum Merlin-Arthur proof systems with multiple provers can be simulated efficiently by two provers, or in complexity-theoretic terminology that $\QMA(k)=\QMA(2)$~\cite{harrow13}.
Roughly speaking, to simulate a $k$-Merlin protocol, one can simply ask two Merlins to provide identical copies of the $k$-Merlin proofs, and perform the product test to ensure that they are indeed product states. Since the product test uses only two copies of the state, two Merlins suffice.
Via a previous result of Aaronson et al.~\cite{aaronson09} giving a multiple-prover quantum proof system for 3-SAT, this in turn allows one to prove hardness of various tasks in quantum information theory, conditioned on the hardness of 3-SAT~\cite{harrow13}. This is again analogous to the classical literature, where efficient property testers are used as components in hardness-of-approximation results.

Although the product test itself is natural, the detailed proof of Theorem~\ref{thm:producttest} given in~\cite{harrow13} is a lengthy case analysis which does not provide much intuition and gives suboptimal constants. For example, the lower bound obtained on the probability of the product test rejecting does not increase monotonically with $\eps$, which presumably should be the case for an optimal bound. We therefore highlight the following open question.

\begin{question}
Can the analysis of the product test be improved?
\end{question}


\subsubsection{Arbitrary finite sets}

The following algorithm of Wang~\cite{wang11} gives a tester for \emph{any} finite property $\Prop \subset \C^d$ (this is similar to the result for any finite \emph{classical} property mentioned at the end of Section~\ref{sec:bvtester}). The tester cannot necessarily be implemented time-efficiently in general. Given access to copies of an input state $\ket{\psi}$, the tester proceeds as follows:
\begin{enumerate}
\item Create the state $\ket{\psi}^{\otimes T}$, for some $T$ to be determined.
\item Let $S = \spann \{ \ket{\phi}^{\otimes T}:\ket{\phi} \in \Prop\}$. Perform the measurement $\{\Pi_S,I-\Pi_S\}$, where $\Pi_S$ is the projector onto $S$, and accept if the first outcome is obtained. Otherwise, reject.
\end{enumerate}

\begin{thm}[Wang~\cite{wang11}]
\label{thm:arbsets}
Let $\Prop \subset \C^d$ be such that $\min_{\ket{\phi}\neq\ket{\phi'}\in\Prop} D(\ket{\phi},\ket{\phi'}) = \delta$. Then it suffices to take $T = O(\log |\Prop| \max\{ \eps^{-2}, \delta^{-2} \})$ to obtain a tester which accepts every state in $\Prop$ with certainty, and rejects every state $\ket{\psi}$ such that $D(\ket{\psi},\Prop)\ge\eps$ with probability at least $2/3$.
\end{thm}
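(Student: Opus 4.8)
The plan is to verify perfect completeness directly and then to reduce soundness to a linear-algebra estimate on the projection $\Pi_S$. Completeness is immediate: if $\ket{\psi}\in\Prop$ then $\ket{\psi}^{\otimes T}$ is one of the spanning vectors of $S$, so $\Pi_S\ket{\psi}^{\otimes T}=\ket{\psi}^{\otimes T}$ and the tester accepts with certainty, for every $T$. For soundness I must show that whenever $D(\ket{\psi},\Prop)\ge\eps$, the acceptance probability $\bra{\psi^{\otimes T}}\Pi_S\ket{\psi^{\otimes T}}=\norm{\Pi_S\ket{\psi}^{\otimes T}}^2$ is at most $1/3$ for the stated value of $T$.

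The two hypotheses enter through tensor powers of inner products. Write $\Prop=\{\ket{\phi_1},\dots,\ket{\phi_n}\}$ with $n=|\Prop|$, and set $\ket{u_i}=\ket{\phi_i}^{\otimes T}$ and $\ket{v}=\ket{\psi}^{\otimes T}$. Using $D(\ket{\psi},\ket{\phi})=\sqrt{1-|\ip{\psi}{\phi}|^2}$ together with $1-x\le e^{-x}$, the far-from-property assumption $D(\ket{\psi},\ket{\phi_i})\ge\eps$ gives the overlap bound $|\ip{u_i}{v}|=|\ip{\phi_i}{\psi}|^T\le e^{-\eps^2 T/2}$ for every $i$, while the separation assumption $D(\ket{\phi_i},\ket{\phi_j})\ge\delta$ for $i\ne j$ gives $|\ip{u_i}{u_j}|\le e^{-\delta^2 T/2}$. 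Thus a large tensor power exponentially suppresses all relevant overlaps, at a rate governed by $\eps^2$ for the input and by $\delta^2$ for the mutual overlaps of the spanning states.

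Next I express the acceptance probability in terms of the Gram matrix $G$ of the $\ket{u_i}$, defined by $G_{ij}=\ip{u_i}{u_j}$. For $T$ large enough the $\ket{u_i}$ are linearly independent, and the projector onto their span is $\Pi_S=\sum_{i,j}(G^{-1})_{ij}\ketbra{u_i}{u_j}$. Writing $c_i=\ip{u_i}{v}$, this yields $\norm{\Pi_S\ket{v}}^2=c^\dagger G^{-1}c\le\norm{G^{-1}}\cdot\norm{c}^2$. The numerator is controlled by the first overlap bound, $\norm{c}^2\le n\,e^{-\eps^2 T}$, and the conditioning of $G$ by the second: $G$ has unit diagonal and off-diagonal entries of modulus at most $e^{-\delta^2 T/2}$, so by Gershgorin's circle theorem $\lambda_{\min}(G)\ge 1-(n-1)e^{-\delta^2 T/2}$. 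Choosing $T\ge (2/\delta^2)\ln(2n)$ forces $\lambda_{\min}(G)\ge 1/2$ (which also certifies linear independence, so the formula for $\Pi_S$ is valid), whence $\norm{G^{-1}}\le 2$ and the acceptance probability is at most $2n\,e^{-\eps^2 T}$. Choosing additionally $T\ge\ln(6n)/\eps^2$ makes this at most $1/3$. Both requirements are met by $T=O(\log n\cdot\max\{\eps^{-2},\delta^{-2}\})$, as claimed.

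The main obstacle is precisely the conditioning step: projecting $\ket{v}$ onto the \emph{span} $S$ rather than onto a single $\ket{u_i}$ can in principle amplify the small individual overlaps $c_i$ if the spanning vectors $\ket{u_i}$ are nearly linearly dependent, i.e. if $G$ has a tiny smallest eigenvalue. This is exactly the scenario the separation parameter $\delta$ is introduced to rule out: a large tensor power drives the off-diagonal entries of $G$ to zero at rate $\delta^2$, making $G$ arbitrarily close to the identity and hence well-conditioned, so that no such amplification occurs. Everything else is a routine estimate once this conditioning bound is in place.
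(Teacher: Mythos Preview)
Your proof is correct and follows the same approach as the paper, which only sketches the intuition: that the $\ket{\phi}^{\otimes T}$ form an approximately orthonormal basis for $S$, so $\bra{\psi}^{\otimes T}\Pi_S\ket{\psi}^{\otimes T}\approx\sum_{\ket{\phi}\in\Prop}|\ip{\psi}{\phi}|^{2T}\le |\Prop|(1-\eps^2)^T$. You make this rigorous by writing $\Pi_S$ explicitly in terms of the Gram matrix inverse and controlling its conditioning via Gershgorin, which is exactly what is needed to turn the paper's ``$\approx$'' into a hard inequality.
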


Observe that the dependence on $|\Prop|$ is only logarithmic. The intuition behind Theorem \ref{thm:arbsets} is that, if all the states in $\Prop$ have large pairwise distances, $\{\ket{\phi}^{\otimes T}\}$ is an approximately orthonormal basis for $S$, so if $\ket{\psi}$ is $\eps$-far from $\Prop$, the probability of incorrectly accepting is
\[ \bra{\psi}^{\otimes T} \Pi_S \ket{\psi}^{\otimes T} \approx \sum_{\ket{\phi} \in \Prop} |\ip{\psi}{\phi}|^{2T} \le |\Prop| (1-\eps^2)^T, \]
which is sufficiently small when $T=O((\log|\Prop|)/\eps^2)$.
Wang describes an application of Theorem~\ref{thm:arbsets} to testing the set of permutations of $n$ qubits using $O((n \log n) / \eps^2)$ copies~\cite{wang11}. However, the dependence of the theorem on $\delta$ seems to limit its applicability. It is an interesting question whether this dependence can be improved or removed, either by better analysis of the above tester or by designing a new tester.

\begin{question}
Does there exist a tester for arbitrary finite properties $\Prop \subset \C^d$ which uses $\polylog |\Prop|$ copies, and whose parameters have no dependence on $\min_{\ket{\phi}\neq\ket{\phi'}\in\Prop} D(\ket{\phi},\ket{\phi'})$?
\end{question}

The above tester is a general algorithm for testing any property $\Prop$. For some properties $\Prop$ it is possible to prove better bounds on the performance of this algorithm than Theorem \ref{thm:arbsets} would give, or prove bounds with fewer preconditions. For example, the product test is a particular case of this algorithm (with $T=2$), and Theorem \ref{thm:producttest} gives non-trivial bounds on its performance, even though it is applied to the infinite set of product states. We also remark that an alternative algorithm to the above tester would be to produce $\ket{\psi}^{\otimes T}$, and for each $\ket{\phi} \in \Prop$ in turn, perform the measurement $\{\proj{\phi}^{\otimes T},I-\proj{\phi}^{\otimes T}\}$, and accept if and only if the first outcome is obtained from any measurement. This algorithm would achieve similar scaling in terms of $\eps$ and $\delta$ (as can be shown using the ``quantum union bound'' of Aaronson~\cite{aaronson06}, or the ``gentle measurement lemma'' of Winter~\cite{winter99} and Ogawa and Nagaoka~\cite{ogawa02}) but would not have perfect completeness.


\subsubsection{Open questions}

There are a number of interesting sets of pure states for which an efficient tester is not known. One such set is the \emph{stabilizer states}. Recall that the Pauli matrices on one qubit are defined to be the set
\begin{equation*}
I = \mm{1&0\\0&1},\;X=\mm{0&1\\1&0},\;Y=\mm{0&-i\\i&0},\;Z=\mm{1&0\\0&-1}. 
\end{equation*}
They form a basis for the space of single-qubit linear operators, and by tensoring form a basis for the space of linear operators on $n$ qubits; for $s\in \{I,X,Y,Z\}^n$, we write $\sigma_s$ for the corresponding operator on $n$ qubits. We call each such tensor product operator a ($n$-qubit) Pauli matrix, and use $\Prop_n$ to denote the set of all $n$-qubit Pauli matrices, together with phases $\pm1$, $\pm i$, which forms a group under multiplication.

A state $\ket{\psi}$ of $n$ qubits is said to be a stabilizer state if there exists a maximal Abelian subgroup $G$ of $\Prop_n$ such that $U\ket{\psi} = \ket{\psi}$ for all $U \in G$. Stabilizer states are important in the study of quantum error-correction~\cite{gottesman99} and measurement-based quantum computation~\cite{raussendorf03}, as well as many other areas of quantum information. It is known that, given access to copies of an unknown stabilizer state $\ket{\psi}$ of $n$ qubits, $\ket{\psi}$ can be learned with $O(n)$ copies~\cite{aaronson08}; there is a matching $\Omega(n)$ lower bound following from an information-theoretic argument~\cite{holevo73}. However, it might be possible to \emph{test} whether $\ket{\psi}$ is a stabilizer state using far fewer copies.

\begin{question}
Is there a tester for the property of being a stabilizer state whose parameters do not depend on the number of qubits $n$?
\end{question}

Other sets of pure states for which it would be interesting to have an efficient tester are matrix product states (see, e.g., \cite{perezgarcia07}) and states of low Schmidt rank, or with low complexity with respect to some other entanglement measure~\cite{guhne09}. See Section~\ref{sec:mixed} below for evidence for a lower bound on the complexity of testing the Schmidt rank.

Another interesting, and as yet largely unexplored, direction for future research is testing properties of quantum states in a distributed setting. Here we imagine that two parties, Alice and Bob, each hold part of one copy of a large unknown state $\ket{\psi}$. Their goal is to determine whether $\ket{\psi}$ satisfies some property while exchanging only a small number of qubits; in particular, Alice cannot just send her half of the state to Bob. Our normal complexity measure ``number of copies consumed'' is thus replaced with ``number of qubits sent.'' Aharonov et al.~\cite{aharonov14} recently showed that the $d$-dimensional maximally entangled state $\frac{1}{\sqrt{d}} \sum_{i=1}^d \ket{i}\ket{i}$ can be tested up to accuracy $\eps$ by communicating only $O(\log 1/\eps)$ qubits. There are many other properties where the question of existence of communication-efficient testers remains open.


\subsection{Mixed states}
\label{sec:mixed}

A \emph{mixed} state $\rho$ is a convex combination of pure states. Mixed states are described by density matrices, which are positive semidefinite matrices with unit trace; we let $\mathcal{B}(\C^d)$ denote the set of $d$-dimensional density matrices. The concept of property testing can easily be generalized from pure states to mixed states. We retain the same, natural distance measure
\[ D(\rho,\sigma) := \frac{1}{2} \| \rho - \sigma \|_1, \]
which is called the trace distance between $\rho$ and $\sigma$. Note that for classical probability distributions (i.e., diagonal density matrices) this is just the total variation distance. As before, say that $\rho$ is \emph{$\eps$-far} from having property $\Prop \subseteq \mathcal{B}(\C^d)$ if
\[ D(\rho, \Prop) := \inf_{\sigma \in \Prop} D(\rho,\sigma) \ge \eps, \]
and \emph{$\eps$-close} to having property $\Prop$ if $D(\rho, \Prop) \le \eps$. Another important distance measure for mixed states is the \emph{fidelity}, which is defined as $F(\rho,\sigma) := \|\sqrt{\rho}\sqrt{\sigma}\|_1$, where $\sqrt{\rho}$ denotes the positive semidefinite square root of the operator $\rho$. For any mixed state $\rho$ and pure state $\ket{\psi}$, $F(\rho,\proj{\psi}) = \sqrt{\bracket{\psi}{\rho}{\psi}}$. The fidelity and trace distance are related by the inequalities~\cite[Eq.~9.110]{nielsen&chuang:qc}
\be \label{eq:fvsd} 1 - F(\rho,\sigma) \le D(\rho,\sigma) \le \sqrt{1- F(\rho,\sigma)^2}. \ee
In a mixed-state property testing scenario, we are given $k$ copies of $\rho$, for some unknown $\rho$, and asked to perform a measurement on $\rho^{\otimes k}$ to determine whether $\rho \in \Prop$, or $\rho$ is $\eps$-far away from $\Prop$. 

Similarly to the case of pure states, \emph{any} property $\Prop \subseteq \mathcal{B}(\C^d)$ can be tested with $O(d^4/\eps^2)$ copies. To distinguish between the two cases that $\rho \in \Prop$ or $\rho$ is $\eps$-far from $\Prop$, it suffices to use an estimate $\widetilde{\rho}$ such that $D(\widetilde{\rho},\rho) < \eps/2$, and accept if and only if $D(\widetilde{\rho},\Prop) \le \eps/2$. Producing such an estimate can be achieved using quantum state tomography~\cite{paris04,nielsen&chuang:qc}; in order to achieve the required accuracy with success probability $2/3$, $O(d^4/\eps^2)$ copies suffice~\cite{flammia12}. If one knows in advance that $\rho$ is rank $r$, compressed sensing techniques allow this bound to be improved to $O((rd/\eps)^2 \log d)$~\cite{flammia12}.

Some properties of mixed states can be tested significantly more efficiently than this general upper bound. A simple example is the property {\bf Purity}, where $\rho$ satisfies the property if and only if it is a pure state. A natural way to test purity is to apply the swap test (Figure~\ref{fig:swaptest}) to two copies of $\rho$. This accepts with probability $(1+\Tr(\rho^2))/2$, which is equal to 1 if and only if $\rho$ is pure. On the other hand, if we let $\rho = \sum_i \lambda_i \proj{\psi_i}$ be the eigendecomposition of $\rho$, where eigenvalues are listed in non-increasing order, a closest pure state to $\rho$ is $\ket{\psi_1}$. If $\rho$ is $\eps$-far away from pure, then $\lambda_1 \le 1 - \eps$. Note that 
\[
\Tr(\rho^2)=\sum_i\lambda_i^2 \leq \max_i \lambda_i \sum_j \lambda_j = \lambda_1.
\] 
Thus the test accepts with probability at most $1 - \eps/2$, implying that {\bf Purity} can be tested with $O(1/\eps)$ copies of $\rho$.

On the other hand, consider the ``dual'' property of {\bf Mixedness}, where $\rho \in \mathcal{B}(\C^d)$ satisfies the property if and only if it is the maximally mixed state $I/d$. A strong lower bound has been shown by Childs et al.~\cite{childs07c} on the number of copies required to test this property.

\begin{thm}[Childs et al.~\cite{childs07c}]
\label{thm:qcoll}
Let $d$ and $r$ be integers such that $r$ strictly divides $d$. Any algorithm which distinguishes, with probability of success at least $2/3$, between the two cases that $\rho = I/d$ or $\rho$ is maximally mixed on a uniformly random subspace of dimension $r$ must use $\Omega(r)$ copies of $\rho$. Further, there exists an algorithm which solves this problem using $O(r)$ copies.
\end{thm}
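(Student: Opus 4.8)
The plan is to exploit the unitary invariance of both hypotheses to reduce the whole problem to a single classical question about the Schur--Weyl distribution, after which the necessary and sufficient bounds become two sides of the same estimate. First I would symmetrise. In the second case a Haar-random $r$-dimensional subspace $V$ is chosen and we receive copies of $\Pi_V/r$; since the tester does not know $V$ and the prior is uniform, the relevant object is the averaged $k$-copy state $\rho_2 := \E_V[(\Pi_V/r)^{\otimes k}]$, to be distinguished from $\rho_1 := (I/d)^{\otimes k}$. Both $\rho_1$ and $\rho_2$ commute with the action of $U(d)$ on $(\C^d)^{\otimes k}$ and are permutation invariant, so by Schur--Weyl duality each is block-diagonal in the Schur basis and proportional to the identity on every irreducible block labelled by a Young diagram $\lambda \vdash k$. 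Hence $\rho_1$ and $\rho_2$ are simultaneously diagonalised, the optimal distinguishing measurement is \emph{weak Schur sampling} (measure $\lambda$; within a block no further information is available), and
\[ D(\rho_1,\rho_2) = \|\mathrm{SW}_{d,k} - \mathrm{SW}_{r,k}\|_{tvd}, \]
where $\mathrm{SW}_{D,k}(\lambda) = d_\lambda\, s_\lambda(1^D)/D^k$ is the Schur--Weyl distribution for a maximally mixed state of rank $D$, with $d_\lambda$ the number of standard Young tableaux and $s_\lambda(1^D)$ the dimension of the corresponding $U(D)$-irrep. Both directions of the theorem now reduce to estimating this one distance: weak Schur sampling followed by the optimal likelihood-ratio test on $\lambda$ achieves distinguishing bias exactly $\tfrac12\|\mathrm{SW}_{d,k}-\mathrm{SW}_{r,k}\|_{tvd}$ for \emph{every} instance (the outcome law in case two is $\mathrm{SW}_{r,k}$ for each fixed $V$), and no measurement can do better.

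For the lower bound I would control the Bhattacharyya overlap $\sum_\lambda \sqrt{\mathrm{SW}_{d,k}(\lambda)\,\mathrm{SW}_{r,k}(\lambda)} = \E_{\mathrm{SW}_{r,k}}[\sqrt{L(\lambda)}]$, where the hook--content formula gives the likelihood ratio as a product over cells,
\[ L(\lambda) = \frac{\mathrm{SW}_{d,k}(\lambda)}{\mathrm{SW}_{r,k}(\lambda)} = \prod_{(i,j)\in\lambda} \frac{1+(j-i)/d}{1+(j-i)/r}. \]
Taking logarithms and expanding, the first-order term $(1/d-1/r)\sum_{(i,j)}(j-i)$ has vanishing mean in the relevant Plancherel-like regime (which is transpose-symmetric), so the leading genuine contribution is the content-variance term of order $(1/r^2)\sum_{(i,j)}(j-i)^2$. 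Since a typical $\lambda$ under $\mathrm{SW}_{r,k}$ with $k=o(r)$ has $O(\sqrt k)$ rows and columns, every content is $O(\sqrt k)$ and this term is $O(k^2/r^2)=o(1)$; thus $L=1+o(1)$ on the bulk, the overlap is $1-o(1)$, the total-variation distance is $o(1)$, and so $k=\Omega(r)$ copies are needed to reach bias $1/3$.

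For the upper bound I would show the same distance is bounded below by a constant (in fact tends to $1$) once $k\ge Cr$ for a large constant $C$. The cleanest route is through the induced spectrum estimate $\lambda/k$, which concentrates about the true sorted spectrum; its fluctuations shrink below the gap $1/r-1/d \ge 1/(2r)$ between the top eigenvalues of the two candidate states precisely when $k=\Omega(r)$, so the two outcome distributions separate and the likelihood-ratio test on $\lambda$ succeeds. Equivalently, one runs the optimal classical test between $\mathrm{SW}_{d,k}$ and $\mathrm{SW}_{r,k}$ on the sampled diagram.

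The main obstacle, in both directions, is the Schur--Weyl asymptotics in the critical window $k\sim r\sim d$: bounding the content moments of $\mathrm{SW}_{r,k}$ for the lower bound, and establishing the concentration of $\lambda/k$ (equivalently the separation of the two Schur--Weyl distributions) for the upper bound. These combinatorial and probabilistic estimates, not the symmetry reduction, are where the real work lies; the reduction itself is clean and is exactly what pins the answer to $\Theta(r)$.
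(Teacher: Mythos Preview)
The paper is a survey and does not give its own proof of this theorem; it is simply cited from Childs, Harrow and Wocjan. The only remark the paper makes about the proof is later, in the section on unitarily invariant properties, where it notes that ``a careful analysis of the output distributions resulting from weak Schur sampling was the approach taken by Childs et al.\ to prove their bounds on the quantum collision problem,'' after first establishing (via its Lemma on unitarily invariant properties) that weak Schur sampling is without loss of generality the optimal measurement for any such property.

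Your proposal follows exactly this route: you reduce to weak Schur sampling via the $U(d)\times S_k$ symmetry, identify the problem with distinguishing the two Schur--Weyl distributions $\mathrm{SW}_{d,k}$ and $\mathrm{SW}_{r,k}$, and then sketch the combinatorial/probabilistic estimates on those distributions (via the hook--content likelihood ratio for the lower bound, and spectrum-estimate concentration for the upper bound). This is indeed the Childs--Harrow--Wocjan approach, and your identification of the Schur--Weyl asymptotics as the place where the real work lies is accurate. Since the paper provides no further detail against which to compare, there is nothing to add beyond confirming that your outline matches the cited method.
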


Childs et al.\ call the problem which they consider the \emph{quantum collision problem}. To see how their result can be applied to {\bf Mixedness}, consider the space of $n$ qubits, whose dimension is~$d=2^n$. As a state $\rho$ which is maximally mixed on a dimension-$r$ subspace of $\C^{2^n}$ satisfies $D(\rho,I/2^n) = 1-r/2^n$, taking $r=2^{n-1}$ implies that any algorithm distinguishing between the cases that $\rho=I/2^n$ and $\rho$ is $1/2$-far from $I/2^n$ must use $\Omega(2^n)$ copies of $\rho$.\footnote{Very recently, O'Donnell and Wright~\cite{odonnell&wright:spectrum} strengthened this result, among other things they obtained a tight dependence on~$\eps$.}  This result also puts strong lower bounds on a number of other property testing problems which one might wish to solve. For example, consider the following three properties:
\begin{itemize}
\item {\bf Equality} of pairs of mixed states, where the pair $(\rho, \sigma)$ satisfies the property if $\rho = \sigma$. This can be seen as the quantum generalization of the classical question of testing whether two probability distributions on $d$ elements are equal or $\eps$-far from equal (with respect to the total variation distance), given access to samples from the distributions. A sublinear tester for the classical problem has been given by Batu et al.~\cite{batu13}, and recently improved by Chan et al.~\cite{chan13}; for constant $\eps$ the tester uses $O(d^{2/3})$ samples. By fixing $\sigma=I/d$, the result of~\cite{childs07c} implies that the quantum generalization of this problem is more difficult: it requires at least $\Omega(d)$ ``samples'' (i.e., copies of the states).

\item Whether a mixed state $\rho$ has {\bf rank at most $r$}. Theorem \ref{thm:qcoll} immediately implies that this requires $\Omega(r)$ copies of $\rho$, which has an interesting implication for testing pure states. Recall that a bipartite state $\ket{\psi}$ on systems $AB$ is said to have Schmidt rank $r$ if it can be written as $\ket{\psi} = \sum_{i=1}^r \sqrt{\lambda_i} \ket{v_i}\ket{w_i}$ for pairwise orthonormal sets of states $\{\ket{v_i}\}$, $\{\ket{w_i}\}$ and non-negative $\lambda_i$. If one looks only at the $A$ subsystem, the rank of the reduced state is precisely the Schmidt rank of $\ket{\psi}$. Therefore, Theorem \ref{thm:qcoll} implies that any algorithm which tests whether a pure state $\ket{\psi}$ has Schmidt rank $r$ by producing $k$ copies of $\ket{\psi}$ and acting only on the first subsystems $A_1,\dots,A_k$ of $\ket{\psi}^{\otimes k}$ must satisfy $k = \Omega(r)$. This bound does not apply immediately to general algorithms acting on both the $A$ and $B$ subsystems, leaving the complexity of testing the Schmidt rank open.

\item {\bf Separability} of mixed states. A bipartite quantum state $\rho \in \mathcal{B}((\C^d)^{\otimes 2})$ is said to be \emph{separable} if it can be written as a convex combination of product states, and is said to be \emph{entangled} otherwise. Given a classical description of a $d$-dimensional mixed state as input, determining separability up to accuracy which is inversely polynomial in $d$ is known to be $\mathsf{NP}$-hard~\cite{gurvits03,gharibian10}, and there is some evidence for intractability of the problem even up to constant accuracy~\cite{harrow13}. This does not preclude the existence of a tester for separability which is efficient in terms of the number of copies of the input state $\rho$ used; however, Theorem \ref{thm:qcoll} can be used to show that such a tester cannot exist.

The idea is to show that the maximally mixed state on a random subspace of dimension $r$ is far from separable, if $r$ is picked suitably. This can be achieved by combining some previously known results. The \emph{entanglement of formation} of a bipartite state $\rho$ on systems $AB$, is defined by
\[ E_F(\rho) = \min_{\sum_i p_i \proj{\psi_i} = \rho} \sum_i p_i\,S(\Tr_B(\proj{\psi_i})), \]
where $S(\rho) = -\Tr(\rho \log_2 \rho)$ is the von Neumann entropy. Of course, if $\rho$ is separable, $E_F(\rho)=0$. Let $\rho$ be the maximally mixed state on a random subspace of $\C^d \otimes \C^d$ of dimension $r = \lfloor cd^2 \rfloor$, for some fixed $c\in(0,1)$. Hayden et al.~\cite{hayden06} have shown that, for small enough $c>0$, there exists a universal constant $C > 0$ such that $E_F(\rho) \ge C \log_2 d$, except with probability exponentially small in~$d$. Also, Nielsen~\cite{nielsen00b} has shown a continuity property for the entanglement of formation:
\[ E_F(\rho) - E_F(\sigma) \le 18 (\log_2 d) \sqrt{1-F(\rho,\sigma)} + 2(\log_2 e)/e. \]
Combining these two properties, and relating the fidelity to the trace distance using (\ref{eq:fvsd}), we have that $\rho$ is distance $\Omega(1)$ from the set of separable states with high probability. On the other hand, the maximally mixed state $I/d^2$ is clearly separable. Therefore, any tester which distinguishes separable states from states a constant distance from any separable state can be used to distinguish the maximally mixed state from a random dimension-$r$ subspace; by Theorem~\ref{thm:qcoll}, this task requires $\Omega(r)=\Omega(d^2)$ copies of the input state.
\end{itemize}

We remark that the theory of \emph{entanglement witnesses} takes an alternative approach to the direct detection of entanglement (see for example \cite{guhne09,horodecki09} for extensive reviews). An entanglement witness for a state $\rho$ is an observable corresponding to a hyperplane separating $\rho$ from the convex set of separable states; measuring the observable allows one to certify that $\rho$ is entangled. Each such witness will only be useful for certain entangled states, however, so this approach does not provide a means of certifying entanglement of a completely unknown state $\rho$.


There is a gap between the best known lower and upper bounds for testing the above three properties. We therefore highlight the following open question:

\begin{question}
What is the complexity of testing {\bf Equality}, {\bf Separability}, and {\bf Rank at most $r$}?
\end{question}


\subsubsection{Testing equality to a fixed pure state}

We have seen that testing whether $\rho \in \mathcal{B}(\C^d)$ is the maximally mixed state $I/d$ can require $\Omega(d)$ copies of $\rho$. By contrast, testing whether $\rho$ is a fixed \emph{pure} state $\proj{\psi}$ is easy: the obvious test is to perform the measurement $\{\proj{\psi}, I-\proj{\psi}\}$, and to accept if the first outcome is returned. The probability of acceptance is $\bracket{\psi}{\rho}{\psi}$, which is upper bounded by $1-D(\rho,\proj{\psi})^2$ by (\ref{eq:fvsd}), so this property can be tested with $O(1/\eps^2)$ copies of $\rho$.

However, there is a more interesting related question of relevance to experimentalists. Imagine we have some experimental apparatus which is claimed to produce a state $\ket{\phi}$ of $n$ qubits, and we would like to certify this fact. In this setting, the above test does not seem to make sense; being able to measure $\ket{\phi}$ is essentially precisely what we wish to certify! We further imagine that $n$ is too large for full state tomography to be efficient. In order to solve this self-certification problem, we would therefore like a procedure which makes a small number of measurements, can easily be implemented experimentally, and certifies that the state produced is approximately equal to $\ket{\phi}$. This question has been considered by da Silva et al.~\cite{dasilva11}, and independently Flammia and Liu~\cite{flammia11}, who show that certain states $\ket{\phi}$ can be certified using significantly fewer copies of $\ket{\phi}$ than would be required for full tomography, and indeed that any state $\ket{\phi}$ can be certified using quadratically fewer copies ($O(2^n)$ rather than $O(2^{2n})$). The measurements used are also simple: Pauli measurements.

The Pauli matrices $\{\sigma_s\}$ on $n$ qubits form a basis for the space of $n$-qubit linear operators and satisfy $\Tr(\sigma_s \sigma_t) = 2^n \delta_{st}$. So any state $\rho \in \mathcal{B}(\C^{2^n})$ can be expanded as
\[ \rho = \sum_{s \in \{I,X,Y,Z\}^n} \widehat{\rho}_s \sigma_s \]
for some real coefficients $\widehat{\rho}_s = \Tr(\rho \sigma_s)/2^n$. Writing $\phi := \proj{\phi}$ for conciseness, the squared fidelity between $\ket{\phi}$ and $\rho$ is
\[ \bracket{\phi}{\rho}{\phi} = \Tr(\rho \phi)= 2^n \sum_{s \in \{I,X,Y,Z\}^n} \widehat{\rho}_s \widehat{\phi}_s. \]
The works~\cite{dasilva11,flammia11} propose the following scheme. First, pick $s \in \{I,X,Y,Z\}^n$ with probability $2^n \widehat{\phi}_s^2$; orthonormality of the Pauli matrices implies that this is indeed a valid probability distribution. Then repeatedly measure copies of $\rho$ in the eigenbasis of $\sigma_s$, and take the average of the eigenvalues corresponding to the measurement results to produce an estimate $\widetilde{\rho}_s$ of $2^n\widehat{\rho}_s = \Tr(\rho \sigma_s)$.  Finally, output $\widetilde{\rho}_s / \widehat{\phi}_s$ as our guess for the squared fidelity. The expectation of $\widetilde{\rho}_s$ is precisely $\Tr(\rho \sigma_s)$, and if we assume that this estimate is exact (i.e., $\widetilde{\rho}_s = \Tr(\rho \sigma_s)$), the expected value of the output is
\[ \sum_{s \in \{I,X,Y,Z\}^n} (2^n \widehat{\phi}_s^2) \frac{\widehat{\rho}_s}{\widehat{\phi}_s} = \Tr(\rho \phi). \]
Of course, in general we cannot produce an exact estimate without using an infinite number of copies of $\rho$. However, to estimate the fidelity up to constant additive error with constant success probability, it suffices to use a finite number of copies. The number of copies required turns out to depend on the quantity $\min_{s,\widehat{\phi}_s \neq 0} |\widehat{\phi}_s|$; for certain classes of states $\ket{\phi}$ (such as stabilizer states), the number of copies used does not depend on $n$.


\subsubsection{Unitarily invariant properties}
\label{sec:unitinv}

Generalizing the properties {\bf Purity} and {\bf Mixedness}, one can consider properties $\Prop$ of mixed quantum states which are unitarily invariant, in the following sense: If $\rho \in \Prop$, then $(U \rho U^\dag) \in \Prop$ for all $U \in U(d)$, where $U(d)$ denotes the unitary group in $d$ dimensions. Observe that this implies that, if $\rho$ is $\eps$-far from $\Prop$, then so is $U \rho U^\dag$, for all $\eps$ and all $U \in U(d)$. For any $\rho$, $D(\rho,\Prop)$ must necessarily be a symmetric function of the spectrum of $\rho$.

We can see unitarily invariant properties as quantum analogs of symmetric properties of classical probability distributions. Quite recently, it has been shown that a particular ``canonical'' classical tester is close to optimal for all such symmetric properties which satisfy certain continuity constraints~\cite{valiant11}. This has allowed strong bounds to be proven on the complexity of testing properties such as equality of probability distributions, and distinguishing high-entropy from low-entropy distributions. We now discuss an analogous ``canonical tester'' for unitarily invariant properties.

In order to take advantage of the unitary symmetry, one can use a concept known as \emph{Schur-Weyl duality}. We will only briefly summarize this beautiful theory here, and sketch the consequences for property testing; for much more detailed introductions, see the theses~\cite{christandl06,harrow05}. Schur-Weyl duality implies that any linear operator $M$ on $(\C^d)^{\otimes k}$ which commutes with permutations of the $k$ subsystems, and also with local unitaries on each subsystem (i.e., $U^{\otimes k} M (U^{-1})^{\otimes k} = M$ for all $U \in U(d)$) can be written as $M = \sum_{\lambda \vdash k} \alpha_\lambda P_\lambda$ for some coefficients $\alpha_\lambda$ and projectors $P_\lambda$, where the sum is over partitions $\lambda$ of $k$ (e.g., the partitions of 4 are $(4)$, $(3,1)$, $(2,2)$, $(2,1,1)$, $(1,1,1,1)$). Each partition $\lambda$ corresponds to an irreducible representation (irrep) of~$S_k$, the symmetric group on $k$ elements;  one important irrep is the trivial irrep $(k)$ which maps $\pi \mapsto 1$ for all $\pi \in S_k$. The operators $P_\lambda$ are defined by
\[ P_\lambda := \frac{d_\lambda}{k!} \sum_{\pi \in S_k} \chi_\lambda(\pi) U_\pi. \]
In the above expression, $d_\lambda$ is the dimension of the corresponding irrep $V_\lambda$ of $S_k$, which associates a $d_\lambda$-dimensional square matrix with each permutation $\pi \in S_k$. Then $\chi_\lambda$ is the corresponding character $\Tr(V_\lambda)$ and $U_\pi$ is the operator which acts by permuting $k$ $d$-dimensional systems according to $\pi$:
\[ U_\pi \ket{i_1}\dots\ket{i_k} = \ket{i_{\pi^{-1}(1)}} \dots \ket{i_{\pi^{-1}(k)}}. \]
One can show that each operator $P_\lambda$ is indeed a projector, that $P_\lambda P_\mu = \delta_{\lambda \mu} P_\lambda$, and that $\sum_{\lambda \vdash k} P_\lambda = I$. These operators therefore define a measurement (POVM), and performing this measurement is known as \emph{weak Schur sampling}~\cite{childs07c}. This can be implemented efficiently via a procedure which is known as generalized phase estimation~\cite{harrow05,childs07c} and generalizes the swap test~\cite{buhrman01} (cf.\ Section~\ref{sec:equality}) and symmetrization~\cite{barenco97}. Generalized phase estimation is based on the quantum Fourier transform (QFT) over $S_k$~\cite{beals97}, which is a unitary operation that performs a change of basis from $\{ \ket{\pi}: \pi \in S_k\}$ to $\{\ket{\lambda,i,j}:\lambda \vdash k,1 \le i,j \le d_\lambda\}$. It follows from basic representation theory that this makes sense, i.e., that $\sum_{\lambda\vdash k} d_\lambda^2 = k!$.

The generalized phase estimation procedure proceeds as follows:
\begin{enumerate}
\item Start with a quantum state $\sigma \in \mathcal{B}((\C^d)^{\otimes k})$.
\item Prepend a $k!$-dimensional ancilla register whose basis states correspond to triples $\ket{\lambda,i,j}$, initialized in the state $\ket{(k),1,1}$ corresponding to the trivial irrep.
\item Apply the inverse quantum Fourier transform over $S_k$ to the ancilla to produce the state $\frac{1}{\sqrt{k!}} \sum_{\pi \in S_k} \ket{\pi}$ (see, e.g.,~\cite{beals97} for an explanation of this).
\item Apply the controlled permutation operation $\sum_{\pi \in S_k} \proj{\pi} \otimes U_\pi$, controlled on the ancilla.
\item Apply the quantum Fourier transform over $S_k$ to the ancilla and measure it, receiving outcome $(\lambda,i,j)$.
\item Output $\lambda$.
\end{enumerate}
One can show~\cite{bacon06,harrow05} that, on input $\sigma$, generalized phase estimation does indeed output $\lambda$ with probability $\Tr(P_\lambda \sigma)$.\footnote{Some works describe the procedure as instead starting with a QFT and finishing with an inverse QFT~\cite{childs07c,montanaro09c}, but this does not appear correct as the QFT should map from the group algebra of $S_k$ to the space of irreps of $S_k$~\cite{beals97}.}

It turns out that any test for a unitarily invariant property can, essentially, be taken to consist of performing weak Schur sampling and classically post-processing the results. 

\begin{lem}
Let $\Prop \subseteq \mathcal{B}(\C^d)$ be a unitarily invariant property. Assume there exists a tester which uses $k$ copies of the input state $\rho$, and accepts all states $\rho \in \Prop$ with probability at least $1-\delta$, but accepts all states which are $\eps$-far from $\Prop$ with probability at most $1-f(\eps)$ for $\eps>0$. Then there exists a tester with the same parameters which consists of performing weak Schur sampling on $\rho^{\otimes k}$ and classically postprocessing the results.
\end{lem}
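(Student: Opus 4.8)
The plan is to show that any tester can be \emph{symmetrized} in two stages---first over permutations of the $k$ copies, then over local unitaries---without changing its acceptance probabilities on inputs of the form $\rho^{\otimes k}$, and that the resulting symmetrized measurement is precisely weak Schur sampling followed by a classical coin flip.

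First I would observe that any $k$-copy tester can be described by a two-outcome POVM $\{E, I-E\}$ on $(\C^d)^{\otimes k}$ (absorbing any ancillas and intermediate operations into the measurement), so that the acceptance probability on input $\rho$ is $\Tr(E\,\rho^{\otimes k})$. Since $U_\pi \rho^{\otimes k} U_\pi^\dagger = \rho^{\otimes k}$ for every permutation $\pi \in S_k$, replacing $E$ by the permutation-averaged operator $E' = \frac{1}{k!}\sum_{\pi \in S_k} U_\pi E U_\pi^\dagger$ leaves $\Tr(E'\,\rho^{\otimes k}) = \Tr(E\,\rho^{\otimes k})$ unchanged for every $\rho$, by cyclicity of the trace; moreover $E'$ now commutes with every $U_\pi$.

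Next I would average over local unitaries with the Haar measure, defining $\tilde E = \int_{U(d)} U^{\otimes k} E' (U^\dagger)^{\otimes k}\, dU$. By construction $\tilde E$ commutes with all $U^{\otimes k}$, and it still commutes with all $U_\pi$ (since $U^{\otimes k}$ and $U_\pi$ commute), so Schur--Weyl duality gives $\tilde E = \sum_{\lambda \vdash k} \alpha_\lambda P_\lambda$ with $0 \le \alpha_\lambda \le 1$, the bounds following because $0 \le E \le I$ is preserved under conjugation and averaging. The crucial point is the effect on acceptance probabilities: $\Tr(\tilde E\,\rho^{\otimes k}) = \int \Tr\bigl(E'\,(U^\dagger \rho U)^{\otimes k}\bigr)\, dU$ is the average over all unitary rotations of $\rho$. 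Here I invoke unitary invariance: if $\rho \in \Prop$ then every $U^\dagger \rho U \in \Prop$, so each integrand equals $\Tr(E\,(U^\dagger \rho U)^{\otimes k}) \ge 1-\delta$; and if $\rho$ is $\eps$-far from $\Prop$ then (as noted just before the lemma) every $U^\dagger \rho U$ is also $\eps$-far, so each integrand is $\le 1-f(\eps)$. Averaging over $U$ preserves both bounds, so $\tilde E$ has exactly the claimed parameters.

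Finally I would read off the operational meaning of $\tilde E = \sum_\lambda \alpha_\lambda P_\lambda$: perform the measurement $\{P_\lambda\}$ (weak Schur sampling) to obtain $\lambda$ with probability $\Tr(P_\lambda \rho^{\otimes k})$, then accept with probability $\alpha_\lambda$. The overall acceptance probability is $\sum_\lambda \alpha_\lambda \Tr(P_\lambda \rho^{\otimes k}) = \Tr(\tilde E\,\rho^{\otimes k})$, so this is exactly weak Schur sampling with classical post-processing. I expect the main obstacle to be the second symmetrization step: one must verify that Haar-averaging preserves the POVM constraints and lands in the Schur--Weyl commutant, and---most importantly---that the acceptance guarantees survive the averaging. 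This last point is exactly where the unitary invariance of both $\Prop$ and its $\eps$-far complement is indispensable, since without it the rotated states $U^\dagger \rho U$ would not be controlled and the averaged integrand could not be bounded.
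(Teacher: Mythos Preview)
Your proposal is correct and follows essentially the same approach as the paper: symmetrize the accepting POVM element first over permutations and then over local unitaries via the Haar measure, use unitary invariance of $\Prop$ (and of its $\eps$-far complement) to show the acceptance bounds survive the averaging, and invoke Schur--Weyl duality to write the result as $\sum_\lambda \alpha_\lambda P_\lambda$ implemented by weak Schur sampling plus a biased coin. The only cosmetic difference is that the paper phrases the preservation of parameters via worst-case states $\rho_0$ and $\rho_\eps$ and inequalities, whereas you argue the bound pointwise for every $\rho$; the content is the same.
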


\begin{proof}
Let $M$ be the measurement operator corresponding to the tester accepting, and for each $\eps$, let $\rho_\eps$ be a state which is distance $\eps$ from $\Prop$ and achieves the worst-case probability of acceptance (so $\rho_0$ is a state in $\Prop$ with the lowest probability of acceptance, and for $\eps >0$, $\rho_\eps$ is a state with the highest probability of acceptance such that $D(\rho_\eps,\Prop)=\eps$). Then, by the permutation invariance of~$\rho_\eps^{\otimes k}$, we have
\[ 
\Tr(M \rho_\eps^{\otimes k}) = \frac{1}{k!} \sum_{\pi \in S_k} \Tr(M U_\pi \rho_\eps^{\otimes k} U_\pi^{-1} =: \Tr(\overline{M} \rho_\eps^{\otimes k}), 
\]
where we define $\overline{M} = \frac{1}{k!} \sum_{\pi \in S_k} U_\pi M U_\pi^{-1}$, and by the unitary invariance of $\Prop$,
\[
\Tr(\overline{M} \rho_0^{\otimes k}) \le \int \Tr(\overline{M} (U \rho_0 U^{-1})^{\otimes k})dU = \Tr\left( \int U^{\otimes k} \overline{M} (U^{-1})^{\otimes k}  dU\right) \rho_0 =: \Tr(\overline{\overline{M}} \rho_0),
\]
where the integral is taken according to Haar measure on $U(d)$, and similarly $\Tr(\overline{M} \rho_\eps^{\otimes k}) \ge \Tr(\overline{\overline{M}} \rho_\eps^{\otimes k})$ for $\eps >0$. Therefore, it suffices to implement $\overline{\overline{M}}$ to achieve the same parameters as $M$. But $\overline{\overline{M}}$ commutes with local unitaries and permutations of the $k$ systems, so by Schur-Weyl duality we can write $\overline{\overline{M}} = \sum_\lambda \alpha_\lambda P_\lambda$ for some coefficients $\alpha_\lambda$; as $\overline{\overline{M}}$ is a measurement operator, for each $\lambda$ it holds that $0 \le \alpha_\lambda \le 1$. So we can implement $\overline{\overline{M}}$ by performing weak Schur sampling, obtaining outcome~$\lambda$, and then accepting with probability $\alpha_\lambda$.
\end{proof}

Further, one can write down the probability of obtaining each outcome $\lambda$ as follows: if the input state $\rho$ has eigenvalues $(x_1,\dots,x_d)$, then
\[ \Tr(P_\lambda \rho^{\otimes k}) = d_{\lambda} s_{\lambda}(x_1,\dots,x_d), \]
where $s_\lambda$ is a Schur polynomial (see, e.g., \cite{audenaert06} for a discussion). In principle, this allows one to calculate the parameters of the optimal test for any unitarily invariant property; in practice, the calculations required are somewhat daunting. Nevertheless, a careful analysis of the output distributions resulting from weak Schur sampling was the approach taken by Childs et al.~\cite{childs07c} to prove their bounds on the quantum collision problem. Indeed, their approach is an example of how one can prove lower bounds on quantum property testers more generally: first use symmetry arguments to prove that the optimal test must be of a certain form, then analyse the optimal test directly.


\section{Quantum testing of quantum properties: Dynamics}\label{sec:dynamics}

\subsection{Unitary operators}
\label{sec:unitary}

In this section, we will consider quantum property testing of quantum dynamics, beginning with unitary dynamics. We will imagine we are given black-box access to a unitary operator $U$, and we want to test if $U$ either has a certain property or is far from having it, by applying $U$ a small number of times. This setting is more complicated than that of testing properties of quantum states in that, rather than simply performing a measurement on a number of copies of a state, we can consider more involved protocols based on the use of $U$ in a sequential, adaptive fashion.

There are a number of choices one needs to make when defining this model---in particular, what distance measure to use, and whether or not to allow applications of controlled-$U$ and/or $U^{-1}$ as part of the model. In Sections~\ref{sec:distmeasures} and~\ref{sec:controlledu} we will discuss the effect of these choices. Next, we will discuss a useful correspondence between quantum states and unitaries---the \emph{Choi-Jamio\l kowski isomorphism}---which allows one to apply many of the algorithms developed for testing quantum states to unitaries. Finally, in sections \ref{sec:paulitest}-\ref{sec:unitaryprops} we will describe several known results on testing various properties of unitary operators.

We continue to let $U(d)$ denote the unitary group in $d$ dimensions, and let $M(d)$ denote the set of $d \times d$ matrices. A property of unitary operators is simply a (discrete or continuous) subset $\Prop \subseteq U(d)$. 


\subsubsection{Distance measures}
\label{sec:distmeasures}

As compared with the case of pure states, it is less obvious which distance measure between unitary operators is the right one to choose to obtain interesting property testing results. For quantum states, the distinguishability of any two states is controlled by their trace distance. A natural way to generalize this to unitary operations would be to maximize the distinguishability of the output states over all input states\footnote{One might wonder whether distinguishability could be improved further by allowing the unknown unitary operator to act on part of an entangled state; it turns out that this is not the case~\cite{watrous08a}.}, to produce
\[ D^{\text{max}}(U,V) := \max_{\ket{\psi}} D(U\ket{\psi},V\ket{\psi}) = \max_{\ket{\psi}} \sqrt{1-|\bracket{\psi}{U^\dag V}{\psi}|^2}. \]
Unfortunately, there are extremely simple properties which are hard to test with respect to this distance measure. One such example is the {\bf Identity} property: does an input unitary $U$ satisfy $U = e^{i \theta} I$? (Note that, as with the case of pure state properties, we allow an arbitrary phase $\theta$ in the definition, as $U$ cannot be distinguished from $e^{i\theta} U$.) Consider the family of $n$-dimensional unitary operators $U_i$, $i \in [n]$, where $U_i \ket{j} = (-1)^{\delta_{ij}} \ket{j}$. Each of these has maximal distance from~$I$, according to the distance measure $D^{\text{max}}$. However, a quantum algorithm which uses the input operator $U$ $k$ times and distinguishes between the case where $U$ is equal to the identity, and the case where $U=U_i$ for some $i$, would imply a quantum algorithm which computes the OR function of $n$ input bits, promised to have Hamming weight at most 1, using $O(k)$ queries. As this problem is known to require $\Omega(\sqrt{n})$ quantum queries~\cite{bennett97}, it follows that $k=\Omega(\sqrt{n})$. This is a lower bound on the complexity of identity-testing in an oracular setting; we discuss a lower bound based on computational complexity arguments in Section~\ref{sec:compcomp}.

It is perhaps not surprising that $D^{\text{max}}$ is not the right measure of distance to choose for property testing problems, as it is a ``best-case'' rather than ``average-case'' measure. A suitable such alternative measure can be defined as follows. For any $d$-dimensional operators $A,B\in M(d)$, let $\hsip{A}{B}$ denote the \emph{normalized Hilbert-Schmidt} inner product
\[ \hsip{A}{B} := \frac{1}{d} \Tr(A^\dag B) = \frac{1}{d} \sum_{i,j} A_{ij}^* B_{ij}. \]
Assume that $\hsip{A}{A} = \hsip{B}{B}=1$ (a property satisfied, for example, if $A$ and $B$ are unitary). Then the distance between $A$ and $B$ is given by
\[ D(A,B) := \sqrt{1-|\hsip{A}{B}|^2}. \]
For $\Prop \subseteq U(d)$, we analogously define
\[ D(U,\Prop) := \inf_{V \in \Prop} D(U,V). \]
Note the close analogy to the distance between pure states (\ref{eq:tracedistance}). Indeed, we use the same notation as for the distance $D(\ket{\psi},\ket{\phi})$ to highlight the fact that the distance for unitaries is naturally induced by the distance for states. The distance measure $D(A,B)$ seems to have been first explicitly introduced by Low~\cite{low09}; Wang~\cite{wang11} has defined a closely related alternative measure as $D'(A,B) := \sqrt{1-|\hsip{A}{B}|}$. As $D(A,B)/\sqrt{2} \le D'(A,B) \le D(A,B)$, the two measures are essentially interchangeable. For any operators $A$ and $B$ such that $\hsip{A}{A}=\hsip{B}{B}=1$, $D(A,B)$ has the following properties.
\begin{itemize}
\item $0 \le D(A,B) \le 1$, with $D(A,B)=0$ if and only if $A = e^{i \phi} B$ for some overall phase $\phi$. As there exist $A \neq B$ with $D(A,B) = 0$, this implies that $D(\cdot,\cdot)$ is not a metric, but only a ``pseudometric.'' Further, $D(A,B) = D(WA,WB) = D(AW,BW)$ for any unitary $W$.

\item $D(A,B)$ can alternatively be defined as
\[ D(A,B) = \frac{1}{\sqrt{2}}\|A \otimes A^\dag - B \otimes B^\dag \|_2, \]
where $\|\cdot\|_2$ is the normalized Schatten 2-norm $\|M\|_2 = \sqrt{\frac{1}{d} \sum_{i,j=1}^d |M_{ij}|^2}$~\cite{low09}. Observe that this representation shows that $D(\cdot,\cdot)$ satisfies the triangle inequality.

\item We have $\|M\|_2^2 = \hsip{M}{M}$. Therefore, $\|A-B\|_2^2 = \hsip{A-B}{A-B} = 2-2\text{Re} \hsip{A}{B}$. This implies that $D(A,B) \le \|A-B\|_2$, via the elementary inequality $2 \text{Re}\,z \le |z|^2 + 1$, valid for any $z \in \C$.
\end{itemize}

The following justifies the claim that $D(\cdot,\cdot)$ is indeed an ``average-case'' measure of distance.

\begin{prop}
Fix $d$-dimensional unitary operators $U$ and $V$. Then
\[ \int d\psi\,D(U\ket{\psi},V\ket{\psi})^2 = \frac{d}{d+1} D(U,V)^2, \]
where the integral is taken according to Haar measure on pure states $\ket{\psi} \in \C^d$.
\end{prop}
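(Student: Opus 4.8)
The plan is to reduce everything to a single second-moment integral over Haar-random pure states and evaluate it using the symmetric-subspace projector. First I would unpack the left-hand side: by the definition of the state distance in Eq.~\eqref{eq:tracedistance}, and since $\ip{U\psi}{V\psi}=\bracket{\psi}{U^\dag V}{\psi}$, we have $D(U\ket{\psi},V\ket{\psi})^2 = 1-|\bracket{\psi}{W}{\psi}|^2$, where I set $W:=U^\dag V$ (a unitary). Because $\int d\psi = 1$ for the normalized Haar measure on pure states, the integral becomes
\[
\int d\psi\,D(U\ket{\psi},V\ket{\psi})^2 = 1 - \int d\psi\,|\bracket{\psi}{W}{\psi}|^2,
\]
so the whole problem is to compute the single integral $\int d\psi\,|\bracket{\psi}{W}{\psi}|^2$.

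Next I would rewrite the integrand as a trace on two copies of the system. Since $|\bracket{\psi}{W}{\psi}|^2 = \bracket{\psi}{W}{\psi}\,\overline{\bracket{\psi}{W}{\psi}} = \bracket{\psi}{W}{\psi}\bracket{\psi}{W^\dag}{\psi}$, this equals $\Tr\!\big[(W\otimes W^\dag)\,\proj{\psi}^{\otimes 2}\big]$. The key input is the standard formula for the second moment of a Haar-random pure state,
\[
\int d\psi\,\proj{\psi}^{\otimes 2} = \frac{P_{\mathrm{sym}}}{\binom{d+1}{2}},
\]
where $P_{\mathrm{sym}}$ is the projector onto the symmetric subspace of $\C^d\otimes\C^d$, of dimension $\binom{d+1}{2}=d(d+1)/2$. (This is the only nontrivial ingredient; it can be proved by noting that the left-hand side commutes with all $U\otimes U$ and hence, by Schur--Weyl duality as discussed in Section~\ref{sec:unitinv}, is a combination of $P_{\mathrm{sym}}$ and its antisymmetric counterpart, the latter annihilating symmetric tensors $\proj{\psi}^{\otimes 2}$, with the normalization fixed by taking the trace.)

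I would then use $P_{\mathrm{sym}}=\tfrac12(I+S)$, where $S$ is the swap operator on $\C^d\otimes\C^d$, together with the two elementary identities $\Tr(W\otimes W^\dag)=|\Tr W|^2$ and $\Tr\!\big[(W\otimes W^\dag)S\big]=\Tr(WW^\dag)=d$. This gives
\[
\int d\psi\,|\bracket{\psi}{W}{\psi}|^2 = \frac{\tfrac12\big(|\Tr W|^2 + d\big)}{\tfrac12 d(d+1)} = \frac{|\Tr W|^2 + d}{d(d+1)}.
\]
Substituting back and simplifying yields $1-\dfrac{|\Tr W|^2+d}{d(d+1)} = \dfrac{d}{d+1}\Big(1-\dfrac{|\Tr W|^2}{d^2}\Big)$. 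Finally I would observe that, by the definition of $D(U,V)$ and the normalized Hilbert--Schmidt inner product, $D(U,V)^2 = 1-|\hsip{U}{V}|^2 = 1-\tfrac{1}{d^2}|\Tr(U^\dag V)|^2 = 1-\tfrac{1}{d^2}|\Tr W|^2$, which matches the parenthesized factor exactly and completes the proof. I expect no serious obstacle here: the computation is short once the second-moment formula is in hand, so the only real content is justifying that Haar integral, and the remaining steps are routine trace manipulations.
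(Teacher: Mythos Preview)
Your proof is correct and essentially identical to the paper's: both reduce to computing $\int d\psi\,|\bracket{\psi}{U^\dag V}{\psi}|^2$ via the two-copy trick and the second-moment formula $\int d\psi\,\proj{\psi}^{\otimes 2}=(I+F)/(d(d+1))$, then evaluate the trace using $\Tr(A\otimes B)=\Tr(A)\Tr(B)$ and $\Tr((A\otimes B)F)=\Tr(AB)$. The only cosmetic differences are your abbreviation $W=U^\dag V$ and writing the Haar integral as $P_{\mathrm{sym}}/\binom{d+1}{2}$ rather than $(I+F)/(d(d+1))$, which are the same expression.
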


\begin{proof}
We have
\beas
\int d\psi\,D(U\ket{\psi},V\ket{\psi})^2 &=& 1 - \int d\psi\,|\bracket{\psi}{U^\dag V}{\psi}|^2\\
&=& 1 - \int d\psi \Tr[ (U^\dag V \otimes V^\dag U) \proj{\psi}^{\otimes 2} ]\\
&=& 1 - \Tr\left[ (U^\dag V \otimes V^\dag U) \left(\frac{I + F}{d(d+1)} \right)\right]\\
&=& \frac{d}{d+1} \left(1 - \left| \frac{\Tr(U^\dag V)}{d} \right|^2 \right)\\
&=& \frac{d}{d+1} D(U,V)^2.
\eeas
In the third equality we use the fact that $\int \proj{\psi}^{\otimes 2} d\psi = (I+F)/(d(d+1))$, where $F$ is the flip (or swap) operator which interchanges two $d$-dimensional systems. The fourth equality follows from the facts that, for any $d$-dimensional operators $A$, $B$, $\Tr(A \otimes B) = \Tr(A)\Tr(B)$ and $\Tr((A \otimes B) F) = \Tr(AB)$.
\end{proof}

The quantity $\int d\psi\,|\bracket{\psi}{U^\dag V}{\psi}|^2$ appearing in the proof was previously introduced by Ac\'in~\cite{acin01} as an average-case variant of the fidelity. We will see in Section~\ref{sec:statesunitaries} below a number of properties, including the {\bf Identity} property, which can be tested efficiently with respect to the distance measure $D(\cdot,\cdot)$.


\subsubsection{Controlled and inverse unitaries}
\label{sec:controlledu}

As well as being given access to a unitary operator $U$, we may be given access to the inverse $U^{-1}$ and/or the controlled unitary c-$U$, or in other words the operator $\proj{0} \otimes I + \proj{1} \otimes U$. This may be a reasonable assumption if we would like to apply our property testing algorithm to a unitary operator given in the form of a quantum circuit; on the other hand, it may not be reasonable in an adversarial scenario where we only assume access to $U$ as a black box.

For any $U$, $V$ we have $\hsip{\text{c-}U}{\text{c-}V} = (1+\hsip{U}{V})/2$, implying
\beas
D(\text{c-}U,\text{c-}V) &=& \sqrt{1-\left|\frac{1+\hsip{U}{V}}{2}\right|^2}\\
&=& \frac{1}{2} \sqrt{3 - 2 \text{Re} \hsip{U}{V} - |\hsip{U}{V}|^2 }\\
&=& \frac{1}{2} \sqrt{\|U-V\|_2^2 + D(U,V)^2}.
\eeas
Recalling that $D(U,V) \le \|U-V\|_2$, we therefore have the inequalities
\be \label{eq:cu} \|U-V\|_2/2 \le D(\text{c-}U,\text{c-}V) \le \|U-V\|_2/\sqrt{2}. \ee
Thus, given access to controlled unitaries, one can hope to design tests which are sensitive to the 2-norm distance $\|U-V\|_2$. For example, if we are allowed access to controlled unitaries we can distinguish $U$ from $-U$ (see the next section for how this can be done), whereas this is impossible given access to $U$ alone.

Being given access to $U^{-1}$ can also be powerful. In particular, it allows us to apply the important primitive of amplitude amplification~\cite{bhmt:countingj} to property testing algorithms, in analogy to Section~\ref{sssec:amplampl}. Imagine we have a test for a property $\Prop \subseteq U(d)$ which uses $q$ copies of the input unitary $U$, and such that for $U \in \Prop$ the test always accepts (it has perfect completeness), and for $U$ $\eps$-far from $\Prop$, the test accepts with probability at most $f(\eps)$. Then amplitude amplification allows us to test $\Prop$ with $O(q/\sqrt{f(\eps)})$ copies of $U$, rather than the $O(q/f(\eps))$ copies that would be required by simple repetition. For example, we will see below that this gives a square-root speed-up for testing equality of unitary operators. In the complexities we quote below, we assume that amplitude amplification has not been applied.


\subsubsection{From properties of states to properties of unitaries}
\label{sec:statesunitaries}

There is a correspondence between pure quantum states and unitary operators, which is known as (a special case of) the Choi-Jamio\l kowski isomorphism~\cite{choi75,jamiolkowski72} and will sometimes allow us to translate tests for properties of states to tests for analogous properties of unitaries. Given access to $U \in U(d)$, we first prepare the maximally entangled state of two $d$-dimensional systems
\[ \ket{\Phi} := \frac{1}{\sqrt{d}} \sum_{i=1}^d \ket{i}\ket{i} \]
and then apply $U$ to the first system. We obtain the state $\ket{U} \in (\C^d)^{\otimes 2}$ defined by
\[ \ket{U} = \frac{1}{\sqrt{d}} \sum_{i,j=1}^d U_{ji} \ket{j} \ket{i}. \]
The isomorphism is thus simply $U \leftrightarrow \ket{U}$. The state $\ket{U}$ faithfully represents the original operator~$U$; in particular, it is easy to see that $\ip{U}{V} = \hsip{U}{V}$ and hence $D(U,V) = D(\ket{U},\ket{V})$. So, if we have a tester for some property $\Prop$ of $d^2$-dimensional quantum states, by applying the test to $\ket{U}$ we obtain a tester with the same parameters for an analogous property $\Prop'$ of $d$-dimensional unitary operators.

However, one sometimes has to be careful. Imagine we have a tester which accepts states with property $\Prop$ with certainty, and accepts states which are $\eps$-far away from having property $\Prop$ with probability at most $\delta$. Then, via the Choi-Jamio\l kowski isomorphism, this translates into a tester which accepts unitary matrices with property $\Prop'$ with certainty, and accepts, with probability at most $\delta$, unitaries which are $\eps$-far away from any \emph{matrix} $M$ with $\hsip{M}{M}=1$ such that $M$ has property $\Prop'$. Therefore, in principle it could be the case that $U$ is far from any unitary matrix with property $\Prop'$, but is close to some non-unitary matrix $M$ which has property $\Prop'$. In this situation the tester might incorrectly accept. Nevertheless, in various cases of interest one can show that this situation does not arise. In particular, we have the following lemma (which generalizes similar claims in~\cite{harrow13,wang11}).

\begin{lem}
\label{lem:closeunitary}
Let $\Prop \subseteq M(d)$, and $U \in U(d)$. For $M \in \Prop$ such that $\hsip{M}{M}=1$, let $M = AV$ be a polar decomposition of $M$, with $A = \sqrt{MM^\dag}$ and $V$ unitary. Then, if $V \in \Prop$ and $D(U,M) = \eps$, $D(U,\Prop \cap U(d)) \le 2\eps$.
\end{lem}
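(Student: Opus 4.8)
The plan is to show that the unitary factor $V$ supplied by the polar decomposition is itself a good approximation to $U$, so that it witnesses membership in $\Prop\cap U(d)$. First I would note that $V\in\Prop$ by hypothesis and $V$ is unitary, so $V\in\Prop\cap U(d)$, whence $D(U,\Prop\cap U(d))\le D(U,V)$; it therefore suffices to prove $D(U,V)\le 2\eps$. Since $D(\cdot,\cdot)$ obeys the triangle inequality (as observed in the excerpt via its representation $D(A,B)=\tfrac{1}{\sqrt 2}\norm{A\otimes A^\dagger-B\otimes B^\dagger}_2$), I can write
\[
D(U,V)\le D(U,M)+D(M,V)=\eps+D(M,V),
\]
so the whole statement reduces to the single inequality $D(M,V)\le\eps=D(U,M)$.

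To establish this I would compute the relevant normalized Hilbert--Schmidt inner products explicitly. Writing $M=AV$ with $A=\sqrt{MM^\dagger}$ Hermitian and positive semidefinite, one gets $\hsip{M}{V}=\frac1d\Tr(V^\dagger A V)=\frac1d\Tr(A)$, which is real and nonnegative. On the other hand $\hsip{U}{M}=\frac1d\Tr(U^\dagger A V)=\frac1d\Tr(A\,W)$, where $W:=VU^\dagger$ is unitary. The crucial step is then the bound $|\Tr(AW)|\le\norm{A}_1\,\norm{W}_\infty=\Tr(A)$, which is Hölder's inequality for the trace together with the facts that $A$ is positive semidefinite (so $\norm{A}_1=\Tr A$) and $W$ is unitary (so $\norm{W}_\infty=1$). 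Hence $|\hsip{U}{M}|\le\hsip{M}{V}$.

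Since $D(X,Y)=\sqrt{1-|\hsip{X}{Y}|^2}$ is a decreasing function of $|\hsip{X}{Y}|$ on $[0,1]$, and using that $\hsip{M}{M}=\hsip{V}{V}=\hsip{U}{U}=1$ so that $D$ is defined for all the relevant pairs, the inequality $|\hsip{U}{M}|\le\hsip{M}{V}$ immediately yields $D(U,M)\ge D(M,V)$, i.e.\ $D(M,V)\le\eps$. Combining this with the triangle-inequality bound above gives $D(U,V)\le 2\eps$, completing the argument.

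The only real content is the inequality $|\hsip{U}{M}|\le\hsip{M}{V}$, which is the quantitative form of the familiar fact that the unitary factor $V$ in a polar decomposition is the closest unitary to $M$ (indeed the closest matrix of unit normalized norm): no unitary $U$ can overlap $M$ better than $V$ does. Everything else is bookkeeping. The one conceptual point worth flagging is \emph{why} the lemma is needed at all: the Choi--Jamio\l kowski translation discussed just before it only certifies that $U$ is close to some, possibly non-unitary, matrix $M\in\Prop$, and the role of this lemma is precisely to upgrade that to closeness to a genuine unitary element of $\Prop$, at the cost of a harmless factor of $2$ and under the hypothesis $V\in\Prop$.
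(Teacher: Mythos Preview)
Your proof is correct and follows essentially the same approach as the paper: both compute $\hsip{M}{V}=\tfrac{1}{d}\Tr(A)$ and show this dominates $|\hsip{U}{M}|$, then conclude via the triangle inequality. The only cosmetic difference is that the paper phrases the key inequality as the variational formula $\tfrac{1}{d}\|M\|_1=\tfrac{1}{d}\max_{W\in U(d)}|\Tr(WM)|\ge |\hsip{U}{M}|$, whereas you phrase it as H\"older's inequality $|\Tr(AW)|\le\|A\|_1\|W\|_\infty$; these are the same statement.
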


\begin{proof}
We have
\[ \hsip{M}{V} = \frac{1}{d} \Tr(\sqrt{MM^\dag}) = \frac{1}{d}\|M\|_1 = \frac{1}{d}\max_{W\in U(d)} |\Tr(WM)| \ge \sqrt{1-\eps^2}, \]
using the definition of the trace norm and that $D(U,M)=\eps$. Thus
\[ D(U,V) \le D(U,M)+D(M,V) \le 2\eps. \]
\end{proof}

The following are some examples where one can use the Choi-Jamio\l kowski isomorphism to test properties of unitary operators:

\begin{itemize}
\item The {\bf Equality to $V$} property, where $U$ satisfies the property if $U=e^{i \theta} V$, for some $\theta$. The test creates the state $\ket{U}$ and measures in the basis $\{\proj{V},I-\proj{V}\}$. Using the analysis of the corresponding property for pure states, this property is testable with $O(1/\eps^2)$ uses of~$U$. A simple special case of this is the previously discussed {\bf Identity} property.

\item The {\bf Equality} property for pairs of unitary operators, where the pair $U,V$ satisfies the property if $U=e^{i\theta} V$, for some $\theta$. This can be tested by applying the swap test to $\ket{U}$ and $\ket{V}$; again, the analysis of the {\bf Equality} property for states goes through unchanged, implying that this property is testable with $O(1/\eps^2)$ uses of $U$ and $V$.

\item The {\bf Inverses} property, where $U,V \in U(d)$ satisfy the property if $U = e^{i\theta} V^{-1}$, for some $\theta$. The test is to create the state $\ket{UV}$ with one use of each of $U$ and $V$, then to test for equality to $\ket{\Phi}$. The probability of rejection is $D(UV,I)^2 = D(U,V^{-1})^2$, so if $D(U,V^{-1})=\eps$, the test rejects with probability $\eps^2$. Note that there is no need to have access to $U^{-1}$ or $V^{-1}$.

\item The {\bf Product} property for unitary operators, where an operator $U \in U(d^n)$ satisfies the property if $U = U_1 \otimes U_2 \otimes \dots \otimes U_n$ for some $U_1,\dots,U_n \in U(d)$. This can be tested by applying the product test described in Section~\ref{sec:productness} to $\ket{U}$~\cite{harrow13}. One also needs to show that, if $U$ is close to an operator $A \in M(d^n)$ such that $A = A_1 \otimes \dots \otimes A_n$, $U$ is in fact close to a \emph{unitary} operator of this form; this claim follows from Lemma \ref{lem:closeunitary}. The final result is that if $U$ is product the test accepts with certainty, whereas if $U$ is $\eps$-far from product, the test rejects with probability $\Theta(\eps^2)$.
\end{itemize}


\subsubsection{Membership of the Pauli and Clifford groups}
\label{sec:paulitest}

Let $B = \{B_1,\dots,B_{d^2}\}$ be a unitary operator basis for the space of linear operators on $d$ dimensions such that $B$ is orthonormal with respect to the normalized Hilbert-Schmidt inner product, i.e., $\hsip{B_i}{B_j} = \delta_{ij}$. Then the set $\ket{B_i}$ forms an orthonormal basis for $\C^{d^2}$ with respect to the standard inner product, implying that one can test membership of a unitary operator $U$ in $B$ with the following procedure, which we call the \emph{operator basis test}.
\begin{enumerate}
\item Create two copies of $\ket{U}$.
\item Measure each copy in the basis $\{\ket{B_1},\dots,\ket{B_{d^2}}\}$.
\item Accept if both measurements give the same result.
\end{enumerate}
The probability of getting outcome $i$ from each measurement is independent and equal to $|\hsip{U}{B_i}|^2$. Thus, if $U=e^{i\theta} B_i$ for some $i$, then the test will accept with certainty. On the other hand, if $\min_{V \in B} D(U,V) = \eps$, the probability of getting the same measurement outcome twice is
\[ \sum_{i=1}^{d^2} |\hsip{U}{B_i}|^4 \le \max_i |\hsip{U}{B_i}|^2 \sum_{i=1}^{d^2} |\hsip{U}{B_i}|^2 = 1 - \eps^2. \]
Therefore, by repeating the operator basis test and rejecting if any of the individual tests reject, the property of {\bf Membership in $B$} can be tested with $O(1/\eps^2)$ uses of $U$.

A natural operator basis to which this test can be applied is the set of Pauli matrices on $n$ qubits~\cite{montanaro10c,wang11}, which form a basis for the space of linear operators on $n$ qubits. This basis is orthonormal with respect to the normalized Hilbert-Schmidt inner product. We call the corresponding basis for $\C^{2^{2n}}$ obtained via the Choi-Jamio\l kowski isomorphism the \emph{Pauli basis}. The operator basis test can be immediately applied to test whether an $n$-qubit operator is proportional to an $n$-qubit Pauli matrix, or is far from any such matrix; we call this special case the \emph{Pauli test}. As pointed out in~\cite{montanaro10c}, this is a natural quantum generalization of the important classical property of linearity of Boolean functions~\cite{blr:selftest} discussed in Section~\ref{sssec:amplampl}. Given access to an oracle for $f:\{0,1\}^n \rightarrow \{0,1\}$, one can readily construct the diagonal unitary operator $U_f$ where $U_f\ket{z} = (-1)^{f(z)}\ket{z}$, and also the controlled unitary operator $\text{c-}U_f$; it is easy to see that $f$ is linear (with respect to addition mod 2) if and only if $U_f$ is a tensor product of identity and $Z$ operators. Further, if $\ell:\{0,1\}^n \rightarrow \{0,1\}$ is a Boolean function, the distance between $\text{c-}U_f$ and $\text{c-}U_\ell$ is 
\beas
D(\text{c-}U_f,\text{c-}U_\ell) &=& \sqrt{1-\left(\frac{1}{2} + \frac{1}{2^{n+1}} \sum_{z \in \{0,1\}^n} (-1)^{f(z)+\ell(z)} \right)^2}\\
&=& \sqrt{1-(1-|\{z:f(z) \neq \ell(z)\}|/2^n)^2}\\
&=& \sqrt{2d(f,\ell) - d(f,\ell)^2},
\eeas
where $d(f,\ell) := |\{x:f(x) \neq \ell(x)\}|/2^n$ is the normalized Hamming distance. This implies that the Pauli test (for the special case of testing diagonal Pauli matrices) can be used to test linearity of Boolean functions, recovering the $O(1/\eps)$ complexity of the classical tester discussed in Section~\ref{sssec:amplampl}, which can be improved to $O(1/\sqrt{\eps})$ via amplitude amplification.

The Pauli test can also be used as a subroutine in an algorithm for testing membership in the \emph{Clifford group}. The Clifford group $\mathcal{C}_n$ on $n$ qubits is the normalizer of the Pauli group $\Prop_n$, or in other words the set $\mathcal{C}_n:=\{C \in U(2^n): \forall P \in \Prop_n, C P C^{-1} \in \Prop_n \}$. The Clifford group plays an important role in many areas of quantum information theory, including quantum error-correction and simulation of quantum circuits~\cite{gottesman99,nielsen&chuang:qc}. Wang~\cite{wang11} has shown that, given access to a unitary $U$ and its inverse~$U^{-1}$, whether $U$ is a member of the Clifford group can be tested with $O(1/\eps^2)$ uses of $U$ and $U^{-1}$; this result improves a previous test of Low~\cite{low09} by removing any dependence on $n$, and can in turn be improved to $O(1/\eps)$ using amplitude amplification~\cite{bhmt:countingj}.

Wang's test is very natural: pick a Pauli matrix $P \in \Prop_n$ uniformly at random, and apply the Pauli test to the operator $U P U^{-1}$. If $U \in \mathcal{C}_n$, this test will always accept. Intuitively, if $U$ is far from any Clifford operator, then we expect that for most Pauli operators $P$, $UPU^{-1}$ will be far from being a Pauli operator, so repeating this test a constant number of times would suffice to detect this. Making this intuition precise requires some work; see~\cite{wang11} for the details.

\begin{question}
Is there an efficient test for the property of membership in the Clifford group which does not require access to $U^{-1}$?
\end{question}


\subsubsection{Testing commutativity}

Say that $U,V\in U(d)$ satisfy the {\bf Commuting} property if $UV = VU$. Assuming that we are given access to the controlled operators $\text{c-}U$ and $\text{c-}V$, consider the following tester for this property:

\begin{enumerate}
\item Create the states $\ket{\text{c-}U\text{c-}V}$, $\ket{\text{c-}V\text{c-}U}$ by applying controlled-$U$ and controlled-$V$ operations to the first half of each of two maximally entangled states.
\item Apply the swap test to these states and accept if the test accepts.
\end{enumerate}

If $U$ and $V$ commute, then $\text{c-}U$ and $\text{c-}V$ also commute, so $\ket{\text{c-}U\text{c-}V} = \ket{\text{c-}V\text{c-}U}$ and hence the swap test accepts with certainty. On the other hand, if $\|UV-VU\|_2 = \eps$, then by (\ref{eq:cu}) the test rejects with probability at least $\eps^2/8$. In order for this to be a good test for commutativity, we therefore need to show that, if $\|UV-VU\|_2 = \eps$, $U$ and $V$ are close to a pair of unitary operators $\widetilde{U}$, $\widetilde{V}$ such that $\widetilde{U}$ and $\widetilde{V}$ commute. Precisely this result has recently been shown by Glebsky~\cite{glebsky10} in the form of the following theorem, whose proof we omit.

\begin{thm}[Glebsky~\cite{glebsky10}]
Let $U,V \in U(d)$ satisfy $\|UV-VU\|_2 = \eps$. Then there exist $\widetilde{U},\widetilde{V} \in U(d)$ such that $\widetilde{U}$ and $\widetilde{V}$ commute and $\|U - \widetilde{U}\|_2 \le 30\eps^{1/9}$, $\|V - \widetilde{V}\|_2 \le 30\eps^{1/9}$.
\end{thm}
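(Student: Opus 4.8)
The plan is to prove the statement by \emph{discretizing the spectrum of $U$} and then \emph{block-diagonalizing $V$} with respect to the resulting spectral decomposition, so that the two corrected operators commute by construction. First I would partition the unit circle into $N$ equal arcs and let $P_1,\dots,P_N$ be the spectral projections of $U$ onto its eigenvalues lying in each arc. Rounding every eigenvalue of $U$ to the center $\mu_k$ of its arc produces $\widetilde U := \sum_k \mu_k P_k$, a unitary taking at most $N$ distinct values and satisfying $\|U-\widetilde U\|_2 \le \pi/N$, since each eigenvalue moves by at most $2\sin(\pi/(2N)) \le \pi/N$ and the distance is the (normalized) $2$-norm.

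Next I would build $\widetilde V$ so that it commutes with $\widetilde U$ by forcing it to be block-diagonal in the $P_k$ decomposition. Let $\widetilde V_0 := \sum_k P_k V P_k$ be the block-diagonal part of $V$, and let $\widetilde V$ be the unitary obtained by replacing each block $P_k V P_k$ by the unitary (partial-isometry) factor in its polar decomposition, extended to a full unitary on $\mathrm{range}(P_k)$. Then $[\widetilde U,\widetilde V]=0$ automatically, so it only remains to bound $\|V-\widetilde V\|_2$. The unitarization step is cheap: because $V$ is unitary and $V-\widetilde V_0$ is supported entirely off the block diagonal, writing the singular values $\sigma_i\le 1$ of each block and using $(1-\sigma_i)^2\le 1-\sigma_i^2$ gives $\|\widetilde V_0-\widetilde V\|_2^2 \le \|V\|_2^2-\|\widetilde V_0\|_2^2 = \|V-\widetilde V_0\|_2^2$. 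Hence everything reduces to controlling the off-block-diagonal mass $\|V-\widetilde V_0\|_2^2 = \sum_{j\neq k}\|P_j V P_k\|_2^2$.

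The heart of the argument is this last quantity, and it is also where I expect the main obstacle to lie. Writing $V$ in the eigenbasis of $U$, the hypothesis $\|UV-VU\|_2=\eps$ controls $\tfrac1d\sum_{\theta,\theta'}|e^{i\theta}-e^{i\theta'}|^2|V_{\theta\theta'}|^2$, and via $|e^{i\theta}-e^{i\theta'}|\ge \tfrac{2}{\pi}|\theta-\theta'|$ this yields $\tfrac1d\sum_{\theta,\theta'}|\theta-\theta'|^2|V_{\theta\theta'}|^2 = O(\eps^2)$. For eigenvalues in \emph{far-apart} arcs this forces the corresponding blocks of $V$ to be small. The genuinely delicate case is that of \emph{adjacent} arcs: when $\theta,\theta'$ straddle a single arc boundary, $|\theta-\theta'|$ can be tiny, so the commutator carries essentially no information about $V_{\theta\theta'}$. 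To rescue this I would randomize the position of the partition, choosing the offset of the arcs uniformly at random; the probability that two eigenvalues at circular distance $s=|\theta-\theta'|$ land in different arcs is $O(Ns)$, so in expectation the boundary mass is $\tfrac1d\sum_{\theta,\theta'}O(Ns)|V_{\theta\theta'}|^2 = O(N)\cdot\tfrac1d\sum_{\theta,\theta'}s\,|V_{\theta\theta'}|^2 = O(N\eps)$ by Cauchy--Schwarz against the weighted constraint above. Fixing an offset attaining this expectation gives $\|V-\widetilde V_0\|_2 = O(\sqrt{N\eps})$.

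Finally I would optimize $N$: combining $\|U-\widetilde U\|_2=O(1/N)$ with $\|V-\widetilde V\|_2=O(\sqrt{N\eps})$ and balancing the two produces a bound of the form $O(\eps^{c})$ for a fixed positive power $c$. Tracking all constants through the rounding, the Cauchy--Schwarz step, the boundary averaging, and the polar correction is precisely what yields an explicit estimate such as Glebsky's $30\,\eps^{1/9}$; the exact exponent and constant are artifacts of this accounting and are known to be far from optimal (a cleaner version of the above already gives a better power). The one part requiring real care is the adjacent-arc/boundary estimate, where the commutator is uninformative and the random-offset averaging is the device that makes the bound dimension-independent.
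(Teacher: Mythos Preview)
The paper does not actually prove this theorem: it cites Glebsky and explicitly writes ``whose proof we omit.'' So there is no argument in the paper to compare your proposal against.

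That said, your approach is correct and is essentially the standard spectral-discretization route to results of this type. The steps --- rounding the spectrum of $U$ into $N$ arcs, pinching $V$ to the resulting block structure, polar-correcting each block, and handling the adjacent-arc problem by randomizing the partition offset --- are all valid as you describe them. Two small comments. First, your probability bound $\Pr[\text{different arcs}]\le Ns/(2\pi)$ holds for \emph{all} $s$, not only small $s$ (since $\min(1,x)\le x$), so the Cauchy--Schwarz step cleanly gives $\E\|V-\widetilde V_0\|_2^2=O(N\eps)$ with no separate ``far-arc'' case needed. Second, your polar-correction inequality $\|\widetilde V_0-\widetilde V\|_2\le\|V-\widetilde V_0\|_2$ via $(1-\sigma_i)^2\le 1-\sigma_i^2$ is exactly right.

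One point worth making explicit: balancing $\|U-\widetilde U\|_2=O(1/N)$ against $\|V-\widetilde V\|_2=O(\sqrt{N\eps})$ gives $N\asymp\eps^{-1/3}$ and hence both errors $O(\eps^{1/3})$, which is strictly \emph{stronger} than the stated $\eps^{1/9}$. You allude to this when you say the exponent is ``known to be far from optimal,'' but it would be cleaner to state the exponent your argument actually yields rather than presenting it as reproducing Glebsky's constant. The $\eps^{1/9}$ in the theorem reflects a less efficient accounting in the original paper; your sketch proves more, and there is no need to degrade it artificially.
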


The consequence is that the above tester rejects pairs $(U,V)$ such that $U$ and $V$ are $\eps$-far from a pair of commuting matrices with probability $\Omega(\eps^{18})$. By repeating the test $\poly(1/\eps)$ times, we obtain a tester which rejects such pairs with constant probability.

\begin{question}
Is there an efficient test for commutativity which does not require access to the controlled unitaries c-$U$, c-$V$, but just uses $U$ and $V$?
\end{question}


\subsubsection{Testing quantum juntas}

Analogously to the classical case of Boolean functions $f:\{0,1\}^n \rightarrow \{0,1\}$, a unitary operation on $n$ qubits is said to be a \emph{$k$-junta} if it acts non-trivially on at most $k$ of the qubits, or in other words is of the form $U_S \otimes I_{S^c}$, where $U \in U(2^k)$ and $S$ is a $k$-subset of $[n]$. Wang~\cite{wang11} has given a tester for whether a unitary operator $U$ is a $k$-junta, which turns out to be a direct generalization of the tester of At\i c\i\ and Servedio~\cite{atici&servedio:testing} for the classical property of a Boolean function being a $k$-junta (Section~\ref{sec:juntas}). The work~\cite{montanaro10c} had previously studied a different tester for being a 1-junta (``dictator''), but did not prove correctness. Wang's tester proceeds as follows:
\begin{enumerate}
\item Set $W=\emptyset$.
\item Repeat the following procedure $T$ times, for some $T$ to be determined:
\begin{enumerate}
\item Create the state $\ket{U}$ and measure in the Pauli basis, obtaining outcome $s \in \{I,X,Y,Z\}^n$.
\item Update $W \leftarrow W \cup \{i:s_i \neq I\}$.
\item If $|W| > k$, reject.
\end{enumerate}
\item Accept.
\end{enumerate}
To show correctness of this test, it suffices to prove the following claim:
\begin{thm}[Wang \cite{wang11}]
If $U$ is $\eps$-far from any $k$-junta, and $T=\Theta(k/\eps^2)$, the above procedure accepts with probability at most $1/3$.
\end{thm}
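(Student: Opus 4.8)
The plan is to transport the At\i c\i--Servedio junta test of Section~\ref{sec:juntas} to the operator setting, replacing Fourier sampling of a Boolean function by Pauli-basis measurement of the Choi state $\ket U$. Write $\widehat U(s):=\hsip{\sigma_s}{U}$ for the Pauli coefficients, so that a Pauli measurement of $\ket U$ returns $s\in\{I,X,Y,Z\}^n$ with probability $|\widehat U(s)|^2$ and $\sum_s|\widehat U(s)|^2=\hsip UU=1$. For $W\subseteq[n]$ let $\supp(s):=\{i:s_i\neq I\}$ and let $U_W:=\sum_{s:\,\supp(s)\subseteq W}\widehat U(s)\,\sigma_s$ be the truncation of $U$ to Paulis supported inside $W$. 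Each such $\sigma_s$ factors as $\sigma_{s_W}\otimes I_{W^c}$, so $U_W=M_W\otimes I_{W^c}$ for some $M_W$ acting only on the qubits of $W$; moreover $U_W$ is the Pauli twirl of $U$ over $W^c$, hence an average of unitaries, so $\|M_W\|_\infty\le1$. Completeness is then immediate: if $U=U_S\otimes I_{S^c}$ is a $k$-junta, then $\widehat U(s)=0$ unless $\supp(s)\subseteq S$, every measured $s$ has $\supp(s)\subseteq S$, and $W$ never grows past size $k$, so the test accepts with certainty.

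The crux of the soundness proof is a one-step \emph{progress lemma}: for every $W$ with $|W|\le k$, if $U$ is $\eps$-far from every $k$-junta, then a single Pauli measurement yields $s$ with $\supp(s)\not\subseteq W$ with probability at least $\eps^2/4$. I would prove this as follows. Set $\eta:=\Pr[\supp(s)\not\subseteq W]=1-\|U_W\|_2^2$, using $\hsip{U}{U_W}=\|U_W\|_2^2$. Normalising $\widetilde M:=U_W/\|U_W\|_2$ gives $\hsip{\widetilde M}{\widetilde M}=1$ and $\hsip{U}{\widetilde M}=\|U_W\|_2=\sqrt{1-\eta}$, so $D(U,\widetilde M)=\sqrt\eta$. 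Since $\widetilde M$ acts only on $W$, its polar decomposition $\widetilde M=AV$ has unitary factor $V=V_W\otimes I_{W^c}$, a genuine $k$-junta unitary. Taking $\Prop$ in Lemma~\ref{lem:closeunitary} to be the set of matrices acting non-trivially on at most $k$ qubits (one checks this set is closed under passing to the unitary part of the polar decomposition, so the hypothesis $V\in\Prop$ holds), the lemma gives $D(U,\text{$k$-juntas})\le 2D(U,\widetilde M)=2\sqrt\eta$. As $U$ is $\eps$-far from every $k$-junta, $\eps\le 2\sqrt\eta$, i.e.\ $\eta\ge\eps^2/4$.

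It remains to feed the progress lemma into the loop. While $|W|\le k$ the test has not rejected, and each iteration independently finds an index outside $W$ with probability at least $p:=\eps^2/4$; each such success strictly enlarges $W$, since the update adjoins all of $\supp(s)$ and in particular a new index. Hence $k+1$ successes force $|W|>k$ and rejection. By stochastic domination, the number of successes in $T$ iterations dominates a $\mathrm{Bin}(T,p)$ variable $X$, so $\Pr[\text{accept}]\le\Pr[X<k+1]$; choosing $T=\Theta(k/\eps^2)$ makes $\E X=\Theta(k)$ a constant factor larger than $k+1$, and a Chernoff bound yields $\Pr[X<k+1]\le1/3$.

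I expect the progress lemma to be the main obstacle: the honest difficulty is that the truncation $U_W$ is not unitary, so one cannot directly read off a nearby $k$-junta. The two facts that rescue the argument are the operator-norm bound $\|M_W\|_\infty\le1$ (so that the unitary part of the polar decomposition is close to $U_W$ itself) and Lemma~\ref{lem:closeunitary} (which certifies that proximity to a non-unitary $k$-junta matrix implies proximity to a \emph{unitary} $k$-junta); keeping track of the factor-of-two losses these introduce is what pins down the constant $\eps^2/4$.
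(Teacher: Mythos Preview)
Your proof is correct and follows essentially the same route as the paper's: the progress lemma via the normalized truncation $\widetilde M$ and Lemma~\ref{lem:closeunitary} is exactly the paper's argument (phrased directly rather than by contradiction), and the paper finishes with Markov's inequality on the expected number of successes rather than a Chernoff bound. One remark: the operator-norm bound $\|M_W\|_\infty\le 1$ you highlight is true but not actually needed---Lemma~\ref{lem:closeunitary} already yields closeness of the unitary polar factor from $\hsip{\widetilde M}{\widetilde M}=1$ alone.
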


The result originally shown by Wang~\cite{wang11} was a somewhat worse bound of $T=\Theta(k \log (k/\eps)/\eps^2)$, but the bound can be improved to $\Theta(k/\eps^2)$ via a straightforward generalization of the analysis of At\i c\i\ and Servedio~\cite{atici&servedio:testing}, as we now show (cf.\ Section~\ref{sec:juntas}). If we are given access to $U^{-1}$ as well, the bound can be improved further to $T=\Theta(k/\eps)$ via amplitude amplification.

\begin{proof}
As the Pauli matrices form a basis for the space of $n$-qubit operators, we can expand
\[ U = \sum_{s \in \{I,X,Y,Z\}^n} \widehat{U}_s \sigma_s, \]
where $\sigma_s$ is the $n$-qubit Pauli operator corresponding to the string $s$, and $\widehat{U}_s \in \C$.  Pauli matrices are orthonormal with respect to the normalized Hilbert-Schmidt inner product, implying that $\sum_{s \in \{I,X,Y,Z\}^n} |\widehat{U}_s|^2 = 1$. Assume that $U$ is $\eps$-far from any unitary operator $V$ that is a $k$-junta, and for $s \in \{I,X,Y,Z\}^n$, let $\supp(s) := \{i:s_i \neq I\}$. Then, for any subset $W \subseteq [n]$ of size at most~$k$,
\[ w_W := \sum_{s:\supp(s)\subseteq W} |\widehat{U}_s|^2 \le 1 - \eps^2/4. \]
%
To see this, assume the opposite and consider the operator $M_W = w_W^{-1/2} \sum_{s,\supp(s) \subseteq W} \widehat{U}_s \sigma_s$. Then $M_W$ is a $k$-junta, $\hsip{M_W}{M_W}=1$, and $D(U,M_W) = (1-w_W)^{1/2} < \eps/2$. Further, the unitary matrix~$V$ occurring in a polar decomposition of $M$ is also a $k$-junta. So, by Lemma \ref{lem:closeunitary}, $D(U,V) \le \eps$, contradicting that $U$ is $\eps$-far from any unitary $k$-junta.

For each measurement, the probability that a string $s$ is returned such that $\supp s \nsubseteq W$ is therefore at least $\eps^2/4$. Thus the expected number of measurements required to find $k+1$ such indices is at most $4(k+1)/\eps^2$. The theorem then follows from Markov's inequality.
\end{proof}


\subsubsection{Other properties of unitary matrices}
\label{sec:unitaryprops}

We finish this section by mentioning a few other properties of unitary matrices which have fairly straightforward testers. Say that a unitary matrix $U$ satisfies the {\bf Diagonality} property if $U_{ij}=0$ for $i\neq j$. Consider the following easy tester for this property: Apply $U$ to a uniformly random computational basis state $\ket{i}$, measure in the computational basis, and accept if and only if the outcome is $i$. Writing $U_{kk} = r_k e^{i \gamma_k}$ for $r_k\ge0$ and $0 \le \theta_k < 2\pi$, we have
\[ \max_{D \text{ diagonal}} |\hsip{U}{D}| = \frac{1}{d} \max_{D \text{ diagonal}} \left|\sum_{k=1}^d U_{kk}^* D_{kk}\right| = \frac{1}{d} \max_{\theta_k} \left|\sum_{k=1}^d r_k e^{i (\theta_k - \gamma_k)}\right| = \frac{1}{d} \left|\sum_{k=1}^d r_k\right| = \frac{1}{d} \sum_{k=1}^d |U_{kk}|. \]
On the other hand, the probability of accepting is precisely
\[ \frac{1}{d} \sum_{k=1}^d |U_{kk}|^2 \le \frac{1}{d} \max_k |U_{kk}| \sum_{k=1}^d |U_{kk}| \le \frac{1}{d} \sum_{k=1}^d |U_{kk}|. \]
Thus, if the test accepts with probability $1-\delta$, $U$ is distance at most $\sqrt{2\delta}$ from a diagonal unitary matrix $D$, implying that {\bf Diagonality} can be $\eps$-tested with $O(1/\eps^2)$ uses of~$U$.

This tester is simple, but can be applied to the following more general problem: Given a basis $B$ for $\C^d$, is every vector in $B$ an eigenvector of $U$? This is equivalent to asking whether $VUV^{-1}$ is diagonal, where $V$ is the change of basis matrix for $B$. This problem can be solved by applying the test for diagonality to $VUV^{-1}$, noting that the distance of $VUV^{-1}$ from the nearest diagonal matrix is the same as the distance of $U$ from the nearest matrix $\widetilde{U}$ such that every vector in $B$ is an eigenvector of $\widetilde{U}$. For example, this allows us to test $U$ for being a {\bf Circulant} matrix (i.e., a matrix of the form $U_{xy} = f(x-y)$ for some $f:\{0,\dots,d-1\} \rightarrow \C$, where subtraction is understood modulo $d$) as such matrices are characterized by being diagonalized by the quantum Fourier transform over~$\Z_d$.

Finally, Wang~\cite{wang11} has proven that membership of a unitary operator $U \in U(d)$ in the orthogonal group $O(d) := \{M \in M(d): MM^T = I\}$ can be $\eps$-tested with $O(1/\eps^2)$ uses of $U$. The tester is based on applying $U \otimes U$ to $\ket{\Phi}$, which produces the state $\ket{UU^T}$, then performing the measurement $\{\proj{\Phi},I-\proj{\Phi}\}$. (Recall that $\ket{\Phi} = \frac{1}{\sqrt{d}}\sum_{i=1}^d \ket{i}\ket{i}$.) If $U\in O(d)$, the test always accepts; Wang shows that if the test accepts with high probability, then $U$ is close to an orthogonal matrix.


\subsection{Properties of quantum channels}
\label{sec:channel}

Not all physical processes which occur in quantum mechanics are reversible. The mathematical framework in which the most general  physically realizable operations are studied is the formalism of \emph{quantum channels}. A quantum channel (or ``superoperator'') is a completely positive, trace-preserving linear map $\mathcal{E}: \mathcal{B}(\C^{d_{\text{in}}}) \rightarrow \mathcal{B}(\C^{d_{\text{out}}})$. Here ``completely positive'' means that the operator $\mathcal{E} \otimes \text{id}$ preserves positivity, where id is the identity map on some ancilla system of arbitrary dimension. A comprehensive introduction to the world of quantum channels is provided by lecture notes of Watrous~\cite{watrous08a}.

There has been less work on testing properties of quantum channels than the other types of properties considered above, although the problem of discriminating between quantum channels has been considered by a number of authors (e.g.~\cite{sacchi05,duan09,piani09}).


\subsubsection{A distance measure on channels}
\label{sec:channeldist}

In the context of property testing, the first task when considering quantum channels is to define a suitable measure of distance. One approach is to use the same idea as for unitary operators, and take the distance induced by the Choi-Jamio\l kowski isomorphism~\cite{choi75,jamiolkowski72}. In the case of channels, this isomorphism states that there is a bijection between the set of quantum channels $\mathcal{E}: \mathcal{B}(\C^{d_{\text{in}}}) \rightarrow \mathcal{B}(\C^{d_{\text{out}}})$ and the set of bipartite density matrices $\rho$ on a $(d_{\text{out}} \times d_{\text{in}})$-dimensional system such that applying the partial trace to the first subsystem of $\rho$ leaves the maximally mixed state $I/d_{\text{in}}$. The bijection can be explicitly given as
\[ \mathcal{E} \leftrightarrow \frac{1}{d_{\text{in}}} \sum_{i,j=1}^{d_{\text{in}}} \mathcal{E}(\ket{i}\bra{j}) \otimes \ket{i}\bra{j} =: \chi_\mathcal{E}. \]
Then one distance measure that can be put on quantum channels $\mathcal{E}$, $\mathcal{F}$ is
\[ D(\mathcal{E},\mathcal{F}) := D(\chi_\mathcal{E},\chi_\mathcal{F}). \]
As with the correspondence between unitary operators and pure states, this distance measure allows one to translate tests for properties of mixed states to properties of channels. For example, consider the property {\bf Unitarity}, where $\mathcal{E}: \mathcal{B}(\C^d) \rightarrow \mathcal{B}(\C^d)$ satisfies the property if and only if it is a unitary operator. $\mathcal{E}$ is unitary if and only if $\chi_\mathcal{E}$ is a pure state (and hence maximally entangled). In order to test this property, we can apply the test for {\bf Purity} of mixed states to $\chi_\mathcal{E}$. From the analysis of Section~\ref{sec:mixed}, we see that if the test accepts with probability $1-\delta$, there exists a pure state $\ket{\psi}$ such that $D(\chi_\mathcal{E},\proj{\psi}) = O(\delta)$. We still need to show that $\chi_\mathcal{E}$ is in fact close to a pure state which is maximally entangled. To do so, first write $\ket{\psi} = \sum_{i=1}^{d} \sqrt{\lambda_i} \ket{v_i}\ket{w_i}$ for the Schmidt decomposition of $\ket{\psi}$, 
and define the maximally entangled state $\ket{\eta} = \frac{1}{\sqrt{d}} \sum_{i=1}^d \ket{v_i}\ket{w_i}$.

Then we have the sequence of inequalities and equalities
\beas
D(\chi_\mathcal{E},\proj{\psi}) &\ge& D\left(I/d,\Tr_B(\proj{\psi}\right)) \ge 1 - F\left(I/d,\Tr_B(\proj{\psi})\right) = 1 - \frac{1}{\sqrt{d}} \sum_{i=1}^d \sqrt{\lambda_i}\\
&=& 1- |\ip{\psi}{\eta}| \ge D(\proj{\psi},\proj{\eta})^2/2.
\eeas
The first inequality holds because the trace norm does not increase under partial trace~\cite[Theorem~9.2]{nielsen&chuang:qc}, and the second is (\ref{eq:fvsd}). Therefore, if the test accepts with probability $1-\delta$,
\[ D(\chi_\mathcal{E},\proj{\eta}) \le D(\chi_\mathcal{E},\proj{\psi}) + D(\proj{\psi},\proj{\eta}) = O(\delta + \sqrt{2\delta}) = O(\sqrt{\delta}), \]
implying that {\bf Unitarity} of a quantum channel can be $\eps$-tested with $O(1/\eps^2)$ uses of the channel.


\subsubsection{Testing quantum measurements}

An important special case of quantum channels is the case of \emph{quantum measurements}. In full generality, a measurement on a $d$-dimensional quantum mechanical system is defined by a sequence of linear operators $M = (M_1,\dots,M_k)$ such that $\sum_{i=1}^k M_i^\dag M_i = I$. If $M$ is performed on the state~$\rho$, the probability of receiving outcome $i$ is $\Tr(M_i \rho M_i^\dag)$, and the resulting state of the system, given that outcome $i$ occurred, is
\[ \rho_i = \frac{M_i \rho M_i^\dag}{\Tr(M_i \rho M_i^\dag)}. \]
The quantum channel corresponding to performing the measurement $M$ and storing the outcome in a new register is the map $\mathcal{M}$ where
\[ \mathcal{M}(\rho) = \sum_{i=1}^k M_i \rho M_i^\dag \otimes \proj{i}, \]
so the Choi-Jamio\l kowski state is
\[ \chi_\mathcal{M} = \frac{1}{d} \sum_{i,j=1}^d \left(  \sum_{\ell=1}^k M_\ell \ket{i}\bra{j} M_\ell^\dag \otimes \proj{\ell}\right) \otimes \ket{i}\bra{j} \]
which, by reordering subsystems, is equivalent to
\[ \sum_{\ell=1}^k \left(\frac{1}{\sqrt{d}} \sum_{i=1}^d M_\ell \ket{i}\ket{i} \right) \left(\frac{1}{\sqrt{d}} \sum_{j=1}^d M_\ell^\dag \bra{j}\bra{j} \right) \otimes \proj{\ell} =: \sum_{\ell=1}^k \proj{\psi_M^{(\ell)}} \otimes \proj{\ell}. \]
For any two measurements $M$ and $N$ with at most $k$ outcomes, we can therefore compute the distance between the corresponding channels as
\[ D(\mathcal{M}, \mathcal{N}) = \sum_{\ell=1}^k D\left(\ket{\psi_M^{(\ell)}}, \ket{\psi_N^{(\ell)}}\right), \]
where if $M$ (resp.\ $N$) has $\ell < k$ outcomes, we set $M_i=0$ (resp.\ $N_i=0$) for $\ell < i \le k$. Observe that, using this measure of distance, we take into account the distance of the post-measurement states as well as the distance between the probability distributions corresponding to the measurement outcomes. One can explicitly calculate that, for any (potentially unnormalized) vectors $\ket{\psi}$, $\ket{\phi}$,
\[ D(\ket{\psi},\ket{\phi}) = \sqrt{\frac{1}{4}\left(\ip{\psi}{\psi} + \ip{\phi}{\phi}\right)^2 - |\ip{\psi}{\phi}|^2}, \]
which implies that
\[ D(\mathcal{M},\mathcal{N}) = \frac{1}{2} \sum_{i=1}^k \sqrt{ \left(\hsip{M_i}{M_i} + \hsip{N_i}{N_i} \right)^2 - 4|\hsip{M_i}{N_i}|^2}. \]
Recent work by Wang~\cite{wang12} has given efficient tests for a number of properties of quantum measurements, but with respect to a measure of distance which appears somewhat different to the measure $D(\cdot,\cdot)$. Given two measurements $M$ and $N$ with at most $k$ outcomes, Wang's distance measure is
\[ \Delta(M,N) := \sqrt{\frac{1}{2} \sum_{i=1}^k \hsip{M_i}{M_i} + \hsip{N_i}{N_i} -2|\hsip{M_i}{N_i}| }. \]
Wang demonstrates that $\Delta(\cdot,\cdot)$ has a number of desirable properties, including satisfying the triangle inequality and being an ``average-case'' measure of distance~\cite{wang12}. It turns out that $\Delta(\cdot,\cdot)$ is in fact closely related to $D(\cdot,\cdot)$, which we encapsulate as the following lemma.

\begin{lem}
\label{lem:dvsdelta}
Given two measurements $M$ and $N$, let $\mathcal{M}$ and $\mathcal{N}$ be the corresponding channels. Then
\[ D(\mathcal{M},\mathcal{N})/\sqrt{2} \le \Delta(M,N) \le D(\mathcal{M},\mathcal{N})^{1/2}. \]
\end{lem}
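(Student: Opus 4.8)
The plan is to strip both distance measures down to elementary expressions in three nonnegative scalars per outcome and then compare them via two applications of Cauchy--Schwarz. For each $i$, write $a_i=\hsip{M_i}{M_i}$, $b_i=\hsip{N_i}{N_i}$, and $c_i=|\hsip{M_i}{N_i}|$. Cauchy--Schwarz for the Hilbert--Schmidt inner product gives $c_i\le\sqrt{a_ib_i}\le(a_i+b_i)/2$, so $s_i:=a_i+b_i-2c_i\ge 0$; set $t_i:=a_i+b_i+2c_i$, so that $t_i\ge s_i\ge 0$. Since $(a_i+b_i)^2-4c_i^2=s_it_i$, the expression for $D(\mathcal{M},\mathcal{N})$ derived above becomes $D(\mathcal{M},\mathcal{N})=\tfrac12\sum_i\sqrt{s_it_i}$, while $\Delta(M,N)^2=\tfrac12\sum_i s_i$. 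With these substitutions the two inequalities to prove reduce (after squaring) to $\Delta^2\le D$ (the right-hand bound) and $D^2\le 2\Delta^2$ (the left-hand bound).

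First I would dispose of the upper bound $\Delta(M,N)\le D(\mathcal{M},\mathcal{N})^{1/2}$. Because $c_i\ge 0$ we have $t_i\ge s_i$, hence $\sqrt{s_it_i}\ge s_i$ termwise; summing gives $2D\ge 2\Delta^2$, i.e. $\Delta^2\le D$, which is exactly the claim. This direction uses nothing beyond $t_i\ge s_i$.

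For the lower bound $D(\mathcal{M},\mathcal{N})/\sqrt2\le\Delta(M,N)$ I would bring in the trace-preserving constraints, the only global information available. Since $\sum_i M_i^\dag M_i=\sum_i N_i^\dag N_i=I$, we get $\sum_i a_i=\tfrac1d\Tr(I)=1$ and likewise $\sum_i b_i=1$; combined with $\sum_i c_i\le\sum_i\sqrt{a_ib_i}\le\sqrt{(\sum_i a_i)(\sum_i b_i)}=1$ (Cauchy--Schwarz over $i$), this yields $\sum_i t_i=2+2\sum_i c_i\le 4$. A second application of Cauchy--Schwarz, to the sum $\sum_i\sqrt{s_i}\sqrt{t_i}$, gives $\bigl(\sum_i\sqrt{s_it_i}\bigr)^2\le\bigl(\sum_i s_i\bigr)\bigl(\sum_i t_i\bigr)\le 4\sum_i s_i$, so $4D^2\le 4\sum_i s_i=8\Delta^2$, i.e. $D^2\le 2\Delta^2$. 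This completes both bounds.

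The argument presents no real obstacle; the only points needing care are the bookkeeping that recasts $D$ and $\Delta$ in terms of $s_i$ and $t_i$, and the observation that the normalization $\sum_i t_i\le 4$ is precisely what produces the factor $\sqrt2$ in the lower bound. The one subtlety worth flagging is that the estimate $\sum_i c_i\le 1$ relies on completeness of \emph{both} measurements through $\sum_i a_i=\sum_i b_i=1$, not merely on the per-outcome inequality $c_i\le\sqrt{a_ib_i}$.
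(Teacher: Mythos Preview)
Your proof is correct and essentially identical to the paper's. Both arguments reduce to the same termwise inequality $s_i\le\sqrt{s_it_i}$ for the upper bound and to Cauchy--Schwarz on $\sum_i\sqrt{s_i}\sqrt{t_i}$ together with $\sum_i t_i\le 4$ for the lower bound; the only cosmetic difference is that the paper bounds $\sum_i t_i$ via the per-outcome inequality $2c_i\le a_i+b_i$ and $\sum_i(a_i+b_i)=2$, whereas you bound $\sum_i c_i\le 1$ by a second Cauchy--Schwarz across~$i$.
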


\begin{proof}
To prove the upper bound part of the lemma, it suffices to show that, for each $i$,
\[ (\hsip{M_i}{M_i} + \hsip{N_i}{N_i} -2|\hsip{M_i}{N_i}|)^2 \stackrel{?}{\le} \left(\hsip{M_i}{M_i} + \hsip{N_i}{N_i} \right)^2 - 4|\hsip{M_i}{N_i}|^2. \]
Setting $x_i = \hsip{M_i}{M_i} + \hsip{N_i}{N_i}$, $y_i=2|\hsip{M_i}{N_i}|$ and rearranging terms, we get the claimed inequality
\[ (x_i-y_i)^2 \stackrel{?}{\le} (x_i-y_i)(x_i+y_i), \]
which holds because $y_i \le x_i$ by Cauchy-Schwarz or the inequality of arithmetic and geometric means. For the lower bound, we need to show
\[ \frac{1}{2\sqrt{2}} \sum_{i=1}^k (x_i-y_i)^{1/2} (x_i+y_i)^{1/2} \stackrel{?}{\le} \sqrt{\frac{1}{2} \sum_{i=1}^k (x_i - y_i)}. \]
Indeed, by Cauchy-Schwarz,
\beas
\frac{1}{2\sqrt{2}} \sum_{i=1}^k (x_i-y_i)^{1/2} (x_i+y_i)^{1/2} &\le& \frac{1}{2\sqrt{2}} \sqrt{\sum_{i=1}^k (x_i-y_i)}\sqrt{\sum_{i=1}^k x_i+y_i}\\
&\le& \sqrt{\frac{1}{2}\sum_{i=1}^k (x_i-y_i)}\sqrt{\frac{1}{2}\sum_{i=1}^k x_i} \\
&=& \sqrt{\frac{1}{2} \sum_{i=1}^k (x_i - y_i)}
\eeas
as required, using $\sum_{i=1}^k \hsip{M_i}{M_i} = \sum_{i=1}^k \hsip{N_i}{N_i} = 1$.
\end{proof}

Lemma \ref{lem:dvsdelta} implies that Wang's results with respect to the distance measure $\Delta(\cdot,\cdot)$ can be translated into results with respect to $D(\cdot,\cdot)$. In particular, Wang~\cite{wang12} gives efficient testers for the following properties of quantum measurements:

\begin{itemize}
\item The property of being a {\bf Pauli} measurement (called ``stabilizer measurement'' in~\cite{wang12}): $M$ is a Pauli measurement if it is a two-outcome projective measurement onto the $\pm1$ eigenspaces of an $n$-qubit Pauli operator $\sigma_s$, for some $s \in \{I,X,Y,Z\}^n$. Wang shows that this property can be $\eps$-tested with $O(1/\eps^4)$ measurements.

\item The property of being an {\bf $\ell$-local} measurement of $n$ qubits, i.e., acting non-trivially on at most $\ell$ qubits. Wang gives an $\eps$-tester for this property which uses $O(\ell \log (\ell/\eps)/\eps^2)$ measurements.

\item The property of being a {\bf Permutation invariant} measurement of $n$ $d$-dimensional systems, i.e., a measurement which is unchanged when the $n$ systems are permuted arbitrarily. This property can be $\eps$-tested with $O(1/\eps^2)$ measurements.

\item Being contained within any finite set of measurements $\mathcal{S} = \{M_i\}$ with $k$ outcomes on a $d$-dimensional system. If $\Delta(M_i,M_j) \ge \gamma$ for all $i \neq j$, and we set $\delta = \min \{\gamma,\eps\}$, membership in $\mathcal{S}$ can be $\eps$-tested with $O(k^2 (\log k)/\delta^8 + (\log |S|)/\delta^2 )$ measurements.

\item {\bf Equality} of measurements, which can be $\eps$-tested with $O(k^5 (\log k) / \eps^{12})$ measurements. This is based on a more general algorithm for estimating the distance between measurements.
\end{itemize}

All of the above testers are based on constructing multiple copies of the Choi-Jamio\l kowski state corresponding to the measurement to be tested, and performing some measurements on the states. As remarked in~\cite{wang12}, it is an interesting question whether efficient testers can be designed in a setting where one is not allowed access to entanglement.

\begin{question}
Can efficient testers for the properties of unitary operators and quantum channels discussed above be designed which do not require entanglement with an ancilla system?
\end{question}

It is possible to use quantum process tomography to completely characterize any quantum channel without the use of entanglement~\cite[\S8.4.2]{nielsen&chuang:qc}, so the question is only whether the above properties can still be tested \emph{efficiently} in this setting.


\section{Quantum properties and computational complexity}
\label{sec:compcomp}

Classically, the field of property testing has had close connections to computational complexity. In this section, we briefly discuss three ways in which quantum property testing can be related to quantum computational complexity. First, we discuss how, if we change the setting in which we work, testing certain natural properties can be proven computationally hard. Second, we mention how quantum property testers can be used to prove complexity class inclusions. Finally, we consider potential connections between quantum property testing and a proposed quantum PCP conjecture.


\subsection{Computational hardness of testing quantum properties}

A different perspective from which to study the question of testing properties of quantum systems is to consider problems where, instead of being given access to a quantum object, we are given a concise \emph{classical} description of that object (for example, a quantum circuit on $n$ qubits with $\poly(n)$ gates). Our aim is to efficiently determine whether the corresponding quantum object has some property, or is far from having that property, in terms of some distance measure. The distance measure used may be quite different to the distances we discuss in the rest of the survey, leading to qualitatively different results. This type of problem turns out to be naturally addressed via the framework of computational complexity.

In particular, many problems related to testing properties of quantum circuits turn out to be $\QMA$-complete.\footnote{$\QMA$ is the quantum analog of $\mathsf{NP}$~\cite{kitaev02}; see~\cite{bookatz12} for a recent survey.} These hardness results provide an interesting counterpoint to the largely positive results obtained in the ``average-case'' scenarios considered by property testing. A prototypical example of this phenomenon is ``non-identity-check,'' which was proven to be $\QMA$-complete by Janzing et al.~\cite{janzing05}. Here the input is a quantum circuit implementing a unitary $U$, and two numbers $a$, $b$ such that $b-a \ge 1/\poly(n)$, and the problem is to distinguish between the two cases that $\min_{\theta \in \R} \|U - e^{i\theta} I\| \le a$ and $\min_{\theta \in \R} \|U - e^{i\theta} I\| \ge b$. Observe that, if we replace the operator norm with the normalized 2-norm in this definition, this problem is in $\mathsf{BQP}$ by the efficient tester for the {\bf Equality to $V$} property discussed in Section~\ref{sec:statesunitaries}.

If one generalizes to quantum circuits acting on mixed states, where each elementary gate is a quantum channel, some natural problems even become $\mathsf{PSPACE}$-complete. In particular, Rosgen and Watrous~\cite{rosgen05} showed that $\mathsf{PSPACE}$-completeness holds for the problem of testing whether two mixed-state quantum circuits are distinguishable, and it remains hard when the quantum circuits are restricted to be logarithmic depth~\cite{rosgen08}, degradable or anti-degradable~\cite{rosgen10a}. In this case, distinguishability is measured in the so-called diamond norm for quantum channels~\cite{kitaev02}; the diamond norm of an linear operator $\Phi: \mathcal{B}(\C^{d_{\text{in}}}) \rightarrow \mathcal{B}(\C^{d_{\text{out}}})$ is defined to be
\begin{equation}\label{def:diamondnorm}
\|\Phi\|_\diamond := \max_{X,\|X\|_1=1} \| (\Phi \otimes \text{id})(X) \|_1, 
\end{equation}
where id is the identity map acting on an ancilla system, which may be taken to be of dimension at most $d_{\text{in}}$. Then the Quantum Circuit Distinguishability problem is to determine, given two mixed-state quantum circuits $Q_0$, $Q_1$ and constants $a<b$, whether $\|Q_0-Q_1\|_\diamond \le a$ or $\|Q_0-Q_1\|_\diamond \ge b$. 
As with the trace distance between quantum states, $\|Q_0-Q_1\|_\diamond$ can be seen as measuring the distinguishability of $Q_0$ and $Q_1$ in a ``best-case'' scenario. This contrasts with the ``average-case'' distance measure $D(Q_0,Q_1)$ introduced in Section~\ref{sec:channeldist}.

These distinguishability problems were originally shown to be hard for the complexity class $\mathsf{QIP}$ of languages decided by quantum interactive proof systems, but this class was later proven to equal $\mathsf{PSPACE}$~\cite{jain11}. The proof technique of~\cite{rosgen05} starts by using a result of Kitaev and Watrous~\cite{kitaev00}, which states that all quantum interactive proofs can be parallelized to three rounds. A mathematical reformulation of this result is that the Close Images problem is $\mathsf{QIP}$-hard. This problem is defined as follows: given two quantum circuits $Q_0$, $Q_1$ and constants $a<b$, distinguish between the cases that there is an input $\rho$ such that $F(Q_0(\rho),Q_1(\rho))\ge b$, or that for all inputs $\rho$, $F(Q_0(\rho),Q_1(\rho))\le a$. Hardness of Quantum Circuit Distinguishability is then shown by a reduction from Close Images~\cite{rosgen05}.


\subsection{From quantum property testers to complexity class inclusions}

By contrast to the results in the previous section, work by Hayden et al.~\cite{hayden13} demonstrates that quantum property testers can be used to prove positive results (i.e., upper bounds) regarding the complexity of testing properties of quantum circuits. The problem considered by these authors is a variant of the separability-testing problem (cf.\ Sections \ref{sec:productness} and \ref{sec:mixed}). In this variant the input is the description of a mixed-state quantum circuit $Q$ on $n$ qubits, and one considers the output of the circuit as a bipartite state by dividing these qubits into two disjoint sets. The problem is to distinguish between the two cases that: (a) the output of $Q$, when applied to the input $\ket{0^n}$, is close to separable; (b) the output is far from separable. Hayden et al.~\cite{hayden13} show that this problem can be solved by a quantum interactive proof system with two messages (i.e., a message from verifier to prover, followed by a reply from prover to verifier), and hence sits in the complexity class $\mathsf{QIP}(2)$. The protocol is based on the verifier applying the permutation test discussed in Section~\ref{sec:equality}. This result is somewhat subtle in that ``close'' and ``far'' are defined asymmetrically (the former in terms of the trace distance, the latter in terms of the so-called ``1-way LOCC'' distance); see~\cite{hayden13} for details.

More recently Gutoski et al.~\cite{milner13} generalized this work: for almost every complexity class defined by quantum interactive proofs, they give a version of the separability testing problem which is complete for that class. This shows that property testing problems can be used to characterize many quantum complexity classes. For example, they use the product test of~\cite{harrow13} (see Section~\ref{sec:productness}) to show that testing whether the output of a pure-state quantum circuit is a product state is in $\mathsf{BQP}$.


\subsection{The quantum PCP conjecture}

A classic and important problem in quantum computational complexity is the \emph{local Hamiltonian problem}. Here we are given as input a Hamiltonian $H$ on $n$ qubits, described by a set of Hermitian operators $H_i$ such that $H = \sum_{i=1}^m H_i$, with each operator $H_i$ acting non-trivially on at most $k=O(1)$ qubits and satisfying $\|H_i\|=O(1)$. We are also given two real numbers $a$ and $b$ such that $b-a \ge 1/\poly(n)$. We are promised that the lowest eigenvalue of $H$ is either smaller than $a$, or larger than $b$; our task is to determine which of these is the case.

This problem was proven $\QMA$-complete for $k=5$ by Kitaev~\cite{kitaev02}, which was later improved to $k=2$ by Kempe et al.~\cite{kempe06} (the case where $k=1$ is easily seen to be in $\mathsf{P}$). One way in which this hardness result could potentially be improved is in the scaling of the gap between $b$ and $a$. Indeed, it could be the case that the local Hamiltonian problem remains $\QMA$-hard if we have the promise $b-a \ge c m$ for some constant $0<c<1$. This is (one formulation of) the quantum PCP conjecture; see a recent survey of Aharonov et al.~\cite{aharonov13} for much more on this conjecture and its implications. Classically, one version of the famous PCP Theorem states that there exist constraint satisfaction problems for which it is hard to distinguish between there existing an assignment to the variables that satisfies all of the constraints, and there being no assignment that satisfies more than a constant fraction of them; the quantum PCP conjecture would be a direct quantization of this result. One way of looking at this is as the conjecture that the local Hamiltonian problem remains hard in a ``property-testing-type'' scenario where there is a large gap between ``yes'' and ``no'' instances.

\begin{question}
Is there a quantum PCP theorem?
\end{question}

Classically, the proof of the PCP Theorem relied on efficient property testers, so it seems plausible that property testing could be useful in proving a quantum generalization. Indeed, the analysis of a \emph{classical} property tester in a quantum setting has recently been central to establishing a quantum complexity-theoretic result. $\mathsf{MIP}$ is the class of languages decided by multiple-prover interactive proof systems, which was shown to be equal to $\mathsf{NEXP}$ by Babai et al.~\cite{babai91}. Recently Ito and Vidick~\cite{ito12} have shown that the quantum generalization $\mathsf{MIP}^*$, where the provers are allowed to share entanglement, is at least as powerful: $\mathsf{MIP} \subseteq \mathsf{MIP}^*$. Their proof is based on proving soundness of the classical multilinearity test of Babai et al.~\cite{babai91} in the presence of entanglement. Another application of quantum property testing to quantum complexity is the use of the analysis of an efficient quantum property tester to prove the complexity class equality $\QMA(k)=\QMA(2)$~\cite{harrow13}, as discussed in Section~\ref{sec:productness}.

Yet another connection is explored in recent work of Aharonov and Eldar~\cite{aharonov13a} on a quantum generalization of \emph{locally testable codes} (LTCs). Classically, LTCs are codes for which the property of being a codeword can be tested efficiently by means of a few local checks; such codes played a crucial role in the original proof of the PCP Theorem~\cite{almss:pcp}. The ``qLTCs'' studied in~\cite{aharonov13a} are the kernel (zero eigenspace) of $k$-local Hamiltonians $H = \sum_i H_i$, such that containment of a state in the eigenspace can be tested with good accuracy by performing measurements corresponding to only a few of the individual $k$-local terms~$H_i$.  Aharonov and Eldar~\cite{aharonov13a} prove some surprising upper bounds on the soundness for qLTCs that are \emph{stabilizer} codes, showing that they do not exist in certain regimes of parameters where classical LTCs do exist.


\section{Conclusion}

The goal of property testing is to design efficient algorithms (``testers'') to decide whether a given object has a property or is somehow ``far'' from that property, and to determine in which cases such algorithms can exist. When the objects that need to be tested are very large, exact algorithms that are also required to work for objects that ``almost'' have the property become infeasible, and property testing is often the best we can hope for.
Classical property testing is by now a very well-developed area, but \emph{quantum} property testing is just starting out.  In this paper we surveyed what is known about this:
\begin{enumerate}
\item Quantum testers for classical properties (Section~\ref{sec:qtestclprop}).
\item Classical testers for quantum properties (Section~\ref{sec:cltestqprop}).
\item Quantum testers for quantum properties (Sections~\ref{sec:states} and~\ref{sec:dynamics}).
\end{enumerate} 
We hope the overview given here, as well as the open questions mentioned along the way, will give rise to much more research in this area. Besides the properties mentioned here, there are many other properties which have been of great interest in the classical property testing literature, and whose quantum complexity is unknown. Examples include monotonicity of Boolean functions, membership of error-correcting codes, and almost all properties of graphs. In the case of quantum properties, natural targets include testing whether a unitary operator is implemented by a small circuit, and whether a Hamiltonian is $k$-local (which would be yet another variant of junta testing).

Another very broad open question not discussed previously is to what extent one can \emph{characterize} the properties (classical or quantum) that have efficient quantum testers. This may seem a hopelessly ambitious goal; nevertheless, in the case of classical algorithms it has already been achieved in some important cases, such as graph properties~\cite{alon09} and symmetric properties of probability distributions~\cite{valiant11}. Such a characterization could have importance far beyond property testing, by shedding light on the structure of problems that have efficient quantum algorithms.


\subsection*{Acknowledgements}

We thank Scott Aaronson, Aleksandrs Belovs, Robin Blume-Kohout, Jop Bri\"et, Sourav Chakraborty, Wim van Dam, Aram Harrow, Fr\'ed\'eric Magniez, Marcelo Marchiolli, Miguel Navascu\'es, Falk Unger, Lev Vaidman, Andreas Winter and Tzyh Haur Yang for helpful comments, answers to questions and pointers to literature.  We also thank the anonymous ToC referees for their exceptionally constructive feedback.




\begin{multicols}{2}
\small

\bibliographystyle{plain}
\bibliography{proptest}

\end{multicols}

\end{document}